\crefname{hypothesis}{Hypothesis}{Hypotheses}
\title{Signal-Comparison-Based Distributed Estimation Under Decaying Average Data Rate Communications\thanks{Submitted to the editors DATE.
\funding{The work was supported by National Key R\&D Program of China under Grant 2018YFA0703800, National Natural Science Foundation of China under Grants T2293772 and 62025306, and CAS Project for Young Scientists in Basic Research under Grant YSBR-008.}}}
\author{Jieming Ke\thanks{Key Laboratory of Systems and Control, Institute of Systems Science, Academy of Mathematics and Systems Science, Chinese Academy of Sciences, Beijing 100190, People's Republic of China, and School of Mathematical Sciences, University of Chinese Academy of Sciences, Beijing 100049, People's Republic of China (kejieming@amss.ac.cn)} 
\and Xiaodong Lu\thanks{School of Automation and Electrical Engineering University of Science and Technology Beijing, Beijing 100083, People's Republic of China, and Key Laboratory of Knowledge Automation for Industrial Processes, Ministry of Education, Beijing 100083, People's Republic of China (lxd1025744209@163.com)}
\and Yanlong Zhao\thanks{Corresponding author. Key Laboratory of Systems and Control, Institute of Systems Science, Academy of Mathematics and Systems Science, Chinese Academy of Sciences, Beijing 100190, People's Republic of China, and School of Mathematical Sciences, University of Chinese Academy of Sciences, Beijing 100149, People's Republic of China (ylzhao@amss.ac.cn)}
\and Ji-Feng Zhang\thanks{School of Automation and Electrical Engineering, Zhongyuan University of Technology, Zhengzhou 450007,
Henan Province, People's Republic of China, and Key Laboratory of Systems and Control, Institute of Systems Science, Academy of Mathematics and	Systems Science, Chinese Academy of Sciences, Beijing 100190, People's Republic of China. (jif@iss.ac.cn)}
}
\DeclareMathOperator{\diag}{diag}
\crefname{section}{Section}{Sections}
\crefname{subsection}{Subsection}{Subsections}
\crefname{assumption}{Assumption}{Assumptions}
\crefname{appen}{Appendix}{Appendixes}
\def\tr{^\top}
\newcommand{\absl}[1]{\lvert #1 \rvert}
\newcommand{\Absl}[1]{\lVert #1 \rVert}
\newcommand{\abs}[1]{\left| #1\right|}
\newcommand{\Abs}[1]{\left\| #1\right\|}
\def\mE{\mathbb{E}}
\def\mP{\mathbb{P}}
\def\as{\text{a.s.}}
\def\col{\operatorname{col}}
\def\Bone{\mathbf{1}}
\def\Neighbour{\mathcal{N}}
\def\mL{\mathcal{L}}
\newcommand{\MODIFY}[1]{#1}
\tikzstyle{ellipse}=[draw, rectangle, minimum width=2.8em, rounded corners=6pt,line width=0.5pt]
\tikzstyle{pxsbx}=[trapezium, trapezium left angle=75, trapezium right angle=105, minimum width=3em, text centered, draw = black, fill=white,line width=0.5pt]
\tikzstyle{lingxing}=[draw,diamond,shape aspect=3,inner sep = 0.4pt,thick,font=\itshape,line width=0.5pt]
\tikzset{
	arrow1/.style = {
		draw = black, thick, -{Latex[length = 4mm, width = 1.5mm]},
	}
}
\tikzset{
	nonterminal/.style = {
		rectangle,
		align = center,
		minimum size = 6mm,
		very thick,
		draw = red!50!black!50,
		top color = white,
		bottom color = red!50!black!20,
	}
}
\tikzset{
	terminal/.style = {
		rectangle,
		align = center,
		minimum size = 6mm,
		rounded corners = 3mm,
		very thick,
		draw = black!50,
		top color = white,
		bottom color = black!20,
	}
}
\begin{document}
	\linenumbers
	\nolinenumbers

\maketitle

\begin{abstract}
	The paper investigates the distributed estimation problem under low data rate communications. 
	Based on the signal-comparison (SC) consensus protocol under binary-valued communications, a new consensus+innovations type distributed estimation algorithm is proposed. Firstly, the high-dimensional estimates are compressed into binary-valued messages by using a periodic compressive strategy, dithering noises and a sign function. Next, based on the dithering noises and expanding triggering  thresholds, a new stochastic event-triggered mechanism is proposed to reduce the communication frequency. Then, a modified SC consensus protocol is applied to fuse the neighborhood information. Finally, a stochastic approximation estimation algorithm is used to process innovations. The proposed SC-based algorithm has the advantages of high effectiveness and low communication cost. For the effectiveness, the estimates of the SC-based algorithm converge to the true value in the almost sure and mean square sense, and a polynomial almost sure convergence rate is also obtained. For the communication cost, the local and global average data rates decay to zero at a polynomial rate. 
	The trade-off between the convergence rate and the communication cost is established through event-triggered coefficients. A better convergence rate can be achieved by decreasing event-triggered coefficients, while lower communication cost can be achieved by increasing event-triggered coefficients.  
	A simulation example is given to demonstrate the theoretical results. 
\end{abstract}

\begin{keywords}
	distributed estimation, data rate, event-triggered mechanism, stochastic approximation 
\end{keywords}

\begin{MSCcodes}
	68W15, 93B30, 68P30, 62L20
\end{MSCcodes}

\section{Introduction}

Distributed estimation is of great practical significance in many practical fields, such as electric power grid \cite{kar2014distributed} and cognitive radio systems \cite{sayed2014diffusion}, and therefore has been being an attractive topic \cite{gan2022distributed,kar2013distributed,sahu2018CIRFE,wang2023differentially}.
In the distributed estimation problem, the subsystem of each sensor is not necessarily observable. Therefore, communications between sensors are required to fuse the observations of the distributed sensors, \MODIFY{which brings communication cost problems.} Firstly, 
\MODIFY{
due to the bandwidth limitations in the real digital networks, high data rate communications may cause network congestion.} 
Secondly, the transmission energy cost is positively correlated with the bit numbers of communication messages \cite{li2009distributed}. Therefore, it is important to propose a distributed estimation algorithm under low data rate communications.

There have been many works in quantization methods to reduce the communication cost for distributed algorithms \cite{aysal2008distributed,carli2009quantized,carli2008communication,kar2012distributed,zhu2018mitigating,zhu2015distributed}, many of which are based on infinity level quantizers. For example, Aysal et al. adopt infinite level probabilistic quantizers to construct a quantized consensus algorithm \cite{aysal2008distributed}. Furthermore, Carli et al. \cite{carli2009quantized,carli2008communication} propose an important technique based on infinite level logarithm quantizers to give quantized coordination algorithms and a quantized average consensus algorithm. Kar and Moura \cite{kar2012distributed} appear to be the first to consider distributed estimation under quantized communications. They improve the probabilistic quantizer-based consensus algorithm in \cite{aysal2008distributed} by using the stochastic approximation method. Based on the technique, the estimates of corresponding consensus+innovations distributed estimation algorithm converge to the true value. Besides, when there is only one observation for each sensor, Zhu et al.  \cite{zhu2018mitigating,zhu2015distributed} propose running average distributed estimation algorithms based on probabilistic quantizers. 

Due to the data rate limitations in real digital networks, distributed algorithms under finite data rate communications are developed. This is a challenging task because information contained in the interactive messages is limited. To solve the difficulty, Li et al. \cite{LiT2011distributed}, Liu et al. \cite{liu2011distributed}, and Meng et al. \cite{meng2017finite} design zooming-in methods for the consensus problems under finite data rate communications. The methods are effective to deal with the quantization error. When communication noises exist, Zhao et al. \cite{zhao2018consensus} and Wang et al. \cite{wang2019consensus} propose an empirical measurement-based consensus algorithm and a recursive projection consensus algorithm under binary-valued communications, respectively. 

Distributed estimation under finite data rate communications has also been extensively investigated \cite{carpentiero2021distributed,lao2022quantized,Michelusi2022,Nassif2023,sayin2014compressive, xie2013LMS}. Xie and Li \cite{xie2013LMS} design finite level dynamical quantization method for distributed least mean square estimation under finite data rate communications. Sayin and Kozat \cite{sayin2014compressive} propose a single bit diffusion algorithm, which requires least data rate among existing works. Assuming that the Euclidean norm of messages can be transmitted with high precision, Carpentiero et al. \cite{carpentiero2021distributed} and Lao et al. \cite{lao2022quantized} apply the quantizer in \cite{alistarh2017QSGD} and propose adapt-compress-then-combine diffusion algorithm and quantized adapt-then-combine diffusion algorithm, respectively. \MODIFY{The estimates of these algorithms are all mean square bounded, but the almost sure and mean square convergence is not achieved}. 
\MODIFY{
Additionally, the offline distributed estimation problem under finite data rate can be modelled as a distributed learning problem, which is solved by Michelusi et al. \cite{Michelusi2022} and Nassif et al. \cite{Nassif2023}.}
However, under finite data rate communications, how to design an \MODIFY{online} distributed estimation algorithm with estimation errors converging to zero is still an open problem. 


Despite the remarkable progress in distributed estimation under finite data rate communications \cite{carpentiero2021distributed,lao2022quantized,sayin2014compressive, xie2013LMS}, we propose a novel distributed estimation with better effectiveness and lower communication cost. For the effectiveness, the estimates of the algorithm converge to the true value. For the communication cost, the average data rates decay to zero. 

Both of the two issues are challenging. 
For the effectiveness, the main difficulty lies in the selection of consensus protocols to fuse the neighborhood information. Note that consensus protocol is an important part for both the consensus+innovation type distributed estimation algorithms and the diffusion type distributed estimation algorithms. A proper selection of consensus protocols can solve many communication problems in distributed estimation, including the communication cost problem. Under finite data rate communications, there have been many consensus protocols \cite{LiT2011distributed,liu2011distributed,meng2017finite,wang2019consensus,zhao2018consensus}, but many of them have limitations when applied to distributed estimation. For example, the consensus protocol in \cite{zhao2018consensus} requires the states to keep constant in most of the times, which results in a relatively poor effectiveness. Besides, the consensus protocols in \cite{LiT2011distributed,liu2011distributed,meng2017finite,wang2019consensus} are proved to achieve consensus only when all the states are located in known compact sets. This limits their application in the distributed estimation problem due to the randomness of measurements and the lack of \emph{a priori} information on the location of unknown parameter. 

The limitations can be overcome by using the signal-comparison (SC) consensus protocol that we \cite{Ke2023Signal} propose recently. Firstly, the convergence analysis of the SC protocol does not require that all the states are located in known compact sets. Secondly, the SC protocol updates the states at every moment, and therefore achieves a better convergence rate compared with \cite{zhao2018consensus}. Hence, the SC protocol is suitable to be applied in the distributed estimation. 



For the communication cost, if information is transmitted at every moment, the minimum data rate is 1. Therefore, the communication frequency should be reduced to achieve a average data rate that decay to zero.  
The event-triggered strategy is an important method to reduce communication frequency,
\MODIFY{
	and is widely applied in consensus control \cite{YuanS2024JSSC,wu2022dynamic}, distributed Nash equilibrium \cite{Yang2024SCIS} and impulsive synchronization \cite{wang2023event}. For the distributed estimation problem, 
	He et al. \cite{he2022event} propose an event-triggered algorithm where the communication rate can decay to zero at a polynomial rate.} 
However, the mechanism requires accurate transmission of local estimates, making it difficult to extend to the quantized communication case. 
Therefore, it is important to propose a new event-triggered mechanism for the distributed estimation under quantized communications. 

\MODIFY{For the distributed estimation problem under quantized communications}, we propose a new stochastic event-triggered mechanism, which consists of dithering noises and expanding triggering thresholds. The mechanism is suitable for the quantized communication case, because it regards whether the information is transmitted as part of quantized information. 

Based on the SC consensus protocol and the stochastic event-triggered mechanism, we construct the SC-based distributed estimation algorithm. 
The main contributions are summarized as follows.

\begin{enumerate}
	\item For the effectiveness, the estimates of the SC-based algorithm converge to the true value in the almost sure and mean square sense. A polynomial almost sure convergence rate is obtained for the SC-based algorithm. 
	Under finite data rate communications, the SC-based distributed estimation algorithm is the first to achieve convergence. Moreover, it is the first to characterize the almost sure properties of a distributed estimation algorithm under finite data rate communications. 
%
	\item For the communication cost, the average data rates of the SC-based algorithm decay to zero almost surely. The upper bounds of local average data rates are estimated, and both the local and global average data rates converge to zero at a polynomial rate. 
	The SC-based algorithm requires the least average data rates among existing works for distributed estimation \cite{kar2012distributed,Michelusi2022,Nassif2023,sayin2014compressive,xie2013LMS}. 

	\item The trade-off between the convergence rate and the communication cost is established via event-triggered coefficients. 
	A better convergence rate can be achieved by decreasing event-triggered coefficients, while a lower communication cost can be achieved by increasing event-triggered coefficients.  
	The operator of each sensor can decide its own preference on the trade-off by selecting the event-triggered coefficients of adjacent communication channels. 
%
\end{enumerate}

The remainder of the paper is organized as follows. \cref{sec:prob_form} formulates the problem. \cref{sec:algo} introduces the SC consensus protocol and proposes the SC-based distributed estimation algorithm. \cref{sec:conv} analyzes the convergence properties of the algorithm. \cref{sec:comm_cost} calculates the average data rates of the SC-based algorithm to measure the communication cost. 
\cref{sec:trade-off} discusses the trade-off between the convergence rate and the communication cost for the algorithm. 
\cref{sec:simu} gives a simulation example to demonstrate the theoretical results. \cref{sec:concl} concludes the paper. 

\subsection*{Notation}

In the rest of the paper, $ \mathbb{N} $, $\mathbb{R}$, $\mathbb{R}^{n}$, and $ \mathbb{R}^{n\times m} $ are the sets of natural numbers, real numbers, $n$-dimensional real vectors, and $ n \times m $-dimensional real matrices, respectively. $\|x\|$ is the Euclidean norm for vector $ x $, and $\| A \|$ is the induced matrix norm for matrix $ A $. Besides, $ \| x \|_1 $ is the $ L_1 $ norm. 
$ I_n $ is an $n\times n$ identity matrix.  $ \Bone_n $ is the $ n $-dimensional vector whose elements are all ones. $ \diag\{\cdot\} $ denotes the block matrix formed in a diagonal manner
of the corresponding numbers or matrices. $ \col\{\cdot\} $ denotes the column vector stacked by the corresponding numbers or vectors. $ \otimes $ denotes the Kronecker product. \MODIFY{Given two series $ \{a_k\} $ and $ \{b_k\} $, $ a_k=O(b_k) $ means that $ a_k=c_k b_k$ for a bounded $c_k$, and $a_k=o(b_k)$ means that $a_k=c_k b_k$ for a $c_k$ that converges to $ 0 $.
%
}

\section{Problem formulation}\label{sec:prob_form}

This section introduces the graph preliminaries and formulates
the distributed estimation problem under decaying average data rate communications. 

\subsection{Graph preliminaries} 
In this paper, the communications between sensors can be described by an undirected weighted graph $ \mathcal{G} = \left( \mathcal{V}, \mathcal{E}, \mathcal{A} \right) $. $ \mathcal{V} = \{ 1, \ldots, N \} $ is the set of the sensors. $ \mathcal{E} = \{ (i,j) : i,j \in \mathcal{V} \} $ is the edge set. $ (i,j) \in \mathcal{E} $ if and only if the sensor $ i $ and the sensor $ j $ can communicate with each other.  $ \mathcal{A} = (a_{ij})_{N\times N} $ represents the symmetric weighted adjacency matrix of the graph whose elements are all non-negative. $ a_{ij} > 0 $ if and only if $ (i,j)\in \mathcal{E} $. Besides, $ \Neighbour_{i} = \{j : (i,j)\in\mathcal{E}\}$ is used to denote the sensor $ i $'s the neighbor set. Define Laplacian matrix as $ \mL = \mathcal{D} - \mathcal{A} $, where $ \mathcal{D} = \diag\left(\sum_{i\in\Neighbour_1}a_{i1},\ldots,\sum_{i\in\Neighbour_N}a_{iN}\right) $. The graph $ \mathcal{G} $ is said to be connected if $ \text{rank}(\mL) = N-1 $. 

\subsection{Problem statement}
Consider a network $ \mathcal{G} = \left( \mathcal{V}, \mathcal{E}, \mathcal{A} \right) $ with $ N $ sensors. The sensor $ i $ observes the unknown parameter $ \theta \in \mathbb{R}^n $ from the observation model 
\begin{align*}
	\MODIFY{\mathtt{y}_{i,k}} = H_{i,k} \theta + \mathtt{w}_{i,k},
\end{align*}
where \MODIFY{$ k $ is the time index}, $ H_{i,k} \in \mathbb{R}^{m_i\times n} $ is the measurement matrix, $ \mathtt{w}_{i,k} \in \mathbb{R}^{m_i} $ is the observation noise, and $ \mathtt{y}_{i,k} \in \mathbb{R}^{m_i} $ is the observation. Define $ \sigma $-algebra $ \mathcal{F}_{k}^{w} = \sigma(\{\mathtt{w}_{i,t}:i\in \mathcal{V},\ 1\leq t \leq k\}) $. 

The assumptions of the observation model are given as below. 

\begin{assumption}\label{assum:input}
	There exists $ \bar{H} > 0 $ such that $ \Absl{H_{i,k}}\leq \bar{H} $ for all $ k \geq 1 $ and $ i = 1,\ldots,N $.  
	There exists a positive integer $ p $ and a positive real number $ \delta $ such that 
	\begin{equation}\label{condi:exc}
		\frac{1}{p} \sum_{t=k}^{k+p-1} \sum_{i=1}^{N} H_{i,t}^\top H_{i,t} \geq \delta I_n, \ k\geq 1. 
	\end{equation}
\end{assumption}

\begin{remark}
	\MODIFY{The condition \eqref{condi:exc} is the cooperative persistent excitation condition, and is common in existing literature for distributed estimation. For example, \cite{kar2013distributed,sahu2018CIRFE} assumes that $ H_{i,k} $ is constant for all $ k $ and $ \frac{1}{N} \sum_{i=1}^{N} H_{i,k}\tr \Sigma_w^{-1} H_{i,k} $ is invertible, where $ \Sigma_w $ is the nonsingular covariance of $ \mathtt{w}_{i,k} $. This condition is a special case for \cref{assum:input}. }
\end{remark}

\begin{assumption}\label{assum:noise}
	$ \{ \mathtt{w}_{i,k}, \mathcal{F}_k \} $ is a martingale difference sequence such that
	\begin{equation}\label{condi:w_rho}
		\sup_{i\in\mathcal{V},\ k\in\mathbb{N}} \mE\left[ \Abs{\mathtt{w}_{i,k}}^{\rho} \middle| \mathcal{F}^w_{k-1} \right] < \infty,\ \as
	\end{equation} 
	for some \MODIFY{$ \rho > 2 $}. 
\end{assumption}

\begin{remark}
	$ \mathtt{w}_{i,k} $ and $ \mathtt{w}_{j,k} $ is allowed to be correlated for $ i \neq j $, which makes our model applicable to more practical scenarios, such as the distributed target localization \cite{kar2012distributed}. 
\end{remark}

\begin{assumption}\label{assum:connect}
	The communication graph $ \mathcal{G} $ is connected. 
\end{assumption}


The goal of this paper is to cooperatively estimate the unknown parameter $ \theta $. Cooperative estimation requires information exchange between sensors, which brings communication cost. We use the average data rates to describe the communication cost of the distributed estimation. 

\begin{definition}\label{def:bit}
	Given time interval $ [1,k] \cap \mathbb{N} $, the local average data rate for the communication channel where the sensor $ i $ sends messages to the neighbor $ j $ 
	\begin{equation}\label{eq:def_bit}
		\mathtt{B}_{ij}(k) = \frac{\sum_{t=1}^{k} \upzeta_{ij}(t)}{k},
	\end{equation}
	where $ \upzeta_{ij}(t) $ is the bit number of the message that the sensor $ i $ sends to the sensor $ j $ at time $ t $. The global average data rate of communication is
	\begin{align*}
		\mathtt{B}(k) = \frac{\sum_{(i,j)\in\mathcal{E}} \sum_{t=1}^{k} \upzeta_{ij}(t)}{2kM}, 
	\end{align*}
	where $ M $ is the edge number of the communication graph. 
\end{definition}

\begin{remark}
	From \cref{def:bit}, one can get $ \mathtt{B}(k) = \frac{\sum_{(i,j)\in\mathcal{E}} \mathtt{B}_{ij}(k)}{2M} $. 
\end{remark}

\begin{remark}
	The average data rates are used to describe the communication cost because they can represent the consumption of bandwidth, and are also related to transmission energy cost \cite{li2009distributed}. 
\end{remark}

There have been distributed estimation algorithms with $ \mathtt{B}(k) < \infty $. For example, $ \mathtt{B}(k) $ of the distributed least mean square algorithm with $ 2K+1 $ level dynamical quantizer in \cite{xie2013LMS} is $ n\lceil \log_2 (2K+1) \rceil $, where $ \lceil \cdot \rceil $ is the minimum integer that is no smaller than the given number. $ \mathtt{B}(k) $ of the single-bit diffusion algorithm in \cite{sayin2014compressive} is $ 1 $. For effectiveness, these algorithms are shown to be mean square stable \cite{sayin2014compressive,xie2013LMS}. 

Here, we propose a new distributed estimation algorithm with better effectiveness and lower communication cost. For the effectiveness, the estimation errors converge to zero at a polynomial rate. For the communication cost, $ \mathtt{B}_{ij}(k) $ for all communication channels $ (i,j)\in\mathcal{E} $ and $ \mathtt{B}(k) $ also converge to zero. 


\section{Algorithm construction}\label{sec:algo}

The section constructs the distributed estimation algorithm \MODIFY{under the consensus+innovations framework \cite{kar2012distributed}, where a consensus protocol is necessary to fuse the messages transmitted in the network. Therefore,} the SC consensus algorithm \cite{Ke2023Signal} is firstly introduced as the foundation of our distributed estimation algorithm. 

\subsection{The SC consensus protocol \cite{Ke2023Signal}}

In \cite{Ke2023Signal}, we consider the first order multi-agent system 
\begin{equation}\label{sys:MAS}
	\mathtt{x}_{i,k} = \mathtt{x}_{i,k-1} + \mathtt{u}_{i,k},\ \forall i = 1,\ldots,N, 
\end{equation}
where $ \mathtt{x}_{i,k} \in \mathbb{R} $ is the agent $ i $'s state, and $ \mathtt{u}_{i,k} \in \mathbb{R} $ is the input to be designed. The SC consensus protocol for the system \eqref{sys:MAS} is given as in \cref{algo:consensus}.

\begin{algorithm}[H]
	\caption{The SC consensus protocol}
	\label{algo:consensus}
	\begin{algorithmic}
		\STATE \textbf{Input:}
		initial state sequence $ \{x_{i,0}\} $, threshold $ C $, step-size sequence $ \{\alpha_k\} $.
		\STATE \textbf{Output:}
		state sequence $ \{\mathtt{x}_{i,k}\} $.
		\STATE \textbf{for} $ k = 1,2,\ldots, $ \textbf{do}
		\STATE \quad \textbf{Encoding:} The agent $ i $ generates the binary-valued message as 
		\begin{align*}
			\mathtt{s}_{i,k} = \begin{cases}
				1, & \text{if}\ \mathtt{x}_{i,k} + \mathtt{d}_{i,k} < C;\\
				0, & \text{otherwise},
			\end{cases}
		\end{align*}
		where $ \mathtt{d}_{i,k} $ is the noise. 
		\STATE \quad \textbf{Consensus:} The agent $ i $ receives the binary-valued messages $ \mathtt{s}_{j,k} $ for all $ j \in \mathcal{N}_i $, and updates its states by
		\begin{equation}\label{algo:consensus_eq}
			\mathtt{x}_{i,k} = \mathtt{x}_{i,k-1} + \alpha_{k} \sum_{j\in \mathcal{N}_i} a_{ij} \left( \mathtt{s}_{i,k-1} - \mathtt{s}_{j,k-1} \right). 
		\end{equation}
		\STATE \textbf{end for}
	\end{algorithmic}
\end{algorithm}

The effectiveness of \cref{algo:consensus} is analyzed in \cite{Ke2023Signal}. One of the main results is shown below. 

\begin{theorem}[Theorem 1 of \cite{Ke2023Signal}]\label{thm:consensus}
	Assume that the communication graph is connected, $ \sum_{k=1}^{\infty} \alpha_k = \infty $, $ \sum_{k=1}^{\infty} \alpha_k^2 < \infty $, and the noise sequence $ \{\mathtt{d}_{i,k}\} $ is independent and identically distributed (i.i.d.) with a strictly increasing distribution function $ F(\cdot) $. Then, for \cref{algo:consensus}, we have $ \lim_{k\to\infty} \mathtt{x}_{i,k} = \frac{1}{N}\sum_{j=1}^{N} x_{j,0} $ almost surely. 
\end{theorem}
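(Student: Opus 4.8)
The plan is to recast the update \eqref{algo:consensus_eq} in stacked-vector form and exploit the Laplacian structure together with a stochastic-approximation (Robbins--Siegmund) argument. Writing $ \mathtt{x}_k = \col\{\mathtt{x}_{1,k},\ldots,\mathtt{x}_{N,k}\} $ and $ \mathtt{s}_k = \col\{\mathtt{s}_{1,k},\ldots,\mathtt{s}_{N,k}\} $, the update becomes $ \mathtt{x}_k = \mathtt{x}_{k-1} + \alpha_k \mL \mathtt{s}_{k-1} $, since $ \sum_{j\in\Neighbour_i} a_{ij}(\mathtt{s}_{i,k-1}-\mathtt{s}_{j,k-1}) = (\mL \mathtt{s}_{k-1})_i $. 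Because $ \mathcal{G} $ is undirected, $ \Bone_N\tr \mL = \Bzero\tr $, hence $ \Bone_N\tr \mathtt{x}_k = \Bone_N\tr \mathtt{x}_{k-1} = \Bone_N\tr x_0 $ for every $ k $: the arithmetic mean is conserved and equals $ \bar{x} := \frac1N\sum_{j=1}^N x_{j,0} $. It therefore suffices to prove that the disagreement $ \mathtt{x}_k - \bar{x}\Bone_N $ tends to $ \Bzero $ almost surely.

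First I would isolate the mean field. Let $ \mathcal{F}_k = \sigma(x_0,\mathtt{d}_{1},\ldots,\mathtt{d}_{k}) $, so that $ \mathtt{x}_k $ is $ \mathcal{F}_{k-1} $-measurable while $ \mathtt{s}_k $ is $ \mathcal{F}_k $-measurable. Since $ \mathtt{d}_{i,k} $ is i.i.d.\ with distribution $ F $ and independent of $ \mathcal{F}_{k-1} $, we obtain $ \mE[\mathtt{s}_{i,k}\mid\mathcal{F}_{k-1}] = \mP(\mathtt{d}_{i,k}<C-\mathtt{x}_{i,k}) = F(C-\mathtt{x}_{i,k}) =: G(\mathtt{x}_k)_i $, and the recursion splits as $ \mathtt{x}_k = \mathtt{x}_{k-1} + \alpha_k \mL G(\mathtt{x}_{k-1}) + \alpha_k \mL \xi_{k-1} $, where $ \xi_{k-1} = \mathtt{s}_{k-1} - G(\mathtt{x}_{k-1}) $ is a bounded martingale difference sequence (each coordinate lies in $ [-1,1] $) with $ \mE[\xi_{k-1}\mid\mathcal{F}_{k-2}]=\Bzero $.

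The engine of the proof is the monotonicity of the drift. Using the Laplacian identity $ u\tr \mL v = \tfrac12\sum_{i,j}a_{ij}(u_i-u_j)(v_i-v_j) $ with $ u=\mathtt{x} $ and $ v=G(\mathtt{x}) $, and the fact that $ t\mapsto F(C-t) $ is strictly decreasing, every summand $ a_{ij}(\mathtt{x}_i-\mathtt{x}_j)\big(G(\mathtt{x})_i-G(\mathtt{x})_j\big) $ is nonpositive, so $ \mathtt{x}\tr \mL G(\mathtt{x}) \le 0 $; moreover strict monotonicity of $ F $ forces equality only when $ \mathtt{x}_i=\mathtt{x}_j $ across every edge, i.e.\ (by connectedness) only when $ \mathtt{x}\in\operatorname{span}\{\Bone_N\} $. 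Taking $ V_k = \Abs{\mathtt{x}_k-\bar{x}\Bone_N}^2 $ and using $ \Bone_N\tr\mL=\Bzero\tr $, a direct expansion gives $ \mE[V_k\mid\mathcal{F}_{k-2}] \le V_{k-1} + 2\alpha_k\, \mathtt{x}_{k-1}\tr\mL G(\mathtt{x}_{k-1}) + \alpha_k^2\Abs{\mL}^2 N $, since the martingale term vanishes in conditional expectation and $ \Abs{\mathtt{s}_{k-1}}^2\le N $.

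Because $ \sum_k\alpha_k^2<\infty $ and $ \mathtt{x}_{k-1}\tr\mL G(\mathtt{x}_{k-1})\le 0 $, the Robbins--Siegmund supermartingale convergence theorem yields, almost surely, that $ V_k $ converges to a finite limit $ V_\infty\ge0 $ and that $ \sum_k\alpha_k\big(-\mathtt{x}_{k-1}\tr\mL G(\mathtt{x}_{k-1})\big)<\infty $. The remaining step, which I expect to be the main obstacle, is to upgrade these facts to $ V_\infty=0 $. Convergence of $ V_k $ bounds the trajectory, so along a subsequence realizing $ \liminf_k\big(-\mathtt{x}_{k-1}\tr\mL G(\mathtt{x}_{k-1})\big)=0 $ (forced by $ \sum_k\alpha_k=\infty $) one can extract a further convergent subsequence $ \mathtt{x}_{k_l}\to x^\ast $; continuity of $ F $ then gives $ (x^\ast)\tr\mL G(x^\ast)=0 $, whence $ x^\ast=c\Bone_N $, and passing the conservation law $ \Bone_N\tr\mathtt{x}_{k_l}=N\bar x $ to the limit fixes $ c=\bar x $. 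Thus $ V_\infty=\lim_l V_{k_l}=\Abs{x^\ast-\bar x\Bone_N}^2=0 $, so $ \mathtt{x}_k\to\bar x\Bone_N $ almost surely, i.e.\ $ \lim_{k\to\infty}\mathtt{x}_{i,k}=\frac1N\sum_{j=1}^N x_{j,0} $. The delicate points are the filtration bookkeeping that makes the noise term a genuine martingale difference, and the use of strict monotonicity (together with continuity) of $ F $ to exclude non-consensus limit points.
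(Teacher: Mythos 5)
This statement is imported verbatim from \cite{Ke2023Signal} (Theorem 1 there), and the present paper gives no proof of it, so there is no in-paper argument to compare against line by line. Your proposal is, however, a correct and essentially complete proof of the claim. The decomposition $\mathtt{x}_k=\mathtt{x}_{k-1}+\alpha_k\mL \mathtt{s}_{k-1}$, the conservation of $\Bone_N\tr\mathtt{x}_k$, the splitting of $\mathtt{s}_{k-1}$ into the decreasing drift $G(\mathtt{x}_{k-1})$ plus a bounded martingale difference, the Laplacian identity $u\tr\mL v=\tfrac12\sum_{i,j}a_{ij}(u_i-u_j)(v_i-v_j)$ giving $\mathtt{x}\tr\mL G(\mathtt{x})\le 0$ with equality only on $\operatorname{span}\{\Bone_N\}$, and the Robbins--Siegmund step are all sound, and the filtration shift (conditioning $V_k$ on $\mathcal{F}_{k-2}$) is handled correctly. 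It is worth noting how this compares with the closest analogue that \emph{is} proved here, namely \cref{thm:as_conv}: there the authors also use an almost-supermartingale theorem (Theorem 1.3.2 of \cite{GuoL2020}) to get convergence of the squared error and summability of the drift, but instead of your qualitative limit-point/continuity argument they obtain a \emph{quantitative} lower bound on the Laplacian quadratic form via the Lagrange mean value theorem and \cref{lemma:inf_kf}, and then force the limit to zero through a block-summation argument (\cref{lemma:sum_min_z}). Your compactness route is more elementary and suffices for the pure consensus statement; the paper's quantitative route is what additionally yields convergence rates. Two small caveats: (i) you implicitly use continuity of $F$ both to write $\mathbb{P}(\mathtt{d}_{i,k}<C-\mathtt{x}_{i,k})=F(C-\mathtt{x}_{i,k})$ and to pass $(x^\ast)\tr\mL G(x^\ast)=0$ to the limit; the theorem as quoted assumes only strict monotonicity, so you should either add continuity (harmless, and satisfied by the Laplacian dither actually used) or replace the continuity step by a uniform-separation argument for the strictly increasing left-continuous version of $F$ on the compact set containing the trajectory; (ii) the conclusion $V_\infty=0$ does require, as you note, combining the subsequential limit with the already-established a.s.\ convergence of $V_k$ --- stated as such, the argument closes correctly.
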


\begin{remark}
	\cref{thm:consensus} shows that \cref{algo:consensus} can achieve the almost sure consensus. 
	Therefore,  \cref{algo:consensus} can be used to solve the information transmission problem of distributed identification under binary-valued communications. 
\end{remark}

\begin{remark}
	\MODIFY{The design idea of \cref{algo:consensus} is based on the comparison of the binary-valued messages $ \mathtt{s}_{i,k} $ and $ \mathtt{s}_{j,k} $. If $ \mathtt{s}_{i,k} - \mathtt{s}_{j,k} = 1 $, then $ \mathtt{s}_{i,k} = 1 $ and $ \mathtt{s}_{j,k} = 0 $. From the distributions of $ \mathtt{s}_{i,k} $ and $ \mathtt{s}_{j,k} $, one can get that $ \mathtt{x}_{i,k} $ is more likely to be less than $ \mathtt{x}_{j,k} $. Therefore, in \cref{algo:consensus}, $ \mathtt{x}_{i,k} $ increases, and $ \mathtt{x}_{j,k} $ decreases. Conversely, if $ \mathtt{s}_{i,k} - \mathtt{s}_{j,k} = -1 $, then $ \mathtt{x}_{i,k} $ decreases, and $ \mathtt{x}_{j,k} $ increases.}
\end{remark}

\MODIFY{
\begin{remark}\label{remark:dither}
	The noise $ \mathtt{d}_{i,k} $ with strictly increasing distribution function is necessary for \cref{algo:consensus}. Without such a noise, the states $ \mathtt{x}_{i,k} $ will keep constant if all the states are greater (or smaller) than the threshold $ C $, and hence, the consensus may not be achieved. With the noise $ \mathtt{d}_{i,k} $, $ \mE\left[ \mathtt{s}_{i,k} \middle| \mathtt{x}_{i,k} \right] $ is strictly decreasing with $ \mathtt{x}_{i,k} $. Therefore, when $  \mathtt{x}_{i,k} \neq  \mathtt{x}_{j,k} $, the stochastic properties of $ \mathtt{s}_{i,k} $ and $ \mathtt{s}_{j,k} $ are different even if $ \mathtt{x}_{i,k} $ and $ \mathtt{x}_{j,k} $ are all greater (or smaller) than the threshold $ C $. The consensus can be thereby achieved. 
%
\end{remark}
}


\subsection{The SC-based distributed estimation algorithm}
The subsection propose the SC-based distributed estimation algorithm in \cref{algo:DE}.

\begin{algorithm}[H]
	\caption{The SC-based distributed estimation algorithm.}
	\label{algo:DE}
	\begin{algorithmic}
		\STATE \textbf{Input:}
		initial estimate sequence $ \{\hat{\theta}_{i,0}\} $, event-triggered coefficient sequence $ \{\nu_{ij}\} $ with $ \nu_{ij} = \nu_{ji} \geq 0 $, noise coefficient sequence $ \{b_{ij}\} $ with $ b_{ij} = b_{ji} > 0 $, step-size sequences $ \{\alpha_{ij,k}\} $ with $ \alpha_{ij,k} = \alpha_{ji,k} > 0 $ and $ \{\beta_{i,k}\} $ with $ \beta_{i,k} > 0 $.
		\STATE \textbf{Output:}
		estimate sequence $ \{\hat{\uptheta}_{i,k}\} $.
		\STATE \textbf{for} $ k = 1,2,\ldots, $ \textbf{do}
		\STATE \quad \textbf{Compressing:} If $ k = n q+l $ for some $ q \in \mathbb{N} $ and $ l \in \{1,\ldots,n\} $, then the sensor $ i $ generates $ \varphi_k $ as the $ n $-dimensional vector whose $ l $-th element is $ 1 $ and the others are $ 0 $. The sensor $ i $ uses $ \varphi_{k} $ to compress the previous local estimate $ \hat{\uptheta}_{i,k-1} $ into the scalar $ \mathtt{x}_{i,k} = \varphi_{k}\tr \hat{\uptheta}_{i,k-1} $. 
		\STATE \quad \textbf{Encoding:} The sensor $ i $ generates the dithering noise $ \mathtt{d}_{i,k} $ with Laplacian distribution $ Lap(0,1) $. Then, the sensor $ i $ generates the binary-valued message for the neighbor $ j $
		\begin{align*}
			\mathtt{s}_{ij,k} = \begin{cases}
				1, & \text{if}\ \mathtt{x}_{i,k} + b_{ij} \mathtt{d}_{i,k} > 0;\\
				-1, & \text{otherwise}.
			\end{cases}
		\end{align*}
		\STATE \quad \textbf{Data Transmission:} Set $ C_{ij,k} = \nu_{ij} b_{ij} \ln k $. If $ \absl{\mathtt{x}_{i,k} + b_{ij} \mathtt{d}_{i,k}} > C_{ij,k} $, then the sensor $ i $ sends the 1 bit message \MODIFY{$ \mathtt{s}_{ij,k} $} to the neighbor $ j $. Otherwise, the sensor $ i $ does not send any message to the neighbor $ j $. 
		\STATE \quad \textbf{Data Receiving:} If the sensor $ i $ receives 1 bit message $ \mathtt{s}_{ji,k} $ from its neighbor $ j $, then set $ \hat{\mathtt{s}}_{ji,k} = \mathtt{s}_{ji,k} $. Otherwise, set $ \hat{\mathtt{s}}_{ji,k} = 0 $. 
		\STATE \quad \textbf{Information fusion:} Apply the modified \cref{algo:consensus} to fuse the neighborhood information.
		 \begin{equation}\label{algo:DE_consensus}
		 	\check{\uptheta}_{i,k} = \hat{\uptheta}_{i,k-1} + \varphi_k \sum_{j\in \mathcal{N}_i} \alpha_{ij,k} a_{ij} \left( \hat{\mathtt{s}}_{ji,k} - G_{ij,k} (\mathtt{x}_{i,k}) \right)
		 \end{equation}
		 where $ G_{ij,k}(x) = F((x - C_{ij,k})/b_{ij}) - F((- x - C_{ij,k})/b_{ij}) $, and $ F(\cdot) $ is the distribution function of $ Lap(0,1) $. 
		 \STATE \quad \textbf{Estimate update:} Use the observation $ \mathtt{y}_{i,k} $ to update the local estimate.
		 \begin{equation}
		 	\hat{\uptheta}_{i,k} = \check{\uptheta}_{i,k} + \beta_{i,k}H_{i,k}\tr \left(\mathtt{y}_{i,k}-H_{i,k} \hat{\uptheta}_{i,k-1}\right). 
		 \end{equation}
		 \STATE \textbf{end for}
	\end{algorithmic}
\end{algorithm}

In \cref{algo:DE}, dithering noise $ \mathtt{d}_{i,k} $ is used for the encoding step and the event-triggered condition. The independence assumption for $ \mathtt{d}_{i,k} $ is required. 

\begin{assumption}\label{assum:dither}
	$ \mathtt{d}_{i,k} $ and $ \mathtt{d}_{j,t} $ are independent when $ k \neq t $ or $ i \neq j $. And, $ \mathtt{d}_{i,k} $ and $ \mathtt{w}_{j,t} $ are independent for all $ i, j\in \mathcal{V} $ and $ k,t \in \mathbb{N} $.
\end{assumption}

Following remarks are given for \cref{algo:DE}. 

\begin{remark}
	\MODIFY{
		The requirement that $ \alpha_{ij,k} = \alpha_{ji,k} $ in \cref{algo:DE} is weak among existing literature. In the distributed estimation algorithms in \cite{kar2013distributed,kar2012distributed,Jakovetic2023heavy-tail,wang2023differentially}, it is required that $ \alpha_{ij,k} = \alpha_{i^\prime j^\prime,k} $ for all $ (i,j), (i^\prime,j^\prime) \in \mathcal{E} $. He et al. \cite{he2022event} and Zhang and Zhang \cite{ZhangQ2012} relax this condition, but still require that $ \lim_{k\to\infty} \frac{\alpha_{ij,k}}{\alpha_{i^\prime j^\prime,k}} = 1 $ for all $ (i,j), (i^\prime,j^\prime) \in \mathcal{E} $, and hence the step-sizes $ \alpha_{ij,k} $ converge to $ 0 $ with the same order. For comparison, in \cref{algo:DE},  $ \alpha_{ij,k} = \alpha_{i^\prime j^\prime,k} $ is required only when $ i = j^\prime $ and $ j = i^\prime $, which is more easily implemented since it only requires the communication between adjacent sensors $ i $ and $ j $, and the step-sizes $ \alpha_{ij,k} $ in \cref{algo:DE} are allowed to converge to $ 0 $ with different orders. Here, we give one of the techniques to achieve $ \alpha_{ij,k} = \alpha_{ji,k} $, which is a two-step protocol before running \cref{algo:DE}. Firstly, the operators of the sensors $ i $ and $ j $ select positive numbers $ \bar{\alpha}_{ij,1} $, $ \bar{\gamma}_{ij} $ and $ \bar{\alpha}_{ji,1} $, $ \bar{\gamma}_{ji} $, respectively, and then transmit the selected numbers to each other. Secondly, set $ \alpha_{ij,k} = \alpha_{ji,k} = \frac{\alpha_{ij,1}}{k^{\gamma_{ij}}} $, where $ \alpha_{ij,1} = \frac{\bar{\alpha}_{ij,1}+\bar{\alpha}_{ji,1}}{2} $ and $ \gamma_{ij} = \frac{\bar{\gamma}_{ij}+\bar{\gamma}_{ji}}{2} $. By using this technique, it requires only finite bits of communications to achieve $ \alpha_{ij,k} = \alpha_{ji,k} $ if $ \bar{m} \bar{\alpha}_{ij,1} $, $ \bar{m} \bar{\gamma}_{ij} $, $ \bar{m} \bar{\alpha}_{ji,1} $, $ \bar{m} \bar{\gamma}_{ji} $ are all integers for some positive $ \bar{m} $. Similar techniques can be applied to achieve $ \nu_{ij} = \nu_{ji} $ and $ b_{ij} = b_{ji} $ in \cref{algo:DE}. 
}
\end{remark}

\begin{remark}
	A new stochastic event-triggered mechanism is applied to \cref{algo:DE}. The main idea is to use the dithering noises and the expanding triggering thresholds. When $ \nu_{ij} > 0 $, the threshold $ C_{ij,k} $ goes to infinity. 
	Hence, the probability that $ \abs{\mathtt{x}_{i,k} + b_{ij}\mathtt{d}_{i,k}} > C_{ij,k} $ decays to zero, which implies that the communication frequency is reduced. 
%
\end{remark}

\begin{remark}
	The stochastic event-triggered mechanism used in \cref{algo:DE} is significantly different from existing ones. When the information is not transmitted at a certain moment, the traditional event-triggered mechanisms \cite{he2022event} use the recently received message as an approximation of the untransmitted message. Note that in the binary-valued communication case, $ 1 $ and $ -1 $ represent opposite information. Then, in this case, approximation technique of \cite{he2022event} can only be used when the recently received message is the same as the untransmitted message. This constraint makes it difficult to reduce communication frequency to zero through event-triggered mechanisms. 
	To overcome the difficulty, a new approximation method is used in \cref{algo:DE}. When the information is not transmitted at a certain moment, our stochastic event-triggered mechanism uses $ 0 $ as an approximation of the untransmitted information. The approximation technique expands the binary-valued message $ \mathtt{s}_{ji,k} $ to triple-valued message $ \hat{\mathtt{s}}_{ji,k} $. The message $ \hat{\mathtt{s}}_{ji,k} $ contains information on whether $ \mathtt{s}_{ji,k} $ is transmitted or not. Hence, the statistical properties of whether $ \mathtt{s}_{ji,k} $ is transmitted can be better utilized.
\end{remark}

\begin{remark}
	\MODIFY{In \cref{algo:DE}, the dithering noise $ \mathtt{d}_{i,k} $ is artificial, and generated under a given distribution function. The necessity of introducing $ \mathtt{d}_{i,k} $ is similar to that in \cref{algo:consensus}, which has been explained in \cref{remark:dither}. For similar reasons, dithering noises are often used to avoid the influence of quantization error \cite{aysal2008distributed,Gustafsson2013generating,WangLY2003System}.}
	Besides, in \cref{algo:DE}, the dithering noise $ \mathtt{d}_{i,k} $ is not necessarily Laplacian distributed. $ \mathtt{d}_{i,k} $ can be any other types with continuous and strictly increasing distribution $ F(\cdot) $, including Gaussian noises and the heavy-tailed noises \cite{Jakovetic2023heavy-tail}. 
	For the polynomial decaying rate of $ \mathtt{B}(k) $, the triggering threshold $ C_{ij,k} $ can be changed accordingly. 
\end{remark}

\begin{remark}\label{remark:replace_DE}
\MODIFY{
	In \eqref{algo:DE_consensus}, we use $ G_{ij,k}(\mathtt{x}_{i,k}) $ to replace $ \hat{\mathtt{s}}_{ij,k} $ in order to reduce the variances of the estimates, because $ \mE\left[ \hat{\mathtt{s}}_{ij,k}\middle| \mathcal{F}_{k-1} \right] = G_{ij,k}(\mathtt{x}_{i,k}) $, where $ \mathcal{F}_{k} = \sigma(\{\mathtt{w}_{i,t},\mathtt{d}_{i,t}:i=1,\ldots,N,\ 1\leq t \leq k\}) $. 
}
%
\end{remark}

\section{Convergence analysis}\label{sec:conv}

The convergence properties of \cref{algo:DE} is analyzed in this section. The almost sure convergence and mean square convergence are obtained in \cref{ssec:conv}. Then, the almost sure convergence rate is calculated in \cref{ssec:conv_rate}. 
\subsection{Convergence}\label{ssec:conv}

\MODIFY{This subsection focuses on the almost sure and mean square convergence of \cref{algo:DE}. The following theorem gives a new step-size condition, where the step-sizes are allowed to converge to zero with different orders, and the estimates of \cref{algo:DE} are proved to converge to the true value almost surely. }

\begin{theorem}\label{thm:as_conv}
	Suppose the step-size sequences $ \{\alpha_{ij,k}\} $ and $ \{\beta_{i,k}\} $ satisfy
	\begin{enumerate}[label={\roman*)}]
		\item $ \sum_{k=1}^{\infty} \alpha^2_{ij,k} < \infty $ and $ 
		\alpha_{ij,k+1} = O\left( \alpha_{ij,k} \right) $ for all $ (i,j)\in\mathcal{E} $;
		\item $ \sum_{k=1}^{\infty} \beta^2_{i,k} < \infty $ and $
		\beta_{i,k+1} = O\left( \beta_{i,k} \right) $ for all $ \forall i\in\mathcal{V} $; 
		\item $ \sum_{k=1}^{\infty} z_k = \infty $ for $ z_k = \min\left\{ \frac{\alpha_{ij,k}}{k^{\nu_{ij}}}, (i,j)\in\mathcal{E}; \beta_{i,k}, i\in\mathcal{V} \right\} $.
	\end{enumerate}
	Then, under \cref{assum:input,assum:noise,assum:connect,assum:dither}, the estimate $ \hat{\uptheta}_{i,k} $ in \cref{algo:DE} converges to the true value $ \theta $ almost surely. 
\end{theorem}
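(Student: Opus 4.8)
The plan is to study the stacked estimation error through the individual errors $ \tilde{\uptheta}_{i,k}=\hat{\uptheta}_{i,k}-\theta $ and the Lyapunov functional $ V_k=\sum_{i=1}^{N}\Abs{\tilde{\uptheta}_{i,k}}^2 $, and to show that $ V_k $ is a nonnegative almost supermartingale whose per-step decrease is controlled exactly by the gain $ z_k $ of condition iii). First I would substitute the observation model in the form $ \mathtt{y}_{i,k}-H_{i,k}\hat{\uptheta}_{i,k-1}=-H_{i,k}(\hat{\uptheta}_{i,k-1}-\theta)+\mathtt{w}_{i,k} $ into the estimate-update step and merge it with the information-fusion step \eqref{algo:DE_consensus} to get a single recursion for $ \tilde{\uptheta}_{i,k} $. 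The consensus increment would then be split as
\begin{align*}
\hat{\mathtt{s}}_{ji,k}-G_{ij,k}(\mathtt{x}_{i,k})=\underbrace{\big(\hat{\mathtt{s}}_{ji,k}-G_{ij,k}(\mathtt{x}_{j,k})\big)}_{\text{martingale difference}}+\underbrace{\big(G_{ij,k}(\mathtt{x}_{j,k})-G_{ij,k}(\mathtt{x}_{i,k})\big)}_{\text{consensus drift}},
\end{align*}
using the companion of \cref{remark:replace_DE}, namely $ \mE[\hat{\mathtt{s}}_{ji,k}\mid\mathcal{F}_{k-1}]=G_{ij,k}(\mathtt{x}_{j,k}) $, which holds because $ b_{ij}=b_{ji} $ and $ C_{ij,k}=C_{ji,k} $. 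Together with the innovation martingale term $ \beta_{i,k}H_{i,k}\tr\mathtt{w}_{i,k} $, this displays the error recursion as a stochastic-approximation scheme driven by a bounded martingale-difference part (since $ \absl{\hat{\mathtt{s}}_{ji,k}}\le1 $ and $ \absl{G_{ij,k}}\le1 $) and by the observation noise, whose conditional second moments are $ O(\alpha_{ij,k}^2) $ and $ O(\beta_{i,k}^2) $ by \cref{assum:input,assum:noise} and hence summable under i) and ii).

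The decisive quantitative input is the slope of $ G_{ij,k} $ near the consensus manifold. Writing $ F $ for the $ Lap(0,1) $ distribution function I would use $ G_{ij,k}(x)=F((x-C_{ij,k})/b_{ij})+F((x+C_{ij,k})/b_{ij})-1 $, so that $ G_{ij,k} $ is odd, strictly increasing, and $ G_{ij,k}'(0)=\tfrac{2}{b_{ij}}f(C_{ij,k}/b_{ij})=\tfrac{1}{b_{ij}}k^{-\nu_{ij}} $, because $ C_{ij,k}/b_{ij}=\nu_{ij}\ln k $ and the Laplacian density is $ f(u)=\tfrac12 e^{-\absl{u}} $. Since $ \mathtt{x}_{j,k}-\mathtt{x}_{i,k}=\varphi_k\tr(\tilde{\uptheta}_{j,k-1}-\tilde{\uptheta}_{i,k-1}) $, the mean-value theorem rewrites the consensus drift as $ g_{ij,k}\,\varphi_k\tr(\tilde{\uptheta}_{j,k-1}-\tilde{\uptheta}_{i,k-1}) $ with a strictly positive slope $ g_{ij,k} $, and the effective consensus gain $ \alpha_{ij,k}g_{ij,k} $ is of order $ \alpha_{ij,k}k^{-\nu_{ij}} $; this is precisely the quantity entering $ z_k $ in iii). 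Because $ G_{ij,k} $ saturates, its slope is only bounded below on compact $ \mathtt{x} $-ranges, so to make this lower bound usable I would first establish almost-sure boundedness of the errors: for $ V_k=\sum_i\Abs{\tilde{\uptheta}_{i,k}}^2 $ the consensus drift contributes nonpositively (the symmetry $ \alpha_{ij,k}=\alpha_{ji,k} $, $ a_{ij}=a_{ji} $ together with monotonicity of $ G_{ij,k} $ gives a $ -\tfrac12\sum_{i,j}\alpha_{ij,k}a_{ij}(\mathtt{x}_{i,k}-\mathtt{x}_{j,k})(G_{ij,k}(\mathtt{x}_{i,k})-G_{ij,k}(\mathtt{x}_{j,k}))\le0 $ structure) and the innovation drift contributes $ -2\sum_i\beta_{i,k}\Abs{H_{i,k}\tilde{\uptheta}_{i,k-1}}^2\le0 $, so the Robbins--Siegmund lemma yields $ \sup_k V_k<\infty $ a.s.\ and convergence of $ V_k $.

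With a.s.\ boundedness in hand, the core step is a Lyapunov contraction coupling consensus and innovation over a sliding window of length $ \bar{p}=\operatorname{lcm}(n,p) $, which is needed because the compression cycles through the $ n $ coordinates via $ \varphi_k $ while \eqref{condi:exc} acts over blocks of $ p $. Over such a window I would split $ \tilde{\uptheta}_{i,k} $ into a disagreement component and the network average $ \frac1N\sum_i\tilde{\uptheta}_{i,k} $, noting that the consensus drift cancels pairwise and hence leaves the average unchanged. The consensus drift contracts the disagreement at rate proportional to $ \lambda_2(\mL)\min_{(i,j)}\alpha_{ij,k}k^{-\nu_{ij}} $ (using \cref{assum:connect} and that every coordinate is refreshed once per $ n $ steps), while the innovation term contracts the average at rate proportional to $ \delta\min_i\beta_{i,k} $ once the agents are close, via the cooperative excitation \eqref{condi:exc} summed over the block and the bound on $ H_{i,k} $. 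Absorbing the martingale and $ O(z_k^2) $ remainders into a summable sequence $ \xi_k $, I expect an estimate of Robbins--Siegmund type
\begin{align*}
\mE\!\left[V_{k+\bar p}\,\middle|\,\mathcal{F}_k\right]\le(1-c\,z_k)V_k+\xi_k,\qquad \sum_k\xi_k<\infty,\ c>0,
\end{align*}
so that $ \sum_k z_kV_k<\infty $ a.s.; since $ \sum_k z_k=\infty $ by iii) and $ V_k $ already converges, its limit must be $ 0 $, giving $ \hat{\uptheta}_{i,k}\to\theta $ a.s.

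The main obstacle I anticipate is this last step: converting the nonlinear, saturating, channel- and coordinate-dependent consensus drift into a clean contraction at the single combined rate $ z_k $, simultaneously with the block-wise excitation argument, while the estimates are only known to be bounded rather than convergent. In particular, the heterogeneous step-sizes $ \alpha_{ij,k} $, allowed to decay with different orders, forbid factoring out a common gain, so the disagreement and average dynamics must be balanced through the $ \min $ defining $ z_k $ rather than through the usual two-time-scale separation; controlling the cross-terms between slowly- and fast-decaying channels over the window $ \bar p $ is where the delicate estimates will lie.
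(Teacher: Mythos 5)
Your skeleton matches the paper's proof almost step for step: the same error recursion with the martingale-difference/consensus-drift splitting of $\hat{\mathtt{s}}_{ji,k}-G_{ij,k}(\mathtt{x}_{i,k})$, the same mean-value-theorem slope $g_{ij,k}$ of order $k^{-\nu_{ij}}$, the same Lyapunov function $V_k=\sum_i\Absl{\tilde{\uptheta}_{i,k}}^2$ with a Robbins--Siegmund (Theorem 1.3.2 of the cited Guo monograph) argument giving a.s.\ boundedness and convergence of $V_k$, the same use of boundedness to lower-bound the saturating slope uniformly (the paper's \cref{lemma:inf_kf}), a window of length $np$ to invoke the cooperative excitation, and the same endgame ($\sum_k z_k=\infty$ forces the limit of $V_k$ to be $0$). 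The one place you diverge is the final windowed step, and it is worth noting how the paper organizes it differently: it does \emph{not} prove a per-window contraction $\mE[V_{k+\bar p}\mid\mathcal{F}_k]\le(1-cz_k)V_k+\xi_k$. Instead, the first Robbins--Siegmund application already delivers $\sum_k\bigl(\sum_i\beta_{i,k}(H_{i,k}\tilde{\uptheta}_{i,k-1})^2+\tilde{\mathtt{X}}_k\tr\mathtt{L}_{G,k}\tilde{\mathtt{X}}_k\bigr)<\infty$ a.s.; the windowing is then used only to replace the time-varying quadratic forms $z_t\tilde{\Theta}_t\tr\Phi_t\tilde{\Theta}_t$ by forms evaluated at the window endpoint, with the replacement error controlled by explicit martingale convergence estimates, so that the uniformly positive definite window sum $\sum_{t}\Phi_t\ge\underline{\mathtt{H}}$ (Lemma 5.4 of the Xie--Guo reference, applied to the single combined matrix $\Phi_k=\mathbb{H}_k+\underline{\mathtt{g}}\lambda_2(\mathcal{L})(I_N-J_N)\otimes\varphi_k\varphi_k\tr$) can be applied directly. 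This sidesteps exactly the obstacle you flag: there is no disagreement/average decomposition, no ``once the agents are close'' coupling, and no need to balance two contractions against the heterogeneous gains --- the $\min$ defining $z_k$ enters only once, through the lower bound on $\mathtt{L}_{G,k}$. Your proposed direct contraction would also suffice if you could establish it, but the paper's transfer-to-endpoint argument is the easier route to make rigorous, and I would recommend adopting it for the step you left open.
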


\begin{proof}
	By $ \mE\left[ \hat{\mathtt{s}}_{ji,k}\middle| \mathcal{F}_{k-1} \right] = G_{ji,k}(\mathtt{x}_{j,k}) $, one can get
	\begin{align}\label{eq:condi_E}
		& \mE\left[\left(\hat{\mathtt{s}}_{ji,k}-G_{ji,k}(\mathtt{x}_{j,k-1})\right)^2\middle|\mathcal{F}_{k-1}\right] \\
		= & \mE \left[ \hat{\mathtt{s}}_{ji,k}^2 \middle|\mathcal{F}_{k-1} \right] - G_{ij,k}^2(\mathtt{x}_{j,k}) 
		\nonumber\\
		= & F((\mathtt{x}_{j,k}-C_{ji,k})/b_{ji})+F((-\mathtt{x}_{j,k}-C_{ji,k})/b_{ji})-G_{ji,k}^2(\mathtt{x}_{j,k}), \nonumber
	\end{align}
	where the $ \sigma $-algebra $ \mathcal{F}_{k-1} $ is defined in \cref{remark:replace_DE}.
	Besides by the Lagrange mean value theorem \cite{zorich}, given $ (i,j) \in \mathcal{E} $, there exists $ \upxi_{ij,k} $ between $ \mathtt{x}_{i,k} $ and $ \mathtt{x}_{j,k} $ such that 
	\begin{align*}
		G_{ji,k}(\mathtt{x}_{j,k}) - G_{ij,k}(\mathtt{x}_{i,k}) = g_{ij,k}(\upxi_{ij,k})\left(\mathtt{x}_{j,k}-\mathtt{x}_{i,k}\right),
	\end{align*}
	where 
	\begin{align*}
		g_{ij,k}(x) = g_{ji,k}(x) = \left.\left(f\left( \frac{x-C_{ij,k}}{b_{ij}} \right) + f\left( \frac{-x-C_{ij,k}}{b_{ij}} \right)\right) \middle/ b_{ij} \right.,
	\end{align*}
	and $ f(\cdot) $ is the density function of $ Lap(0,1) $. 
	Denote $ \tilde{\uptheta}_{i,k} = \hat{\uptheta}_{i,k} - \theta $.
	Then, it holds that
	\begin{align*}
		\mE \left[ \Absl{\tilde{\uptheta}_{i,k}}^2 \middle| \mathcal{F}_{k-1} \right] 
		= &\Absl{\tilde{\uptheta}_{i,k-1}}^2
		- 2\beta_{i,k} \left(H_{i,k} \tilde{\uptheta}_{i,k-1}\right)^2 \\
		& + 2 \varphi_k\tr \tilde{\uptheta}_{i,k-1} \sum_{j\in \mathcal{N}_i}\alpha_{ij,k} a_{ij} g_{ij,k}(\upxi_{ij,k})\left(\mathtt{x}_{j,k}-\mathtt{x}_{i,k}\right) \nonumber\\
		&+ O\left( \beta^2_{i,k}\left(\Absl{\tilde{\uptheta}_{i,k-1}}^2+1\right) + \sum_{j \in \mathcal{N}_i} \alpha^2_{ij,k} \right). \nonumber
	\end{align*}
	Denote $ \tilde{\mathtt{x}}_{i,k} = \varphi_k\tr \tilde{\uptheta}_{i,k-1} = \mathtt{x}_{i,k} - \varphi_{k}\tr \theta $ and $ \tilde{\mathtt{X}}_k = [\tilde{\mathtt{x}}_{1,k}, \ldots, \tilde{\mathtt{x}}_{N,k}]\tr $. Then, one can get
	\begin{align}\label{eq:LG}
		& \sum_{i=1}^{N} 2 \varphi_k\tr \tilde{\uptheta}_{i,k-1} \sum_{j\in \mathcal{N}_i}\alpha_{ij,k} a_{ij} g_{ij,k}(\upxi_{ij,k})\left(\mathtt{x}_{j,k}-\mathtt{x}_{i,k}\right) \\
		= & \sum_{i=1}^{N} 2 \tilde{\mathtt{x}}_{i,k} \sum_{j\in \mathcal{N}_i}\alpha_{ij,k} a_{ij} g_{ij,k}(\upxi_{ij,k})\left(\tilde{\mathtt{x}}_{j,k}-\tilde{\mathtt{x}}_{i,k}\right) 
		= - 2 \tilde{\mathtt{X}}_k\tr \mathtt{L}_{G,k} \tilde{\mathtt{X}}_k, \nonumber
	\end{align}
	where $ \mathtt{L}_{G,k} = (\mathtt{l}^G_{ij,k})_{N\times N} $ is a Laplacian matrix with $ \mathtt{l}^G_{ii,k} = \sum_{j\in\mathcal{N}_i} \alpha_{ij,k} a_{ij} g_{ij,k}(\upxi_{ij,k}) $ and
	$ \mathtt{l}^G_{ij,k} = -\alpha_{ij,k} a_{ij} g_{ij,k}(\upxi_{ij,k}) $ for $ i \neq j $. 
	Therefore, we have
	\begin{align}\label{eq:CondE_sumtheta}
		\mE \left[ \sum_{i=1}^{N} \Absl{\tilde{\uptheta}_{i,k}}^2 \middle| \mathcal{F}_{k-1} \right]
		= & \sum_{i=1}^{N} \Absl{\tilde{\uptheta}_{i,k-1}}^2
		- 2 \sum_{i=1}^{N} \beta_{i,k} \left(H_{i,k} \tilde{\uptheta}_{i,k-1}\right)^2 - 2 \tilde{\mathtt{X}}_k\tr \mathtt{L}_{G,k} \tilde{\mathtt{X}}_k \\
		&  + O\left( \sum_{i=1}^{N} \beta^2_{i,k}\left(\Absl{\tilde{\uptheta}_{i,k-1}}^2+1\right) + \sum_{(i,j)\in\mathcal{E}} \alpha^2_{ij,k} \right). \nonumber
	\end{align}
	Then, by Theorem 1.3.2 of \cite{GuoL2020}, $ \sum_{i=1}^{N} \Absl{\tilde{\uptheta}_{i,k}}^2 $ converges to a finite value almost surely, and
	\begin{equation}\label{ineq:sum_H+L<inf}
		\sum_{k=1}^\infty \left( \sum_{i=1}^{N}  \beta_{i,k} \left(H_{i,k} \tilde{\uptheta}_{i,k-1}\right)^2 + \tilde{\mathtt{X}}_k\tr \mathtt{L}_{G,k} \tilde{\mathtt{X}}_k \right) < \infty,\ \as, 
	\end{equation} 
	
	By the convergence of $ \sum_{i=1}^{N} \Absl{\tilde{\uptheta}_{i,k}}^2 $, 
	$ \tilde{\mathtt{x}}_{i,k} = \varphi_{k}\tr \tilde{\uptheta}_{i,k} $ is uniformly bounded almost surely. Then, by \cref{lemma:inf_kf} in \cref{appen}, it holds that 
	\begin{equation}\label{eq:def_udg}
		\underline{\mathtt{g}} := \inf_{(i,j)\in\mathcal{E}, k\in\mathbb{N}} k^{\nu_{ij}} g_{ij,k}(\upxi_{ij,k}) > 0,\ \as
	\end{equation}
	Hence, one can get 
	\begin{equation}\label{eq:Laplacian_lower}
		\mathtt{L}_{G,k} \geq \left(\min_{(i,j)\in\mathcal{E}} \frac{\alpha_{ij,k}}{k^{\nu_{ij}}}\right) \underline{\mathtt{g}} \lambda_2(\mathcal{L})\left( I_N - J_N \right),
	\end{equation}
	where $ \lambda_2(\mathcal{L}) $ is the second smallest eigenvalue of $ \mathcal{L} $, and $ J_N = \frac{1}{N}\Bone_N\tr \Bone_N $. 
	
	Denote 
	\begin{align*}
		\tilde{\Theta}_k &= \col\{\tilde{\uptheta}_{1,k},\ldots,\tilde{\uptheta}_{N,k}\},\ 
		\mathbb{H}_k = \diag\{ H_{1,k}\tr H_{1,k},\ldots, H_{N,k}\tr H_{N,k}\}, \\
		\mathbb{H}_{\beta,k} &= \diag\{\beta_{1,k} H_{1,k}\tr H_{1,k},\ldots, \beta_{N,k} H_{N,k}\tr H_{N,k}\},\\ 
		\Phi_k &= \mathbb{H}_k + \underline{\mathtt{g}} \lambda_2(\mathcal{L})\left( I_N - J_N \right) \otimes \varphi_k \varphi_k\tr, \\
		\mathtt{W}_k &= \col\{\beta_{1,k} H_{1,k}\tr \mathtt{w}_{1,k}, \ldots, \beta_{N,k} H_{N,k}\tr \mathtt{w}_{N,k}\},\\
		& \quad\  + \left[\left(\varphi_{k}\sum_{j\in \mathcal{N}_1}\alpha_{1j,k} a_{1j}(\hat{\mathtt{s}}_{j1,k}-G_{j1,k}(\mathtt{x}_{j,k}))\right)\tr,\ldots,\right.\\
		&\qquad\qquad\qquad\qquad\left.\left(\varphi_{k}\sum_{j\in \mathcal{N}_N}\alpha_{Nj,k} a_{Nj} a_{Nj}(\hat{\mathtt{s}}_{jN,k}-G_{jN,k}(\mathtt{x}_{j,k}))\right)\tr\right]\tr.  
	\end{align*}
	Then, $ \mathtt{W}_k $ is $ \mathcal{F}_{k} $-measurable, and
	\begin{align}
		& \tilde{\Theta}_k =\left( I_{N\times n} - \mathbb{H}_{\beta,k} - \mathtt{L}_{G,k}\otimes \varphi_k\varphi_k\tr \right) \tilde{\Theta}_{k-1} + \mathtt{W}_k, \label{eq:recur_Theta1}\\
		&\mE \left[\mathtt{W}_k\middle|\mathcal{F}_{k-1}\right] = 0, \ 
		\mE \left[ \Absl{\mathtt{W}_k}^2 \middle|\mathcal{F}_{k-1} \right] = O\left( \sum_{i=1}^{N} \beta_{i,k}^2 + \sum_{(i,j)\in\mathcal{E}} \alpha_{ij,k}^2 \right). \nonumber
	\end{align}
	By the almost sure uniform boundedness of $ \tilde{\Theta}_k $ and \eqref{eq:recur_Theta1}, one can get
	\begin{align}\label{eq:minus_Theta}
		\mathtt{P}_k := \frac{\tilde{\Theta}_{k} - \tilde{\Theta}_{k-1} -\mathtt{W}_k}{\sum_{i=1}^{N} \beta_{i,k} + \sum_{(i,j)\in\mathcal{E}} \alpha_{ij,k}}
	\end{align}
	is $ \mathcal{F}_{k-1} $-measurable and almost surely uniformly bounded. 
	By \eqref{eq:Laplacian_lower}, it holds that
	\begin{align}
		\sum_{i=1}^{N}  \beta_{i,k} \left(H_{i,k} \tilde{\uptheta}_{i,k-1}\right)^2 + \tilde{\mathtt{X}}_k\tr \mathtt{L}_{G,k} \tilde{\mathtt{X}}_k
		\geq z_k \tilde{\Theta}_k\tr \Phi_k \tilde{\Theta}_k. \label{ineq:z*Theta*Phi*Theta} 
	\end{align}
	Besides by Lemma 5.4 of \cite{xie2018analysis}, there exists $ \underline{\mathtt{H}} > 0 $ almost surely such that 
	\begin{align}\label{ineq:underline_H}
		\sum_{t=k-np+1}^{k} \Phi_t
		= \sum_{t=k-np+1}^{k} \mathbb{H}_t + \underline{\mathtt{g}} \lambda_2(\mathcal{L})p \left( I_N - J_N \right) \otimes I_n
		\geq\underline{\mathtt{H}}. 
	\end{align}

	By \eqref{eq:minus_Theta}, 
	one can get
	\begin{align}\label{eq:minus_z*Theta*Phi*Theta}
		 & \sum_{t=npr+1}^{npr+np} z_t \tilde{\Theta}_{pr+p}\tr \Phi_t \tilde{\Theta}_{npr+np} - \sum_{t=npr+1}^{npr+np} z_t \tilde{\Theta}_t\tr \Phi_t \tilde{\Theta}_t  \\
		 = & \sum_{t=npr+1}^{npr+np} z_t \sum_{l=t+1}^{npr+np} \left( \tilde{\Theta}_l\tr \Phi_t \tilde{\Theta}_l - \tilde{\Theta}_{l-1}\tr \Phi_t \tilde{\Theta}_{l-1} \right) \cr
		 = &  \sum_{t=npr+1}^{npr+np} z_t \sum_{l=t+1}^{npr+np} \left( 2 \mathtt{W}_l\tr \Phi_t \tilde{\Theta}_{l-1} + \mathtt{W}_l\tr \Phi_t \mathtt{W}_l  \right) \nonumber\\
		 & + O\left( \sum_{t=npr+1}^{npr+np} z_t \left(\sum_{l=t+1}^{npr+np} \sum_{i=1}^{N} \beta_{i,l} + \sum_{l=t+1}^{npr+np} \sum_{(i,j)\in\mathcal{E}} \alpha_{ij,l}\right) \right) \cr
		 & + \sum_{t=npr+1}^{npr+np} 2z_t \left(\sum_{l=t+1}^{npr+np} \sum_{i=1}^{N} \beta_{i,l} \mathtt{W}_l\tr \Phi_t \mathtt{P}_l +  \sum_{l=t+1}^{npr+np} \sum_{(i,j)\in\mathcal{E}} \alpha_{ij,l} \mathtt{W}_l\tr \Phi_t \mathtt{P}_l\right),\ \as \nonumber
	\end{align}
	By $ \sum_{k=1}^{\infty} \alpha^2_{ij,k} < \infty $ and $ \sum_{k=1}^{\infty} \beta^2_{i,k} < \infty  $, we have
	\begin{align*}
		\sum_{r=1}^{\infty} \sum_{t=npr+1}^{npr+np} z_t \left(\sum_{l=t+1}^{npr+np} \sum_{i=1}^{N} \beta_{i,l} +  \sum_{l=t+1}^{npr+np} \sum_{(i,j)\in\mathcal{E}} \alpha_{ij,l}\right) < \infty. 
	\end{align*}
	By Theorem 1.3.10 of \cite{GuoL2020}, one can get
	\begin{gather*}
		\sum_{r=1}^{\infty} \sum_{t=npr+1}^{npr+np} \sum_{l=t+1}^{npr+np} 2 z_t  \mathtt{W}_l\tr \Phi_t \tilde{\Theta}_{l-1} < \infty,\ \as, \\
		\sum_{r=1}^{\infty} \sum_{t=npr+1}^{npr+np} \sum_{l=t+1}^{npr+np} 2 z_t \left(\sum_{i=1}^{N} \beta_{i,l} \mathtt{W}_l\tr \Phi_t \mathtt{P}_l + \sum_{(i,j)\in\mathcal{E}} \alpha_{ij,l} \mathtt{W}_l\tr \Phi_t \mathtt{P}_l\right) < \infty, \ \as
	\end{gather*}
	\MODIFY{By Theorem 1.3.9 of \cite{GuoL2020} with $ \alpha = 1 $, we have
	\begin{align*}
		\sum_{r=1}^{\infty} \sum_{t=npr+1}^{npr+np} \sum_{l=t+1}^{npr+np} z_t \mathbb{E} \Absl{\mathtt{W}_l}^2 \cdot 
		\frac{1}{\mathbb{E} \Absl{\mathtt{W}_l}^2} \left(  \mathtt{W}_l\tr \Phi_t \mathtt{W}_l - \mE\left[ \mathtt{W}_l\tr \Phi_t \mathtt{W}_l \middle| \mathcal{F}_{l-1} \right] \right) < \infty,\ \as
	\end{align*}}
	Besides, $ \mE\left[ \mathtt{W}_l\tr \Phi_t \mathtt{W}_l \middle| \mathcal{F}_{l-1} \right] = O\left(\! \left( \sum_{i=1}^{N} \beta_{i,l} + \sum_{(i,j)\in\mathcal{E}} \alpha_{ij,l}  \right)^2 \right) $ almost surely. Then, 
	\begin{align*}
		\sum_{r=1}^{\infty} \sum_{t=npr+1}^{npr+np} z_t \sum_{l=t+1}^{npr+np} \mE\left[ \mathtt{W}_l\tr \Phi_t \mathtt{W}_l \middle| \mathcal{F}_{l-1} \right] < \infty,\ \as
	\end{align*}
	Therefore by \eqref{ineq:sum_H+L<inf}, \eqref{ineq:z*Theta*Phi*Theta}-\eqref{eq:minus_z*Theta*Phi*Theta}, we have
	\begin{align*}
		& \underline{\mathtt{H}} \sum_{r=1}^{\infty} \left(\min_{npr+1 \leq t \leq npr+np} z_t\right) \Absl{\tilde{\Theta}_{npr+np}}^2 \\
		\leq & \sum_{r=1}^{\infty} \left(\min_{npr+1 \leq t \leq npr+np} z_t\right) \tilde{\Theta}_{npr+np}\tr \left( \sum_{t=npr+1}^{npr+np}\Phi_t\right) \tilde{\Theta}_{npr+np} \\
		\leq & \sum_{r=1}^{\infty} \sum_{t=npr+1}^{npr+np} z_t \tilde{\Theta}_{npr+np}\tr \Phi_t \tilde{\Theta}_{npr+np} 
		= \sum_{k=1}^{\infty} z_k \tilde{\Theta}_{k}\tr \Phi_k \tilde{\Theta}_{k} + O(1) \\
		\leq & \sum_{k=1}^{\infty} \left(\sum_{i=1}^{N}  \beta_{i,k} \left(H_{i,k} \tilde{\uptheta}_{i,k-1}\right)^2 + \tilde{\mathtt{X}}_k\tr \mathtt{L}_{G,k} \tilde{\mathtt{X}}_k\right) + O(1 )
		 < \infty,\ \as
	\end{align*}
	Then, by \cref{lemma:sum_min_z} in \cref{appen}, there exist $ \mathtt{k}_1 < \mathtt{k}_2 <\cdots $  such that $ \lim\limits_{t\to\infty}\Absl{\tilde{\Theta}_{\mathtt{k}_t}}^2 $  $= 0 $ almost surely. Note that $ \sum_{i=1}^{N} \Absl{\tilde{\uptheta}_{i,k}}^2 = \Absl{\tilde{\Theta}_{k}}^2 $ converges to a finite value. Then, the value is $ 0 $, which proves the theorem. 
\end{proof}

\MODIFY{
	\begin{remark}
		The estimates of \cref{algo:DE} can converge to the true value because the algorithm is designed by using the idea of stochastic approximation \cite{ChenHF2003SA}. In \cref{algo:DE}, $ \hat{\mathtt{s}}_{ji,k} - G_{ij,k} (\mathtt{x}_{i,k}) = G_{ij,k} (\mathtt{x}_{j,k}) - G_{ij,k} (\mathtt{x}_{i,k}) + \hat{\mathtt{s}}_{ji,k} - G_{ij,k} (\mathtt{x}_{j,k}) $ and $ \mathtt{y}_{i,k}-H_{i,k} \hat{\uptheta}_{i,k-1} = -H_{i,k} \tilde{\uptheta}_{i,k-1} + \mathtt{w}_{i,k} $, where $ \hat{\mathtt{s}}_{ji,k} - G_{ij,k} (\mathtt{x}_{j,k}) $ and $ \mathtt{w}_{i,k} $ are martingale difference with bounded variance, and  
		\begin{gather*}
			G_{ij,k} (\varphi_k\tr \hat{\theta}_{j}) - G_{ij,k} (\varphi_k\tr \hat{\theta}_{i}) = 0,\ \forall (i,j)\in\mathcal{E}, k\in\mathbb{N}; \ 
			H_{i,k} (\hat{\theta}_{i}-\theta) = 0, \forall i\in\mathcal{V}, k\in\mathbb{N}
		\end{gather*}
		holds if and only if $ \hat{\theta}_{i} = \theta $ for all $ i $. Besides, under i) and ii) of \cref{thm:as_conv}, the step-sizes converge to $ 0 $. These algorithm characteristics based on stochastic approximation enable the estimates to converge to the true value \cite{ChenHF2003SA}.
	\end{remark}
}

\begin{remark}
	If $ \alpha_{ij,k} $ and $ \beta_{i,k} $ are all polynomial, iii) of \cref{thm:as_conv} is equivalent to $ \sum_{k=1}^{\infty} \frac{\alpha_{ij,k}}{k^{\nu_{ij}}} = \infty $ for all $ (i,j) \in \mathcal{E} $ and $ \sum_{k=1}^{\infty} \beta_{i,k} = \infty $ for all $ i \in \mathcal{V} $. Under this case, the step-sizes can be designed in a distributed manner. 
\end{remark}

\begin{remark}
	Note that 
	$ 2\sum_{t=1}^{k} \frac{\alpha_{ij,k}}{k^{\nu_{ij}}} \leq \sum_{t=1}^{k} \alpha_{ij,k}^2 + \sum_{t=1}^{k} \frac{1}{k^{2\nu_{ij}}} $. 
	Then, the conditions i) and iii) imply $ \nu_{ij} \leq \frac{1}{2} $. Especially, if $ \alpha_{ij,k} $ is polynomial, then $ \nu_{ij} < \frac{1}{2} $. 
\end{remark}

The following theorem proves the mean square convergence of \cref{algo:DE}. 

\begin{theorem}\label{thm:ms_conv}
	Under the condition of \cref{thm:as_conv}, the estimate $ \hat{\uptheta}_{i,k} $ in \cref{algo:DE} converges to the true value $ \theta $ in the mean square sense. 
\end{theorem}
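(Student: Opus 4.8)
The plan is to upgrade the almost sure convergence of \cref{thm:as_conv} to mean square convergence by proving uniform integrability of the family $\{\Absl{\tilde\Theta_k}^2\}_{k\geq1}$. Since \cref{assum:noise} supplies a conditional moment of order $\rho>2$, it suffices to establish the uniform higher-moment bound $\sup_k \mE\Absl{\tilde\Theta_k}^\rho<\infty$: as $\rho/2>1$, this bound renders $\{\Absl{\tilde\Theta_k}^2\}$ uniformly integrable, and combined with $\tilde\Theta_k\to0$ almost surely it yields $\mE\Absl{\tilde\Theta_k}^2\to0$ by the Vitali convergence theorem. Thus the whole argument reduces to the moment bound.

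To obtain the moment bound I would work directly from the recursion \eqref{eq:recur_Theta1}, written as $\tilde\Theta_k = A_k\tilde\Theta_{k-1}+\mathtt{W}_k$ with $A_k:=I_{N\times n}-\mathbb{H}_{\beta,k}-\mathtt{L}_{G,k}\otimes\varphi_k\varphi_k\tr$. The first step is to show $\Absl{A_k}\leq1$ for all sufficiently large $k$. Both $\mathbb{H}_{\beta,k}$ and $\mathtt{L}_{G,k}\otimes\varphi_k\varphi_k\tr$ are positive semidefinite, while the deterministic upper bounds $\Absl{H_{i,k}}\leq\bar H$ and $g_{ij,k}(\cdot)\leq 1/b_{ij}$ (the latter from the Laplacian density), together with $\alpha_{ij,k},\beta_{i,k}\to0$, force $\Absl{\mathbb{H}_{\beta,k}}\to0$ and $\Absl{\mathtt{L}_{G,k}\otimes\varphi_k\varphi_k\tr}\to0$; hence the eigenvalues of $A_k$ eventually lie in $[1-o(1),1]$. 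It is essential that this step uses only the \emph{upper} bound on $g_{ij,k}$, so that the random constant $\underline{\mathtt{g}}$ of \eqref{eq:def_udg} never enters. The second step records that $\mathtt{W}_k$ is a martingale difference whose moments satisfy $\mE\Absl{\mathtt{W}_k}^\rho = O(\sum_i\beta_{i,k}^\rho+\sum_{(i,j)\in\mathcal{E}}\alpha_{ij,k}^\rho)$, by \cref{assum:noise} for the innovation part and the boundedness of $\hat{\mathtt{s}}_{ji,k}-G_{ji,k}(\mathtt{x}_{j,k})$ for the consensus part; because $\rho>2$ and the step-sizes vanish and are square-summable, $\alpha_{ij,k}^\rho\leq\alpha_{ij,k}^2$ and $\beta_{i,k}^\rho\leq\beta_{i,k}^2$ eventually, so $\sum_k\mE\Absl{\mathtt{W}_k}^\rho<\infty$.

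The core and hardest step is then a self-bounding martingale estimate. For $k$ beyond the threshold where $\Absl{A_k}\leq1$, expanding the square gives $\Absl{\tilde\Theta_k}^2\leq\Absl{\tilde\Theta_{k-1}}^2+2\tilde\Theta_{k-1}\tr A_k\tr\mathtt{W}_k+\Absl{\mathtt{W}_k}^2$, so that $\Absl{\tilde\Theta_k}^2\leq\Absl{\tilde\Theta_{k_0}}^2+m_k+\sum_t\Absl{\mathtt{W}_t}^2$, where $m_k=\sum_t 2\tilde\Theta_{t-1}\tr A_t\tr\mathtt{W}_t$ is a martingale. Writing $V^*_k=\max_{t\leq k}\Absl{\tilde\Theta_t}^2$, Minkowski's inequality in $L^{\rho/2}$ bounds $\mE(\sum_t\Absl{\mathtt{W}_t}^2)^{\rho/2}$ uniformly in $k$ by $(\sum_t(\beta_t^2+\alpha_t^2))^{\rho/2}$, while Doob's maximal inequality and the Burkholder--Davis--Gundy inequality (both available since $\rho/2>1$) give $\mE\max_{t\leq k}\absl{m_t}^{\rho/2}\leq C\,\mE\big(\sum_t(2\tilde\Theta_{t-1}\tr A_t\tr\mathtt{W}_t)^2\big)^{\rho/4}$ with $\sum_t(2\tilde\Theta_{t-1}\tr A_t\tr\mathtt{W}_t)^2\leq 4V^*_{k-1}\sum_t\Absl{\mathtt{W}_t}^2$. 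Applying Cauchy--Schwarz then produces a recursion of the form $a_k\leq C+C'\sqrt{a_{k-1}}$ for $a_k=\mE(V^*_k)^{\rho/2}$; since $V^*_k$ is nondecreasing this self-improves to $a_k\leq C+C'\sqrt{a_k}$, whose solution is bounded uniformly in $k$, and therefore $\sup_k\mE\Absl{\tilde\Theta_k}^\rho\leq\sup_k a_k<\infty$. I expect this self-bounding estimate to be the main obstacle, since it must yield a \emph{uniform} moment bound even though the step-sizes are not summable; the resolution is precisely to exploit the non-expansiveness $\Absl{A_k}\leq1$ rather than the strict windowed contraction of \cref{thm:as_conv}, coupled with the summability of the $\rho$-th noise moments. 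Once the moment bound is secured, uniform integrability of $\{\Absl{\tilde\Theta_k}^2\}$ and the Vitali theorem complete the proof.
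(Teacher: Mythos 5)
Your proposal is correct, and it reaches the goal by a genuinely different route than the paper. Both proofs share the same skeleton: reduce mean square convergence to uniform integrability of $\Absl{\tilde\Theta_k}^2$ (the paper invokes Theorem 2.6.4 of Shiryaev, you invoke Vitali --- same mechanism), and both hinge on the observation that $\mathtt{A}_k = I_{N\times n}-\mathbb{H}_{\beta,k}-\mathtt{L}_{G,k}\otimes\varphi_k\varphi_k\tr$ is non-expansive for all large $k$, using only the deterministic upper bounds on $\Absl{H_{i,k}}$ and $g_{ij,k}$. Where you diverge is in how uniform integrability is certified. The paper uses the de la Vall\'ee Poussin criterion with the barely-superlinear function $x\ln(1+x)$: it propagates $\mE\Absl{\tilde\Theta_k}^2\ln(1+\Absl{\tilde\Theta_k}^2)$ through the recursion via the elementary logarithm inequalities of \cref{lemma:ln}, accumulating only summable errors of order $\mE\Absl{\tilde\Theta_{k-1}}^2\,\mE\Absl{\mathtt{W}_k}^2$ and $\mE\Absl{\mathtt{W}_k}^{\min\{\rho,4\}}$. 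You instead prove the stronger statement $\sup_k\mE\Absl{\tilde\Theta_k}^\rho<\infty$ via Doob/Burkholder--Davis--Gundy applied to the martingale $m_k=\sum_t 2\tilde\Theta_{t-1}\tr\mathtt{A}_t\tr\mathtt{W}_t$, closing the estimate with the self-bounding inequality $a_k\le C+C'\sqrt{a_k}$ for $a_k=\mE(V_k^*)^{\rho/2}$. Your route uses heavier machinery but is structurally cleaner and yields more (a uniform $L^\rho$ bound on the running maximum, hence $L^r$ convergence for all $r<\rho$ without further work); the paper's route is more elementary and extracts the minimal amount of extra integrability needed. Two small points you should make explicit to make the self-bounding step airtight: (i) the threshold $k_0$ beyond which $\Absl{\mathtt{A}_k}\le1$ is deterministic, so $\mE(V_{k_0}^*)^{\rho/2}<\infty$ follows from finitely many applications of the recursion with the uniformly bounded $\mathtt{A}_t$ and $\mE\Absl{\mathtt{W}_t}^\rho<\infty$; and (ii) the a priori finiteness of each $a_k$ (by the same finite induction) is needed before the inequality $a_k\le C+C'\sqrt{a_k}$ can be solved to give a uniform bound. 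Neither is a gap, just bookkeeping.
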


\begin{proof}
	Since we have proved the almost sure convergence of \cref{algo:DE}, by Theorem 2.6.4 of \cite{Shiryaev}, it suffices to prove the uniform integrability of the algorithm. Here, we continue to use the notations of $ \mathtt{L}_{G,k} $, 
	$ \tilde{\Theta}_{k} $, $ \mathbb{H}_{\beta,k} $, and $ \mathtt{W}_k $ in the proof of \cref{thm:as_conv}. 
	

	\MODIFY{
	Denote $ \mathtt{A}_k = I_{N\times n} - \mathbb{H}_{\beta,k} - \mathtt{L}_{G,k}\otimes \varphi_k\varphi_k\tr $. When $ k $ is sufficiently large, $ \Abs{\mathtt{A}_k} \leq 1 $. Then, by \eqref{eq:recur_Theta1}, 
	\begin{align}\label{ineq:Aln1}
		& \mE \Absl{\tilde{\Theta}_k}^2 \ln \left( 1 + \Absl{\tilde{\Theta}_k}^2\right) \\
		\leq \ \ \ \ \ \ &\!\!\!\!\!\!\!  \mE \left( \Absl{\tilde{\Theta}_{k-1}}^2 + 2 \mathtt{W}_k\tr \mathtt{A}_k \tilde{\Theta}_{k-1} + \Absl{\mathtt{W}_k}^2 \right) \ln \left( 1 + \Absl{\tilde{\Theta}_{k-1}}^2 + 2 \mathtt{W}_k\tr \mathtt{A}_k \tilde{\Theta}_{k-1} + \Absl{\mathtt{W}_k}^2 \right). \nonumber 
	\end{align}
	By b) of \cref{lemma:ln} in \cref{appen}, 
	\begin{align}\label{ineq:Aln2}
		& \mE \Absl{\tilde{\Theta}_{k-1}}^2  \ln \left( 1 + \Absl{\tilde{\Theta}_{k-1}}^2 + 2 \mathtt{W}_k\tr \mathtt{A}_k \tilde{\Theta}_{k-1} + \Absl{\mathtt{W}_k}^2 \right) \\
		\leq &  \mE \Absl{\tilde{\Theta}_{k-1}}^2  \ln \left( 1 + \Absl{\tilde{\Theta}_{k-1}}^2 \right) + \mE \frac{\Absl{\tilde{\Theta}_{k-1}}^2}{1+\Absl{\tilde{\Theta}_{k-1}}^2}\left( 2 \mathtt{W}_k\tr \mathtt{A}_k \tilde{\Theta}_{k-1} + \Absl{\mathtt{W}_k}^2 \right) \nonumber\\
		\leq & \mE \Absl{\tilde{\Theta}_{k-1}}^2  \ln \left( 1 + \Absl{\tilde{\Theta}_{k-1}}^2 \right) + \mE \Absl{\tilde{\Theta}_{k-1}}^2 \mE \Absl{\mathtt{W}_k}^2. \nonumber
	\end{align}
	By a), c) and d) of \cref{lemma:ln} in \cref{appen}, 
	\begin{align} \label{ineq:Aln3}
		& \mE 2 \mathtt{W}_k\tr \mathtt{A}_k \tilde{\Theta}_{k-1} \ln \left( 1 + \Absl{\tilde{\Theta}_{k-1}}^2 + 2 \mathtt{W}_k\tr \mathtt{A}_k \tilde{\Theta}_{k-1} + \Absl{\mathtt{W}_k}^2 \right) \\
		\leq & \mE 2 \mathtt{W}_k\tr \mathtt{A}_k \tilde{\Theta}_{k-1} \ln \left( 1 + \Absl{\tilde{\Theta}_{k-1}}^2 + \Absl{\mathtt{W}_k}^2 \right) + \mE \left( 2 \mathtt{W}_k\tr \mathtt{A}_k \tilde{\Theta}_{k-1} \right)^2 \nonumber\\
		\leq & \mE 2 \mathtt{W}_k\tr \mathtt{A}_k \tilde{\Theta}_{k-1} \ln \left( 1 + \Absl{\tilde{\Theta}_{k-1}}^2 \right)  + 4 \mE \Absl{\mathtt{W}_k}^2 \mE \Absl{\tilde{\Theta}_{k-1}}^2 \nonumber\\
		& + \mE 2 \absl{\mathtt{W}_k\tr \mathtt{A}_k \tilde{\Theta}_{k-1}} \left( \ln \left( 1 + \Absl{\tilde{\Theta}_{k-1}}^2 + \Absl{\mathtt{W}_k}^2 \right) - \ln \left( 1 + \Absl{\tilde{\Theta}_{k-1}}^2 \right) \right) \nonumber\\
		\leq &  \mE 2 \Absl{\mathtt{W}_k} \Absl{\tilde{\Theta}_{k-1}} \ln \left( 1 + \Absl{\mathtt{W}_k}^2 \right) + 4 \mE \Absl{\tilde{\Theta}_{k-1}}^2 \mE \Absl{\mathtt{W}_k}^2  \nonumber\\
		\leq & O\left(\mE \Absl{\tilde{\Theta}_{k-1}} \mE \Absl{\mathtt{W}_k}^2   \right) + 4 \mE \Absl{\tilde{\Theta}_{k-1}}^2 \mE \Absl{\mathtt{W}_k}^2. \nonumber
	\end{align}
	By a) and d) of \cref{lemma:ln} in \cref{appen},  
	\begin{align}\label{ineq:Aln4}
		& \mE \Absl{\mathtt{W}_k}^2 \ln \left( 1 + \Absl{\tilde{\Theta}_{k-1}}^2 + 2 \mathtt{W}_k\tr \mathtt{A}_k \tilde{\Theta}_{k-1} + \Absl{\mathtt{W}_k}^2 \right) \\
		\leq & \mE \Absl{\mathtt{W}_k}^2 \ln \left( 1 + 2 \Absl{\tilde{\Theta}_{k-1}}^2 + 2 \Absl{\mathtt{W}_k}^2 \right) \nonumber\\
		\leq & \mE \Absl{\mathtt{W}_k}^2 \ln \left( 1 + 2 \Absl{\tilde{\Theta}_{k-1}}^2 \right) + \mE \Absl{\mathtt{W}_k}^2 \ln \left( 1 + 2 \Absl{\mathtt{W}_k}^2 \right) \nonumber\\
		\leq & 2 \mE\Absl{\tilde{\Theta}_{k-1}}^2  \mE \Absl{\mathtt{W}_k}^2 + O\left( \mE \Absl{\mathtt{W}_k}^{\min\{\rho,4\}} \right), \nonumber
	\end{align}
	where $ \rho $ is given in \cref{assum:noise}.}
%
	Taken the expectation over \eqref{eq:CondE_sumtheta}, we have
	$ \mE \Absl{\tilde{\Theta}_k}^2 $ is uniformly bounded. By Lyapunov inequality \cite{Shiryaev}, one can get $ \mE \Absl{\tilde{\Theta}_k} $ is also uniformly bounded. 
	\MODIFY{Besides, $ \mE \Absl{\mathtt{W}_k}^2 = O\left( \left( \sum_{i=1}^{N} \beta_{i,k}^2  + \sum_{(i,j)\in\mathcal{E}} \alpha_{ij,k}^2 \right) \right) $, and $ \mE \Absl{\mathtt{W}_k}^{\min\{\rho,4\}} = O\left( \left( \sum_{i=1}^{N} \beta_{i,k}^{\min\{\rho,4\}}  + \sum_{(i,j)\in\mathcal{E}} \alpha_{ij,k}^{\min\{\rho,4\}} \right) \right) $.
	Hence, \eqref{ineq:Aln1}-\eqref{ineq:Aln4} imply that $ \mE\Absl{\tilde{\Theta}_{k}}^2 \ln \left( 1 + \Absl{\tilde{\Theta}_{k}}^2 \right) $ is uniformly bounded. 	
	Note that
	\begin{align*}
		& \lim_{x\to\infty} \sup_{k\in\mathbb{N}} \int_{\{ \Absl{\tilde{\Theta}_{k}}^2 > x \}}  \Absl{\tilde{\Theta}_{k}}^2 \text{d} \mP \\
		\leq & \lim_{x\to\infty} \sup_{k\in\mathbb{N}} \frac{1}{\ln (1+x)} \int_{\{ \Absl{\tilde{\Theta}_{k}}^2 > x \}} \Absl{\tilde{\Theta}_{k}}^2 \ln \left( 1 + \Absl{\tilde{\Theta}_{k}}^2 \right) \text{d} \mP \\
		\leq & \lim_{x\to\infty}  \sup_{k\in\mathbb{N}} \frac{1}{\ln (1+x)} \mE \Absl{\tilde{\Theta}_{k}}^2 \ln \left( 1 + \Absl{\tilde{\Theta}_{k}}^2 \right)
		= 0. 
	\end{align*}}
	Then, $ \Absl{\tilde{\Theta}_{k}}^2 $ is uniformly integrable. Hence, the theorem can be proved by Theorem 2.6.4 of \cite{Shiryaev} and \cref{thm:as_conv}. 
%
\end{proof}


\begin{remark}
	If \eqref{condi:w_rho} holds for any $ \rho > 0 $, then similar to \cref{thm:ms_conv},  we can prove the $ L^r $ convergence of \cref{algo:DE} for any positive integer $ r $. 
\end{remark}

\begin{remark}
	Under finite data rate, existing literature \cite{sayin2014compressive,xie2013LMS} focuses on the mean square stability in terms of effectiveness, and gives the upper bounds of the mean square estimation errors for corresponding algorithms. There are two important breakthroughs in \cref{thm:as_conv,thm:ms_conv}. Firstly, \cref{thm:ms_conv} shows that our algorithm can not only achieve mean square stability, but also can achieve mean square convergence. The mean square estimation errors of our algorithm can converge to zero. Secondly, \cref{thm:as_conv} shows that the estimates of our algorithm can converge not only in the mean square sense, but also in the almost sure sense. The almost sure convergence property can better describe the characteristics of a single trajectory. When using our algorithm, there is no need to worry about the small probability event that the estimation errors do not converge to zero, as it will not occur almost surely.
\end{remark}

\subsection{Convergence rate}\label{ssec:conv_rate}

To quantitatively demonstrate the effectiveness, the following theorem calculates the almost sure convergence rate of \cref{algo:DE}. 

\begin{theorem}\label{thm:conv_rate}
	In \cref{algo:DE}, set $ \alpha_{ij,k} = \frac{\alpha_{ij,1}}{k^{\gamma_{ij}}} $ and $ \beta_{i,k} = \frac{\beta_{i,1}}{k} $ with
	\begin{enumerate}[label={\roman*)}]
		\item $ \alpha_{ij,1} = \alpha_{ji,1} > 0 $ for all $ (i,j)\in\mathcal{E} $, and $ \beta_{i,1} > 0 $ for all $ i \in \mathcal{V} $;
		\item $ 1/2 < \gamma_{ij} \leq 1 $ and $ \nu_{ij} + \gamma_{ij} \leq 1 $ for all $ (i,j) \in \mathcal{E} $. 
	\end{enumerate}
	Then, under \cref{assum:input,assum:noise,assum:connect,assum:dither}, the almost sure  convergence rate of the estimation error for the sensor $ i $ is
	\begin{align*}
		\tilde{\uptheta}_{i,k} = 
		\begin{cases}
			O\left( \frac{1}{k^{a}} \right), & \text{if}\ 2h-2a>1;\\
			O\left( \frac{\ln k}{k^{h-1/2}} \right), & \text{if}\ 2h-2a=1;\\
			O\left( \frac{\sqrt{\ln k}}{k^{h-1/2}} \right), & \text{if}\ 2h-2a<1,
		\end{cases}\ \as,
	\end{align*}
	where $ h = \min_{(i,j)\in\mathcal{E}} \left(\frac{\nu_{ij}}{2} + \gamma_{ij}\right) $, $ \lambda_2(\mathcal{L}) $ is defined in \eqref{eq:Laplacian_lower}, $ \mathcal{E}^\prime = \{ (i,j) \in \mathcal{E} : \nu_{ij} + \gamma_{ij} = 1 \} $, and
	\begin{align*}
		\displaystyle{a = \begin{cases}
			\frac{\delta \left(\min_{i\in\mathcal{V}} \beta_{i,1}\right)  }{N}, & \text{if}\ \mathcal{E}^\prime = \varnothing;\\
			\frac{ \delta \lambda_2(\mathcal{L}) \left(\min_{i\in\mathcal{V}} \beta_{i,1}\right) \left(\min_{(i,j)\in\mathcal{E}^\prime}\alpha_{ij,1} \frac{ \exp(-\Absl{\theta}_1/b_{ij})}{b_{ij}}\right) }{2Nn\bar{H}^2\left(\min_{i\in\mathcal{V}} \beta_{i,1}\right) + N \lambda_2(\mathcal{L}) \left(\min_{(i,j)\in\mathcal{E}^\prime} \alpha_{ij,1}\frac{ \exp(-\Absl{\theta}_1/b_{ij})}{b_{ij}}\right)}, & \text{if}\ \mathcal{E}^\prime \neq \varnothing.
		\end{cases}}
	\end{align*} 
\end{theorem}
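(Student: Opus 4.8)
The plan is to run the rate analysis on the linear recursion \eqref{eq:recur_Theta1}, namely $ \tilde{\Theta}_k = (I_{N\times n} - \mathbb{H}_{\beta,k} - \mathtt{L}_{G,k}\otimes \varphi_k\varphi_k\tr)\tilde{\Theta}_{k-1} + \mathtt{W}_k $, established in the proof of \cref{thm:as_conv}, together with the almost sure convergence and uniform boundedness of $ \tilde{\Theta}_k $ already proved there. The rate is dictated by the competition between the drift carried by the gain matrix and the martingale noise $ \mathtt{W}_k $, so the first task is to pin down the precise orders of both. Plugging $ \alpha_{ij,k}=\alpha_{ij,1}k^{-\gamma_{ij}} $ and $ \beta_{i,k}=\beta_{i,1}k^{-1} $ turns every quantity into a power of $ k $, after which the three cases will emerge from a scalar balancing.

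First I would determine the asymptotics of the consensus drift. Writing out the Laplacian density $ f $ of $ Lap(0,1) $ inside $ g_{ij,k} $ and using that $ C_{ij,k}=\nu_{ij}b_{ij}\ln k\to\infty $ while $ \upxi_{ij,k} $ stays bounded (with $ \mathtt{x}_{i,k}\to\varphi_k\tr\theta $ and hence $ \absl{\upxi_{ij,k}}\le\Absl{\theta}_1 $), a triangle-inequality bound on $ e^{-\absl{\upxi_{ij,k}-C_{ij,k}}/b_{ij}} $ yields $ k^{\nu_{ij}}g_{ij,k}(\upxi_{ij,k})\ge b_{ij}^{-1}\exp(-\Absl{\theta}_1/b_{ij}) $ for large $ k $, so the consensus gain obeys $ \alpha_{ij,k}g_{ij,k}(\upxi_{ij,k})\asymp k^{-(\gamma_{ij}+\nu_{ij})} $. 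Thus channels with $ (i,j)\in\mathcal{E}^\prime $ contribute drift at the same $ 1/k $ scale as the innovation term $ \beta_{i,k}H_{i,k}\tr H_{i,k} $, while channels with $ \gamma_{ij}+\nu_{ij}<1 $ contribute a strictly slower-decaying drift in the disagreement directions. In parallel I would sharpen the noise bound of \cref{thm:as_conv}: since the triggering probability satisfies $ \mP(\absl{\mathtt{x}_{i,k}+b_{ij}\mathtt{d}_{i,k}}>C_{ij,k})\asymp k^{-\nu_{ij}} $, the transmitted term obeys $ \mE[\hat{\mathtt{s}}_{ji,k}^2\mid\mathcal{F}_{k-1}]=O(k^{-\nu_{ij}}) $, giving $ \mE[\Absl{\mathtt{W}_k}^2\mid\mathcal{F}_{k-1}]=O(k^{-2h}) $ with $ h=\min_{(i,j)\in\mathcal{E}}(\nu_{ij}/2+\gamma_{ij}) $ (the innovation variance $ O(k^{-2}) $ being dominated since $ h\le1 $). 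This is the step where the exponent $ h $ enters.

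Next I would accumulate the drift over blocks of length $ np $, as in \eqref{ineq:underline_H}, so that the periodic compression $ \varphi_k $ visits every coordinate and the cooperative excitation \eqref{condi:exc} of \cref{assum:input} is activated. On the consensus direction $ \Bone_N\otimes I_n $ the Laplacian part vanishes and only the innovation survives, producing the effective $ 1/k $-coefficient $ \delta(\min_i\beta_{i,1})/N $; on the disagreement directions the $ \mathcal{E}^\prime $-channels add the consensus contribution, and minimizing the Rayleigh quotient of the combined $ 1/k $-gain matrix yields exactly the coefficient $ a $ in the two branches of the statement (the $ \mathcal{E}^\prime=\varnothing $ branch keeping only the innovation value, and the $ \mathcal{E}^\prime\neq\varnothing $ branch giving the stated harmonic-type combination of the innovation constant $ \delta(\min_i\beta_{i,1}) $ and the consensus constant $ \lambda_2(\mathcal{L})\min_{\mathcal{E}^\prime}\alpha_{ij,1}b_{ij}^{-1}\exp(-\Absl{\theta}_1/b_{ij}) $). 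This reduces the mean-square problem to a scalar recursion $ v_k\lesssim(1-2a/k)v_{k-1}+O(k^{-2h}) $ for $ v_k=\mE\Absl{\tilde{\Theta}_k}^2 $, whose homogeneous rate $ k^{-2a} $ and particular rate $ k^{1-2h} $ split the behaviour according to the sign of $ 2h-2a-1 $.

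Finally, to upgrade the mean-square rate to the almost sure rate with the correct logarithmic corrections, I would control the weighted martingale $ \sum_t\Phi_t^{1/2}\mathtt{W}_t $ over the $ np $-blocks via the martingale convergence-rate results of \cite{GuoL2020} already invoked in \cref{thm:as_conv}; the $ \sqrt{\ln k} $ factor in the noise-dominated case $ 2h-2a<1 $ and the $ \ln k $ factor in the critical case $ 2h-2a=1 $ arise from the iterated-logarithm-type bound on this martingale, whereas in the drift-dominated case $ 2h-2a>1 $ the deterministic $ k^{-a} $ rate is preserved. I expect the main obstacle to be two-fold: correctly identifying the effective drift coefficient $ a $ for the \emph{coupled} innovation-and-consensus gain over the $ np $-periodic blocks, since the consensus and disagreement directions contract at different scales and must be balanced simultaneously; and estimating the martingale sharply enough to obtain the almost sure---rather than merely mean-square---logarithmic factors in the last step.
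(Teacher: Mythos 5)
Your proposal follows essentially the same route as the paper: the paper packages your scalar balancing and block-accumulation argument into a standalone result (\cref{lemma:iter_asrate} in \cref{appen}), applied to the recursion \eqref{eq:recur_Theta1} rewritten with gain $ \mathtt{U}_k = k\mathbb{H}_{\beta,k}+k\mathtt{L}_{G,k}\otimes\varphi_k\varphi_k\tr $, the noise order $ O(k^{-2h}) $ obtained exactly as you describe, and the $ np $-block lower bound (via Lemma 5.4 of \cite{xie2018analysis}) yielding precisely your harmonic-type combination for $ a $. The one device you gloss over is a bootstrap: your claim $ \absl{\upxi_{ij,k}}\le\Absl{\theta}_1 $ ``for large $ k $'' follows from almost sure convergence only up to an arbitrary $ \epsilon>0 $, which would degrade the drift constant to $ a-\epsilon $ and hence the rate in the case $ 2h-2a>1 $ to $ O(k^{-(a-\epsilon)}) $; the paper instead first applies \cref{lemma:iter_asrate} with the crude bound \eqref{eq:def_udg} to obtain a preliminary rate $ \tilde{\Theta}_k=O(k^{-\psi}) $ and then uses the Lagrange mean value theorem to get $ g_{ij}(\upxi_{ij,k})\ge e^{-\Absl{\theta}_1/b_{ij}}/(b_{ij}k^{\nu_{ij}})+O(k^{-\nu_{ij}-\psi}) $, so the correction is absorbed and the exact constant $ a $ survives.
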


\begin{proof}
	The key of the proof is to use \cref{lemma:iter_asrate} in \cref{appen}. 
	Here, we continue to use the notations of $ \mathtt{L}_{G,k} $, 
	$ \tilde{\Theta}_{k} $, $ \mathbb{H}_{k} $, $ \mathbb{H}_{\beta,k} $, $ \Phi_k $, and $ \mathtt{W}_k $ in the proof of \cref{thm:as_conv}. 
	Under the step-sizes in this theorem, by \eqref{eq:recur_Theta1}, one can get
	\begin{align}\label{eq:recur_Theta}
		\tilde{\Theta}_k = \left( I_{N\times n} - \frac{1}{k} \left( k \mathbb{H}_{\beta,k} +  k  \mathtt{L}_{G,k}\otimes \varphi_k\varphi_k\tr \right) \right) \tilde{\Theta}_{k-1} + \mathtt{W}_k.
	\end{align}

	Since $ \mE\left[ \left( \hat{\mathtt{s}}_{ji,k} - G_{ji,k}(\mathtt{x}_{j,k}) \right)^2 \middle| \mathcal{F}_{k-1} \right] = O\left( \frac{1}{k^{\nu_{ij}}} \right) $ almost surely, we have
	\begin{align*}
		\mE \left[ \Absl{\mathtt{W}}_k^2 \middle| \mathcal{F}_{k-1} \right] 
		= O\left(\frac{1}{k^2} + \frac{1}{k^{\min_{(i,j)\in\mathcal{E}}(\nu_{ij}+2\gamma_{ij})}} \right)
		= O\left( \frac{1}{k^{2h}} \right),\ \as
	\end{align*}
	Besides by \eqref{eq:def_udg}, one can get 
	\begin{align*}
		\mathtt{L}_{G,k} = O\left(\frac{1}{k^{\min_{(i,j)\in\mathcal{E}} (\nu_{ij}+\gamma_{ij})}}\right),\ \as
	\end{align*}
	Therefore, we have $ k \mathbb{H}_{\beta,k} + k  \mathtt{L}_{G,k}\otimes \varphi_k\varphi_k\tr = O\left( k^{1-\min_{(i,j)\in\mathcal{E}} (\nu_{ij}+\gamma_{ij})} \right) $ almost surely.  
	
	Firstly, we show that $ h \leq \min\left\{ 1, \frac{3+2h-2(1-\min_{(i,j)\in\mathcal{E}}(\nu_{ij}+\gamma_{ij}))}{3} \right\}$. Note that $ h = \min_{(i,j)\in\mathcal{E}} \left(\frac{\nu_{ij}}{2} + \gamma_{ij}\right) \leq \min_{(i,j)\in\mathcal{E}}(\nu_{ij}+\gamma_{ij}) $. Then, one can get $ h \leq 1 $ and 
	\begin{align*}
		h 
		< \frac{1+4h}{4}
		\leq \frac{ 3 + 2 h - 2\left(1 - \min_{(i,j)\in\mathcal{E}} (\nu_{ij} + \gamma_{ij})\right) }{4}.  
	\end{align*}

	Secondly, we estimate the lower bound of $ \frac{1}{np}\sum_{t=k-np+1}^{k}\left( t \mathbb{H}_{\beta,t} +  t \mathtt{L}_{G,t}\otimes \varphi_t\varphi_t\tr \right) $.
	By \eqref{eq:Laplacian_lower} and \eqref{ineq:underline_H}, one can get
	\begin{align*}
		\sum_{t=k-np+1}^{k}\left( t \mathbb{H}_{\beta,t} +  t \mathtt{L}_{G,t}\otimes \varphi_t\varphi_t\tr \right)
		\geq z_1 \sum_{t=k-np+1}^{k} \Phi_t
		\geq \underline{\mathtt{H}} > 0,\ \as,
	\end{align*}
	where $ z_1 = \min\left\{ \alpha_{ij,1}, (i,j)\in\mathcal{E}; \beta_{i,1}, i \in\mathcal{V} \right\} $. Then, by  \cref{lemma:iter_asrate}, $ \tilde{\Theta}_{k} = O\left( \frac{1}{k^\psi} \right) $ for some $ \psi>0 $ almost surely. Hence, by the Lagrange mean value theorem \cite{zorich} and \cref{lemma:inf_kf}, we have $ g_{ij}(\upxi_{ij,k}) - g_{ij}(\varphi_{k}\tr \theta) = O\left( \frac{1}{k^{\nu_{ij}+\psi}} \right) $ almost surely, which implies 
	\begin{equation}\label{ineq:g}
		g_{ij} (\upxi_{ij,k}) 
		\!\geq\! \frac{\exp\left( \frac{-\absl{\varphi_k\tr\theta} -C_{ij,k}}{b_{ij}}\right)}{b_{ij}} + O\left( \frac{1}{k^{\nu_{ij}+\psi}} \right)
		\!\geq\! \frac{e^{-\Absl{\theta}_1/b_{ij}}}{b_{ij}k^{\nu_{ij}}} + O\left( \frac{1}{k^{\nu_{ij}+\psi}} \right),\as 
	\end{equation}
	
	By \cref{assum:input}, \eqref{ineq:g}, and Lemma 5.4 of \cite{xie2018analysis}, it holds that
	\begin{align}\label{eq:sum_P1}
		& \sum_{t=k-np+1}^{k}\left( t \mathbb{H}_{\beta,t} +  t \mathtt{L}_{G,t}\otimes \varphi_t\varphi_t\tr \right) \\
		\geq & \sum_{t=k-np+1}^{k} \left( t \mathbb{H}_{\beta,t} +  R_t \left( I_N - J_N \right) \otimes \varphi_t\varphi_t\tr \right) \nonumber\\
		\geq &  \sum_{t=k-np+1}^{k} \left( \min_{i\in\mathcal{V}} \beta_{i,1}\right) \mathbb{H}_{t} + \left(\min_{k-np+1\leq t \leq k} R_t\right) \left( I_N - J_N \right) \otimes \left( \sum_{t=k-np+1}^{k} \varphi_t\varphi_t\tr\right) \nonumber\\
		= &  \sum_{t=k-np+1}^{k}\left( \left( \min_{i\in\mathcal{V}} \beta_{i,1}\right) \mathbb{H}_{t} +  \frac{1}{n} \left(\min_{k-np+1\leq t \leq k} R_t\right) \left( I_N - J_N \right) \otimes I_n \right) \nonumber\\
		\geq& \frac{np\delta \left(\min_{i\in\mathcal{V}} \beta_{i,1}\right) \left(\min_{k-np+1\leq t \leq k} R_t\right)}{2Nn\bar{H}^2\left(\min_{i\in\mathcal{V}} \beta_{i,1}\right)+ N\left(\min_{k-np+1\leq t \leq k} R_t\right)} I_{Nn},  \nonumber\\
		=& npa I_{Nn} + O\left( \frac{1}{\psi^\prime}\right), \as, \nonumber
	\end{align}
	for some $ \psi^\prime > 0 $, 
	where $ R_k = \left( \min_{(i,j)\in\mathcal{E}} \alpha_{ij,1} \frac{e^{-\Absl{\theta}_1/b_{ij}}}{b_{ij}} k^{1-\nu_{ij}-\gamma_{ij}} \left( 1+O\left(\frac{1}{k^\psi}\right) \right) \right)  \lambda_2(\mathcal{L}) $ and $ J_N $ is defined in \eqref{eq:Laplacian_lower}. 

	Then, by \eqref{eq:recur_Theta} and \cref{lemma:iter_asrate}, we have
	\begin{align*}
		\tilde{\Theta}_k = 
		\begin{cases}
			O\left( \frac{1}{k^{a}} \right), & \text{if}\ 2h-2a>1;\\
			O\left( \frac{\ln k}{k^{h-1/2}} \right), & \text{if}\ 2h-2a=1;\\
			O\left( \frac{\sqrt{\ln k}}{k^{h-1/2}} \right), & \text{if}\ 2h-2a<1,
		\end{cases}\ \as
	\end{align*}
\end{proof}


\begin{remark}
	\MODIFY{Given $ \nu_{ij} $ and $  \gamma_{ij} $, an almost sure convergence rate of $ O\left( \frac{\sqrt{\ln k}}{k^{h-1/2}} \right) $ can be achieved by properly selecting $ \alpha_{ij,1} $, $ \beta_{i,1} $ and $ b_{ij} $. Especially, }
	when $ \nu_{ij} = 0 $, $ \gamma_{ij} = 1 $, and $ a $ is sufficiently large, \cref{algo:DE} can achieve an almost sure convergence rate of $ O( \sqrt{\ln k / k} ) $, which is the best one among existing literature \cite{he2022event,kar2013distributed,ZhangQ2012} even without data rate constraints. 
	For comparison, He et al. \cite{he2022event} and Kar et al. \cite{kar2013distributed} show that their distributed estimation algorithm achieve a almost sure convergence rate of $ o\left( k^{-\tau} \right) $ for some $ \tau \in [0,\frac{1}{2}) $. Zhang and Zhang \cite{ZhangQ2012} prove that $ \frac{1}{k} \sum_{t=1}^{k} \Absl{\tilde{\Theta}_t} = o\left( (b(k)k)^{-1/2} \right) $ almost surely for their algorithm, where $ b(k) $ is the step-size satisfying the stochastic approximation condition $ \sum_{k=0}^{\infty} b(k) = \infty, \sum_{k=0}^{\infty} b^2(k) < \infty $. 
	The theoretical result of \cref{thm:conv_rate} is better than these ones. 
	\MODIFY{
		Our technique can be applied in the almost sure convergence rate analysis of other distributed estimation algorithms. 
		For example, if the step-size $ b(t) $ in the distributed estimation algorithm (3) of \cite{ZhangQ2012} is selected as $ \frac{\beta}{k} $ with sufficiently large $ \beta $, then by \cref{lemma:iter_asrate}, an almost sure convergence rate of $ O( \sqrt{\ln k / k} ) $ can also be achieved.}
\end{remark}

\begin{remark}
	When $ \nu_{ij} < 1 $ for some $ (i,j) \in \mathcal{E} $, we have $ h = \min_{(i,j)\in\mathcal{E}} \left(\frac{\nu_{ij}}{2} + \gamma_{ij}\right)$  $ < 1 $. Therefore, the almost sure convergence rate of $ O( \sqrt{\ln k / k} ) $ cannot be obtained. This is because the communication frequency is reduced. Similar results can be seen in \cite{he2022event}. The trade-off between the convergence rate and the communication cost is discussed in \cref{sec:trade-off}. 
\end{remark}

\section{Communication cost}\label{sec:comm_cost}

This section analyzes the communication cost of \cref{algo:DE} by calculating the average data rates defined in \cref{def:bit}. 

Firstly, the local average data rates of \cref{algo:DE} are calculated. 

\begin{theorem}\label{thm:local_bit_rate}
	Under the condition of \cref{thm:as_conv}, the local average data rate $ \mathtt{B}_{ij}(k) = O\left(\frac{1}{k^{\nu_{ij}}}\right) $ almost surely. Furthermore, if $ \nu_{ij} = 0 $, then $ \mathtt{B}_{ij}(k) = 1 $. And, if $ \nu_{ij} > 0 $ and the step-sizes are set as \cref{thm:conv_rate} and $ a > h - 1/2 $, then 
	\begin{align*}
		\mathtt{B}_{ij}(k) \leq \frac{\exp\left( \Abs{\theta}_1 / b_{ij} \right)}{(1-\nu_{ij}) k^{\nu_{ij}}} + O\left( \frac{\sqrt{\ln k}}{k^{h-1/2+\nu_{ij}}} \right), \ \as 
	\end{align*}
\end{theorem}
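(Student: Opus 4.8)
The plan is to recognize $\upzeta_{ij}(t)$ as the indicator of the triggering event and reduce $\mathtt{B}_{ij}(k)$ to a sum of conditional transmission probabilities plus a martingale remainder. Since the data-transmission step sends one bit exactly when $\absl{\mathtt{x}_{i,t}+b_{ij}\mathtt{d}_{i,t}}>C_{ij,t}$, we have $\upzeta_{ij}(t)\in\{0,1\}$ with $\upzeta_{ij}(t)=1$ on that event. Because $\mathtt{x}_{i,t}=\varphi_t\tr\hat{\uptheta}_{i,t-1}$ is $\mathcal{F}_{t-1}$-measurable while $\mathtt{d}_{i,t}\sim Lap(0,1)$ is independent of $\mathcal{F}_{t-1}$ (\cref{assum:dither}), a direct computation using the Laplacian tail $1-F(y)=\frac12 e^{-y}$ for $y\ge 0$ and the symmetry $F(-y)=1-F(y)$ gives, for every $t$ large enough that $C_{ij,t}\ge\absl{\mathtt{x}_{i,t}}$,
\[ p_{ij}(t):=\mE\left[\upzeta_{ij}(t)\middle|\mathcal{F}_{t-1}\right]=e^{-C_{ij,t}/b_{ij}}\cosh\!\left(\mathtt{x}_{i,t}/b_{ij}\right)=t^{-\nu_{ij}}\cosh\!\left(\mathtt{x}_{i,t}/b_{ij}\right). \]
I then split $\mathtt{B}_{ij}(k)=\frac1k\sum_{t=1}^{k}p_{ij}(t)+\frac1k\sum_{t=1}^{k}\left(\upzeta_{ij}(t)-p_{ij}(t)\right)$, the second sum being a martingale.

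For the bound $\mathtt{B}_{ij}(k)=O(k^{-\nu_{ij}})$ I use only that \cref{thm:as_conv} gives $\hat{\uptheta}_{i,k}\to\theta$ almost surely, so $\mathtt{x}_{i,t}$ is almost surely bounded and $\cosh(\mathtt{x}_{i,t}/b_{ij})\le \mathtt{M}$ for some almost surely finite $\mathtt{M}$; hence $\frac1k\sum_{t=1}^k p_{ij}(t)\le \frac{\mathtt{M}}{k}\sum_{t=1}^k t^{-\nu_{ij}}=O(k^{-\nu_{ij}})$, the finitely many small-$t$ terms where the displayed formula fails contributing only $O(1/k)$. For the remainder, $\upzeta_{ij}(t)\in\{0,1\}$ forces $\mE[(\upzeta_{ij}(t)-p_{ij}(t))^2\mid\mathcal{F}_{t-1}]\le p_{ij}(t)$, so the predictable quadratic variation is $O(k^{1-\nu_{ij}})$ and the strong law for martingales yields $\frac1k\sum_{t=1}^k(\upzeta_{ij}(t)-p_{ij}(t))=o(k^{-(1+\nu_{ij})/2}(\ln k)^{1/2+\epsilon})=o(k^{-\nu_{ij}})$, where the last equality uses $\nu_{ij}\le\frac12$. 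The case $\nu_{ij}=0$ is immediate: then $C_{ij,t}=0$, and $\absl{\mathtt{x}_{i,t}+b_{ij}\mathtt{d}_{i,t}}>0$ almost surely by absolute continuity of $\mathtt{d}_{i,t}$, so $\upzeta_{ij}(t)=1$ almost surely and $\mathtt{B}_{ij}(k)=1$.

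For the sharp estimate I invoke the step-sizes of \cref{thm:conv_rate}: since $a>h-\frac12$ places us in its third branch, $\tilde{\uptheta}_{i,k}=O(\sqrt{\ln k}/k^{h-1/2})$, hence $\tilde{\mathtt{x}}_{i,t}=\varphi_t\tr\tilde{\uptheta}_{i,t-1}=O(\sqrt{\ln t}/t^{h-1/2})\to0$. Writing $\mathtt{x}_{i,t}=\varphi_t\tr\theta+\tilde{\mathtt{x}}_{i,t}$ and using $\cosh(y)\le e^{\absl{y}}$ with $\absl{\varphi_t\tr\theta}\le\Absl{\theta}_1$ gives $\cosh(\mathtt{x}_{i,t}/b_{ij})\le e^{\Absl{\theta}_1/b_{ij}}\big(1+O(\absl{\tilde{\mathtt{x}}_{i,t}})\big)$. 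Summing, the integral estimate $\sum_{t=1}^k t^{-\nu_{ij}}=\frac{k^{1-\nu_{ij}}}{1-\nu_{ij}}+O(1)$ turns the leading part into $\frac{\exp(\Absl{\theta}_1/b_{ij})}{(1-\nu_{ij})k^{\nu_{ij}}}+O(1/k)$, while the correction part is $O\big(\frac1k\sum_{t=1}^k t^{-\nu_{ij}}\sqrt{\ln t}\,t^{-(h-1/2)}\big)=O(\sqrt{\ln k}/k^{\nu_{ij}+h-1/2})$, where I use that the exponent $\nu_{ij}+h-\frac12$ is strictly below $1$. The martingale remainder from the previous paragraph is dominated by this same term, which yields the claimed upper bound.

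The main obstacle is the bookkeeping that forces all three rate estimates to collapse onto the two advertised terms, which hinges on the structural inequality $h\le 1-\frac{\nu_{ij}}{2}$. This follows from $h\le\frac{\nu_{ij}}{2}+\gamma_{ij}$ together with $\nu_{ij}+\gamma_{ij}\le 1$, and it is exactly what guarantees both $\nu_{ij}+h-\frac12\le\frac{\nu_{ij}}2+\frac12<1$ (so that the summed $\cosh$-correction has the stated polynomial order) and $(1+\nu_{ij})/2\ge \nu_{ij}+h-\frac12$ (so that the martingale fluctuation is no larger). The delicate case is the boundary $h=1-\frac{\nu_{ij}}{2}$, where the polynomial orders of the martingale remainder and the error term coincide; there one must ensure the logarithmic factor does not overshoot $\sqrt{\ln k}$, which can be secured by sharpening the crude strong-law bound to a martingale law of the iterated logarithm (legitimate since the increments are bounded and the quadratic variation diverges).
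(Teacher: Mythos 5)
Your proof is correct and follows essentially the same route as the paper's: the same decomposition of $\sum_{t=1}^{k}\upzeta_{ij}(t)$ into the sum of conditional triggering probabilities $t^{-\nu_{ij}}\cosh(\mathtt{x}_{i,t}/b_{ij})$ plus a martingale remainder, the same use of the almost sure boundedness of $\mathtt{x}_{i,t}$ from \cref{thm:as_conv} for the crude $O(k^{-\nu_{ij}})$ bound, and the same use of the rate $\tilde{\uptheta}_{i,k}=O(\sqrt{\ln k}/k^{h-1/2})$ from \cref{thm:conv_rate} for the sharp bound. If anything, you are more careful than the paper about the martingale remainder in the sharp estimate (the paper bounds it only once, with the weight $t^{\nu_{ij}/4}$ in its Theorem 1.3.10 step, and does not revisit it), so your remark that the boundary case $h=1-\nu_{ij}/2$ requires the full law of the iterated logarithm rather than a generic martingale strong law is a genuine, if minor, tightening.
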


\begin{proof}
	If $ \nu_{ij} = 0 $, then $ C_{ij,k} = 0 $. In this case, the sensor $ i $ transmits 1 bit of message to the sensor $ j $ at every moment almost surely, which implies $ \mathtt{B}_{ij}(k) = 1 $ almost surely. Therefore, it suffices to discuss the case of $ \nu_{ij} > 0 $. 
	
	
	By the definition of $ \upzeta_{ij}(k) $, we have $ \upzeta_{ij}(k) $ is $ \mathcal{F}_k $-measurable, and
	\begin{align*}
		\mP\{\upzeta_{ij}(k) = 1 \} =& F\left(\frac{\mathtt{x}_{i,k}-C_{ij,k}}{b_{ij}}\right) + F\left(\frac{-\mathtt{x}_{i,k}-C_{ij,k}}{b_{ij}}\right), \\
		\mP\{\upzeta_{ij}(k) = 0 \} =& 1 - F\left(\frac{\mathtt{x}_{i,k}-C_{ij,k}}{b_{ij}}\right) - F\left(\frac{-\mathtt{x}_{i,k}-C_{ij,k}}{b_{ij}}\right). 
	\end{align*}
	
	Firstly, we estimate $ \sum_{t=1}^{k} \mE \left[\upzeta_{ij}(t) \middle| \mathcal{F}_{t-1} \right] $. 
	By \cref{thm:as_conv}, $ \mathtt{x}_{i,k} = \varphi_k\tr \hat{\uptheta}_{i,k} $ is uniformly bounded almost surely.  
	Therefore, 
	when $ k $ is sufficiently large,
	\begin{align}\label{eq:condi_kappa}
		\mE \left[\upzeta_{ij}(k) \middle| \mathcal{F}_{k-1} \right] 
		= & F\left(\frac{\mathtt{x}_{i,k}-C_{ij,k}}{b_{ij}}\right) + F\left(\frac{-\mathtt{x}_{i,k}-C_{ij,k}}{b_{ij}}\right)	\\
		= & \frac{\exp\left((\mathtt{x}_{i,k}-C_{ij,k})/b_{ij}\right)+\exp\left((-\mathtt{x}_{i,k}-C_{ij,k})/b_{ij}\right)}{2} \nonumber \\
		= & \frac{\exp\left(\mathtt{x}_{i,k}/b_{ij}\right)+\exp\left(-\mathtt{x}_{i,k}/b_{ij}\right)}{2k^{\nu_{ij}}}
		= O\left( \frac{1}{k^{\nu_{ij}}} \right),\ \as \nonumber
	\end{align}
	Hence, $ \mE \left[\upzeta_{ij}(k) \middle| \mathcal{F}_{k-1} \right] = O\left( \frac{1}{k^{\nu_{ij}}} \right) $ for $ \nu_{ij} \geq 0 $ almost surely, which implies
	\begin{align}\label{eq:sum_condi_kappa}
		\sum_{t=1}^{k} \mE \left[\upzeta_{ij}(t) \middle| \mathcal{F}_{t-1} \right] = O\left( k^{1-\nu_{ij}} \right),\ \as 
	\end{align}
	
	Secondly, we estimate $ \sum_{t=1}^{k} \upzeta_{ij}(t)-\mE \left[\upzeta_{ij}(t) \middle| \mathcal{F}_{t-1} \right] $. Since $ \nu_{ij} \leq \frac{1}{2} $ under the condition of \cref{thm:as_conv},  $ 1 - \nu_{ij} > \frac{1}{2} - \frac{\nu_{ij}}{4} $. 
	By $ \mE \left[\upzeta_{ij}(k) \middle| \mathcal{F}_{k-1} \right] = O\left( \frac{1}{k^{\nu_{ij}}} \right) $ almost surely and $ \upzeta_{ij}(k) = 0 $ or $ 1 $, we have
	\begin{align*}
		& \mE \left[ \Absl{\upzeta_{ij}(k)-\mE \left[\upzeta_{ij}(k) \middle| \mathcal{F}_{k-1} \right]}^{4} \middle| \mathcal{F}_{k-1}  \right] \\ 
		\leq & \mE \left[ \left( \upzeta_{ij}(k)-\mE \left[\upzeta_{ij}(k) \middle| \mathcal{F}_{k-1} \right] \right)^2 \middle| \mathcal{F}_{k-1}  \right] \\
		= & \mE \left[ \upzeta_{ij} \middle| \mathcal{F}_{k-1}  \right] - \left( \mE \left[ \upzeta_{ij}(k) \middle| \mathcal{F}_{k-1}  \right] \right)^2 
		= O\left( \frac{1}{k^{\nu_{ij}}} \right),\ \as
	\end{align*}
	Then, by Theorem 1.3.10 of \cite{GuoL2020}, it holds that
	\begin{align}\label{eq:sum_kappa-kappa}
		&\sum_{t=1}^{k} \left(\upzeta_{ij}(t)-\mE \left[\upzeta_{ij}(t) \middle| \mathcal{F}_{t-1} \right]\right) \\
		= & \sum_{t=1}^{k} \frac{1}{t^{\nu_{ij}/4}} \cdot t^{\nu_{ij}/4} \left(\upzeta_{ij}(t)-\mE \left[\upzeta_{ij}(t) \middle| \mathcal{F}_{t-1} \right]\right)
		= O\left( k^{\frac{1}{2}-\frac{\nu_{ij}}{4}}\sqrt{\ln \ln k} \right),\ \as \nonumber
	\end{align}
	
%

	\eqref{eq:sum_condi_kappa} and \eqref{eq:sum_kappa-kappa} imply $ \sum_{t=1}^{k} \upzeta_{ij}(t) = O(k^{1-\nu_{ij}}) $ almost surely. 
	Therefore, $ \mathtt{B}_{ij}(k) = O\left(\frac{1}{k^{\nu_{ij}}}\right) $ almost surely.
	
	If the step-sizes are set as \cref{thm:conv_rate} and $ a > h - 1/2 $, then by \cref{thm:conv_rate}, $ \tilde{\uptheta}_{i,k} = O \left( \frac{\sqrt{\ln k}}{k^{h-1/2}} \right) $ almost surely for all $ i \in \mathcal{V} $. Then, by \eqref{eq:condi_kappa}, we have
	\begin{align*}
		\mE \left[ \upzeta_{ij}(k) \middle| \mathcal{F}_{k-1} \right] \leq \frac{\exp\left( \Abs{\theta}_1 / b_{ij} \right)}{k^{\nu_{ij}}} + O\left( \frac{\sqrt{\ln k}}{k^{h-1/2+\nu_{ij}}} \right), \ \as 
	\end{align*}
	Therefore, one can get
	\begin{align*}
		\mathtt{B}_{ij}(k) \leq \frac{\exp\left( \Abs{\theta}_1 / b_{ij} \right)}{(1-\nu_{ij}) k^{\nu_{ij}}} + O\left( \frac{\sqrt{\ln k}}{k^{h-1/2+\nu_{ij}}} \right),\ \as 
	\end{align*}
\end{proof}

\begin{remark}
	By \cref{thm:local_bit_rate}, the decaying rate of $ \mathtt{B}_{ij}(k) $ only depends on $ \nu_{ij} $.  Therefore, the operators of sensors $ i $ and $ j $ can directly set and easily know the decaying rate of $ \mathtt{B}_{ij}(k) $ before running the algorithm.
\end{remark}

\MODIFY{\begin{remark}
		The noise coefficient $ b_{ij} $ influences the almost sure convergence rate and the average data rate. By \cref{thm:conv_rate}, an almost sure convergence rate of $ O\left( \frac{\sqrt{\ln k}}{k^{h-1/2}} \right) $ can be achieved when $ 2h-2a < 1 $, where $ a $ is a function of $ b_{ij} $. By \cref{thm:local_bit_rate}, the upper bound of $ \mathtt{B}_{ij}(k) $ is monotonically non-increasing with $ b_{ij} $. Therefore, increasing $ b_{ij} $ while maintaining $ 2h - 2a < 1 $ can reduce the communication cost without losing the almost sure convergence rate.
\end{remark}}

Then, we can estimate the global average data rate. 

\begin{theorem}\label{thm:global_bit}
	Under the condition of \cref{thm:as_conv}, the global average data rate
	$ \mathtt{B}(k) = O\left(\frac{1}{k^{\underline{\nu}}}\right) $ almost surely, 
	where $ \underline{\nu} = \min_{(i,j)\in\mathcal{E}} \nu_{ij} $. 
\end{theorem}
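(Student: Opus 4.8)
The plan is to reduce the global average data rate to the already-established local average data rates via the relation noted in the remark following \cref{def:bit}, namely
\begin{align*}
	\mathtt{B}(k) = \frac{\sum_{(i,j)\in\mathcal{E}} \mathtt{B}_{ij}(k)}{2M}.
\end{align*}
Since the edge set $ \mathcal{E} $ is finite, the global rate is just a fixed normalized sum of the finitely many local rates, so it suffices to bound each $ \mathtt{B}_{ij}(k) $ individually and then combine. No new dynamics of \cref{algo:DE} need to be analyzed; everything follows from \cref{thm:local_bit_rate}.

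First I would invoke \cref{thm:local_bit_rate}, which gives $ \mathtt{B}_{ij}(k) = O\left(\frac{1}{k^{\nu_{ij}}}\right) $ almost surely for every $ (i,j)\in\mathcal{E} $. Next, using the definition $ \underline{\nu} = \min_{(i,j)\in\mathcal{E}} \nu_{ij} $, I would observe that $ \nu_{ij} \geq \underline{\nu} $, and hence $ \frac{1}{k^{\nu_{ij}}} \leq \frac{1}{k^{\underline{\nu}}} $ for all $ k \geq 1 $ by monotonicity of $ k^{-\nu} $ in $ \nu $. This upgrades each local bound to $ \mathtt{B}_{ij}(k) = O\left(\frac{1}{k^{\underline{\nu}}}\right) $ almost surely. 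Summing over the $ 2M $ directed edges and dividing by $ 2M $ then yields $ \mathtt{B}(k) = O\left(\frac{1}{k^{\underline{\nu}}}\right) $ almost surely, since a normalized sum of finitely many terms, each of order $ k^{-\underline{\nu}} $, is again of order $ k^{-\underline{\nu}} $.

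The only point requiring care is the almost sure qualifier: each bound $ \mathtt{B}_{ij}(k) = O\left(\frac{1}{k^{\nu_{ij}}}\right) $ holds on an event of probability one that may depend on $ (i,j) $. Because $ \mathcal{E} $ is finite, I would intersect these finitely many probability-one events to obtain a single probability-one event on which all local bounds hold simultaneously, and on that event the global bound follows. There is no substantive analytic obstacle here; the theorem is essentially a bookkeeping consequence of the finiteness of the communication graph together with the monotonicity argument, with the real work having been carried out in \cref{thm:local_bit_rate}.
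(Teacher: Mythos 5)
Your proposal is correct and follows essentially the same route as the paper, which likewise deduces the result from \cref{thm:local_bit_rate} together with the identity $ \mathtt{B}(k) = \frac{\sum_{(i,j)\in\mathcal{E}} \mathtt{B}_{ij}(k)}{2M} $. Your added remarks on the monotonicity $ k^{-\nu_{ij}} \leq k^{-\underline{\nu}} $ and on intersecting the finitely many probability-one events simply make explicit what the paper leaves implicit.
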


\begin{proof}
	The theorem can be proved by \cref{thm:local_bit_rate} and $ \mathtt{B}(k) = \frac{\sum_{(i,j)\in\mathcal{E}} \mathtt{B}_{ij}(k)}{2M} $.	
\end{proof}

\begin{remark}
	If the step-sizes are set as \cref{thm:conv_rate} and $ a > h - 1/2 $, the upper bound of global average data rate $ \mathtt{B}(k) $ can also be obtained by \cref{thm:local_bit_rate} and $ \mathtt{B}(k) = \frac{\sum_{(i,j)\in\mathcal{E}} \mathtt{B}_{ij}(k)}{2M} $.  
\end{remark}

\section{Trade-off between convergence rate and communication cost}\label{sec:trade-off}

In \cref{sec:conv,sec:comm_cost}, we quantitatively demonstrate the effectiveness of \cref{algo:DE} by the almost sure convergence rate and the communication cost by the average data rates. This section establishes the trade-off between the convergence rate and the communication cost.

By \cref{thm:conv_rate}, the convergence rate of \cref{algo:DE} is influenced by the selection of step-sizes $ \alpha_{ij,k} $ and $ \beta_{i,k} $. The following theorem optimizes almost sure convergence rate by properly selecting the step-sizes. 

\begin{theorem}\label{thm:coro_conv_rate}
	In \cref{algo:DE}, set $ \nu_{ij} \in [0,\frac{1}{2}) $. Then, under the condition of \cref{thm:as_conv}, there exist step-sizes $ \alpha_{ij,k} $ and $ \beta_{i,k} $ such that $ \tilde{\uptheta}_{i,k} = O\left(\frac{\sqrt{\ln k}}{k^{1/2- \bar{\nu}/2} }\right) $ almost surely, where $ \bar{\nu} = \max_{(i,j)\in\mathcal{E}} \nu_{ij} $. 
\end{theorem}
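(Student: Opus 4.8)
The plan is to realize the desired rate as a special case of \cref{thm:conv_rate}, choosing the step-size exponents to make $h$ as large as the constraints permit. Concretely, I would set $\gamma_{ij} = 1 - \nu_{ij}$ for every $(i,j)\in\mathcal{E}$. Since $\nu_{ij}\in[0,\frac{1}{2})$, this gives $\gamma_{ij}\in(\frac{1}{2},1]$ and $\nu_{ij}+\gamma_{ij}=1$, so condition ii) of \cref{thm:conv_rate} holds with equality; in particular $\mathcal{E}^\prime = \mathcal{E}\neq\varnothing$. I would also check that these polynomial step-sizes meet the hypotheses of \cref{thm:as_conv}: $\sum_k \alpha_{ij,k}^2<\infty$ follows from $2\gamma_{ij}>1$, $\sum_k\beta_{i,k}^2<\infty$ from $\beta_{i,k}=\beta_{i,1}/k$, and since $\alpha_{ij,k}/k^{\nu_{ij}}=\alpha_{ij,1}/k$ and $\beta_{i,k}=\beta_{i,1}/k$, the divergence condition iii) reduces to $\sum_k c/k=\infty$ with $c=\min\{\alpha_{ij,1},\beta_{i,1}\}>0$.

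With this choice $\frac{\nu_{ij}}{2}+\gamma_{ij} = 1-\frac{\nu_{ij}}{2}$, hence $h = \min_{(i,j)\in\mathcal{E}}\left(1-\frac{\nu_{ij}}{2}\right) = 1 - \frac{\bar{\nu}}{2}$ and $h-\frac{1}{2} = \frac{1}{2}-\frac{\bar{\nu}}{2}$, which is exactly the target exponent. It then remains to land in the third branch of \cref{thm:conv_rate}, i.e.\ to arrange $2h-2a<1$, equivalently $a> h-\frac{1}{2}=\frac{1-\bar{\nu}}{2}$.

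The key step is to show that the constant $a$ in the $\mathcal{E}^\prime\neq\varnothing$ case can be driven above this threshold by scaling the leading coefficients. Writing $A = \min_{(i,j)\in\mathcal{E}^\prime}\alpha_{ij,1}\frac{\exp(-\Absl{\theta}_1/b_{ij})}{b_{ij}}$ and $B = \min_{i\in\mathcal{V}}\beta_{i,1}$, the formula for $a$ rearranges into the reciprocal identity
\begin{align*}
\frac{1}{a} = \frac{2Nn\bar{H}^2}{\delta\lambda_2(\mathcal{L})}\cdot\frac{1}{A} + \frac{N}{\delta}\cdot\frac{1}{B},
\end{align*}
which exhibits $a$ as a harmonic-mean-type quantity. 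For fixed noise coefficients $b_{ij}$ the factor $\exp(-\Absl{\theta}_1/b_{ij})/b_{ij}$ is a positive constant, so enlarging the $\alpha_{ij,1}$ forces $A\to\infty$ and enlarging the $\beta_{i,1}$ forces $B\to\infty$; hence $1/a\to0$ and $a\to\infty$. Choosing $\alpha_{ij,1}=\alpha_{ji,1}$ and $\beta_{i,1}$ large enough that $a>\frac{1-\bar{\nu}}{2}$ places us in the branch $2h-2a<1$, and \cref{thm:conv_rate} then yields $\tilde{\uptheta}_{i,k} = O\left(\frac{\sqrt{\ln k}}{k^{h-1/2}}\right) = O\left(\frac{\sqrt{\ln k}}{k^{1/2-\bar{\nu}/2}}\right)$ almost surely.

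I expect the main obstacle to be the conceptual point that $a$ depends on the unknown parameter through $\Absl{\theta}_1$. This is benign, however: the theorem only asserts the \emph{existence} of suitable step-sizes, and the multiplier $\exp(-\Absl{\theta}_1/b_{ij})$ is a fixed positive number for any finite $\theta$, so enlarging $\alpha_{ij,1}$ still sends $A\to\infty$ and the threshold $a>\frac{1-\bar{\nu}}{2}$ remains attainable without knowledge of $\theta$. Everything else is a direct verification that the proposed exponents satisfy the hypotheses already discharged in \cref{thm:as_conv,thm:conv_rate}.
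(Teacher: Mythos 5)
Your proposal is correct and follows essentially the same route as the paper: choose $\gamma_{ij}=1-\nu_{ij}$ so that $h=1-\bar{\nu}/2$, then take $\alpha_{ij,1}$ and $\beta_{i,1}$ large enough that $a$ exceeds the threshold for the third branch of \cref{thm:conv_rate}. Your added details (the reciprocal identity for $a$, the verification of the hypotheses of \cref{thm:as_conv}, and the remark that the $\theta$-dependence of $a$ is harmless for an existence claim) are all sound and merely make explicit what the paper leaves implicit.
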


\begin{proof}
	Set $ \gamma_{ij} = 1 - \nu_{ij} $. Then, $ h $ in \cref{thm:conv_rate} equals to $ 1 - \bar{\nu}/2 $. Besides, when $ \alpha_{ij,1} $ and $ \beta_{i,1} $ are sufficiently large, $ a $ in \cref{thm:conv_rate} is larger than $ 2h - 1 $. Then, the theorem can be proved by \cref{thm:conv_rate}. 
\end{proof}

\begin{remark}
	The proof of \cref{thm:coro_conv_rate} provides a selection method to optimize the convergence rate of the algorithm. 
\end{remark}

\cref{thm:coro_conv_rate} shows that when properly selecting the step-sizes, the key factor to determine the almost sure convergence rate of \cref{algo:DE} is the event-triggered coefficient $ \nu_{ij} $. The optimal almost sure convergence rate of \cref{algo:DE} gets faster under smaller $ \nu_{ij} $.

On the other hand, \cref{thm:local_bit_rate} shows that $ \nu_{ij} $ is the decaying rate of the local average data rate for the communication channel $ (i,j)\in\mathcal{E} $. \cref{thm:global_bit} shows that  $ \underline{\nu} = \min_{(i,j)\in\mathcal{E}} \nu_{ij} $ is the decaying rate of the global average data rate. 
Therefore, the average data rates of \cref{algo:DE} get smaller under large $ \nu_{ij} $. 

Therefore, there is a trade-off between the convergence rate and the communication cost. The operator of each sensor $ i $ can decrease $ \nu_{ij} $ of the adjacent communication channel $ (i,j)\in\mathcal{E} $ for a better convergence rate, or increase $ \nu_{ij} $ for a lower communication cost.

\section{Simulation}\label{sec:simu}

This section gives a numerical example to illustrate the effectiveness and the average data rates of \cref{algo:DE}. 

Consider a network with $ 8 $ sensors. The communication topology is shown in \cref{fig:graph}. $ a_{ij} = 1 $ if $ (i,j) \in \mathcal{E} $, and $ 0 $, otherwise. For the sensor $ i $, the measurement matrix $ H_{i,k} = 
\begin{bmatrix}
	1 & 0
\end{bmatrix} $ if $ i $ is odd, and  $ 
\begin{bmatrix}
	0 & 1
\end{bmatrix} $ if $ i $ is even. The observation noise $ w_{i,k} $ is i.i.d. Gaussian with zero mean and standard deviation $ 0.1 $. The true value $ \theta = \begin{bmatrix}
	1 & -1
\end{bmatrix}^\top $. 

\begin{figure}[!htbp]
	\centering
	\begin{tikzpicture}[node distance = 0.5cm]
		\node[terminal](Agent 1) at (0,0) {1};
		\node[terminal, right=1.5cm of Agent 1](Agent 2){2};
		\node[terminal, right=1.5cm of Agent 2](Agent 3){3};
		\node[terminal, right=1.5cm of Agent 3](Agent 4){4};
		\node[terminal, below=1.5cm of Agent 4](Agent 5){5};
		\node[terminal, left=1.5cm of Agent 5](Agent 6){6};
		\node[terminal, left=1.5cm of Agent 6](Agent 7){7};
		\node[terminal, left=1.5cm of Agent 7](Agent 8){8};
		
		\draw[thick](Agent 1)--(Agent 2);
		\draw[thick](0.226,-0.226)--(1.914,-1.914);
		\draw[thick](6.1915,-0.226)--(4.5045,-1.914);
		\draw[thick](Agent 2)--(Agent 3);
		\draw[thick](Agent 3)--(Agent 4);
		\draw[thick](Agent 4)--(Agent 5);
		\draw[thick](Agent 5)--(Agent 6);
		\draw[thick](Agent 6)--(Agent 7);
		\draw[thick](Agent 7)--(Agent 8);
		\draw[thick](Agent 8)--(Agent 1);
		\draw[thick](Agent 2)--(Agent 7);
		\draw[thick](Agent 3)--(Agent 6);
	\end{tikzpicture}
	\caption{Communication topology.}\label{fig:graph}
\end{figure}

In \cref{algo:DE}, set $ b_{ij} = \frac{1}{2} $ and $ \nu_{ij} = \frac{1}{4} $. The step-sizes $ \alpha_{ij,k} = \frac{5}{k^{3/4}} $ and $ \beta_{i,k} = \frac{5}{k} $. \cref{fig:sumerror} shows the trajectory of $ \frac{1}{N} \sum_{i=1}^{N} \Absl{\tilde{\uptheta}_{i,k}}^2 $, which demonstrates the convergence of \cref{algo:DE}. 

\begin{figure}[!htbp]
	\centering
	\includegraphics[width=0.6\linewidth]{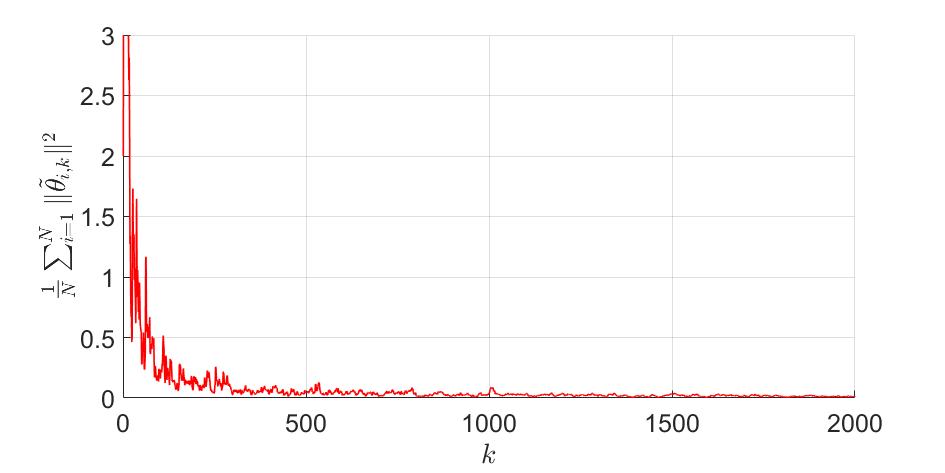}
	\caption{\MODIFY{The trajectory of $ \frac{1}{N} \sum_{i=1}^{N} \Absl{\tilde{\uptheta}_{i,k}}^2 $}}
	\label{fig:sumerror}
\end{figure}

To show the balance between the convergence rate and the communication cost, set $ b_{ij} = \frac{1}{2} $, \MODIFY{$ \nu_{ij} = \nu = 0,\frac{1}{9},\frac{2}{9},\frac{3}{9},\frac{4}{9} $}, and the step-sizes $ \alpha_{ij,k} = \frac{5}{k^{1-\nu}} $ and $ \beta_{i,k} = \frac{5}{k} $. \MODIFY{The simulation is repeated 50 times. Denote $ \tilde{\uptheta}_{i,k}^t $ as the estimation error of the sensor $ i $ at time $ k $ in the $ t $-th run. 
\cref{fig:conv_rate} depicts the log-log plot of $ \frac{1}{N} \sum_{i=1}^{N} \Absl{\tilde{\uptheta}_{i,k}^t}^2 $, which demonstrates that the convergence rate is faster under a smaller $ \nu $.  \cref{fig:bit_rate} shows the log-log plot of $ \mathtt{B}(k) $, which illustrates that the global average data rate is smaller under a larger $ \nu $. \cref{fig:conv_rate,fig:bit_rate} reveal the trade-off between the convergence rate and the data rate.}


\begin{figure}[!htbp]
	\centering
	\begin{minipage}[c]{0.5\textwidth}
		\centering
		\includegraphics[width=0.95\linewidth]{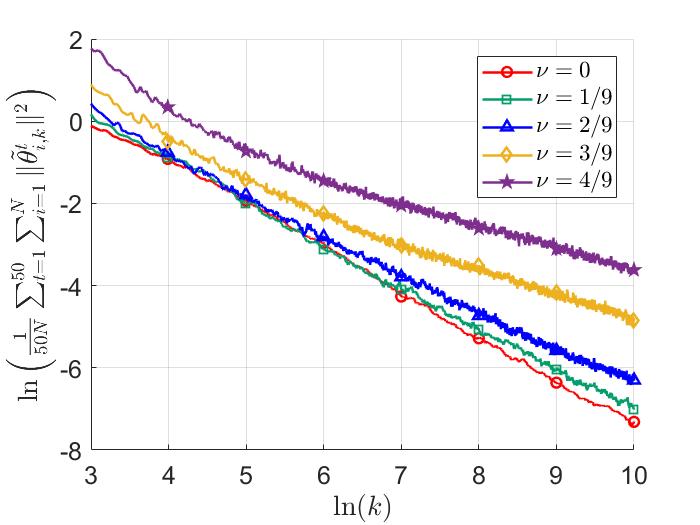}
		\caption{\MODIFY{Convergence rates with different $ \nu $}}
		\label{fig:conv_rate}
	\end{minipage}%
	\begin{minipage}[c]{0.5\textwidth}
		\centering
		\includegraphics[width=0.95\linewidth]{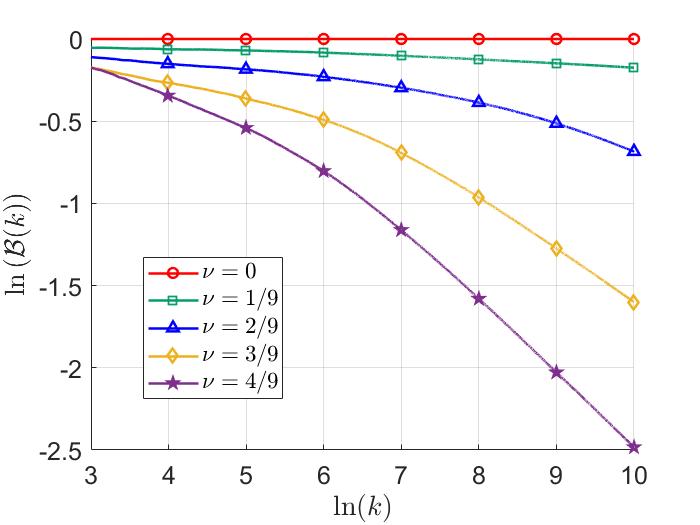}
		\caption{\MODIFY{Average data rates with different $ \nu $}}
		\label{fig:bit_rate}
	\end{minipage}
\end{figure}

\MODIFY{\cref{fig:conv_rate_c,fig:bit_rate_c} compare \cref{algo:DE} with the single bit diffusion algorithm \cite{sayin2014compressive} and the distributed least mean square (LMS) algorithm \cite{xie2013LMS}, which demonstrates that  \cref{algo:DE} can achieve higher estimation accuracy at a lower communication data rate compared to the algorithms in \cite{sayin2014compressive,xie2013LMS}. }

\begin{figure}[!htbp]
	\centering
	\begin{minipage}[c]{1\textwidth}
		\centering
		\includegraphics[width=0.7\linewidth]{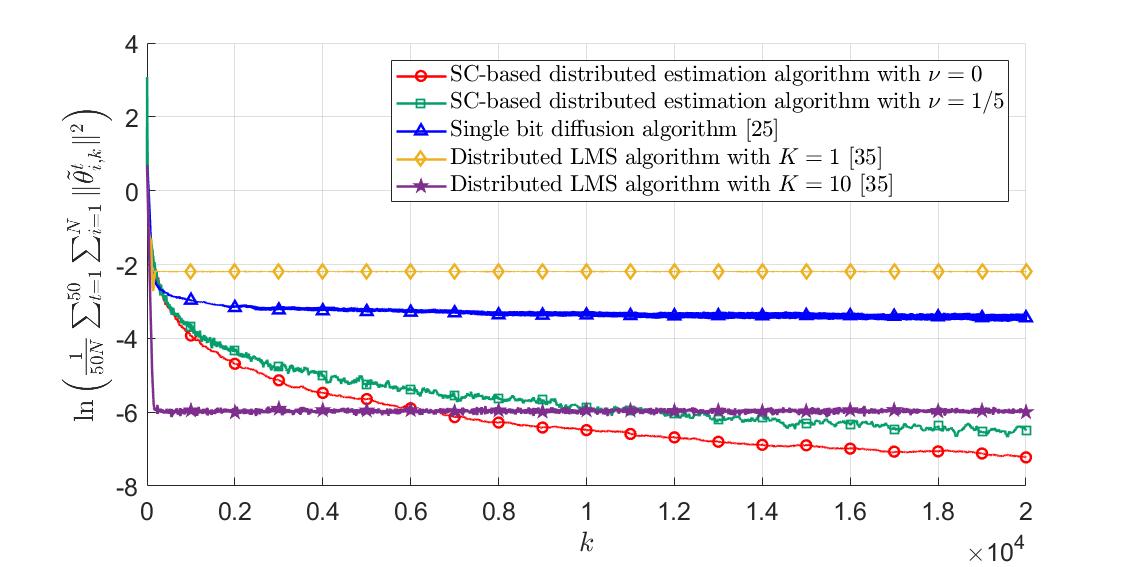}
		\caption{\MODIFY{The trajectories of $ \ln \left( \frac{1}{50N} \sum_{t=1}^{50} \sum_{i=1}^{N} \Absl{\tilde{\uptheta}_{i,k}^t}^2 \right) $ or different algorithms}}
		\label{fig:conv_rate_c}
	\end{minipage}%

	\begin{minipage}[c]{1\textwidth}
		\centering
		\includegraphics[width=0.7\linewidth]{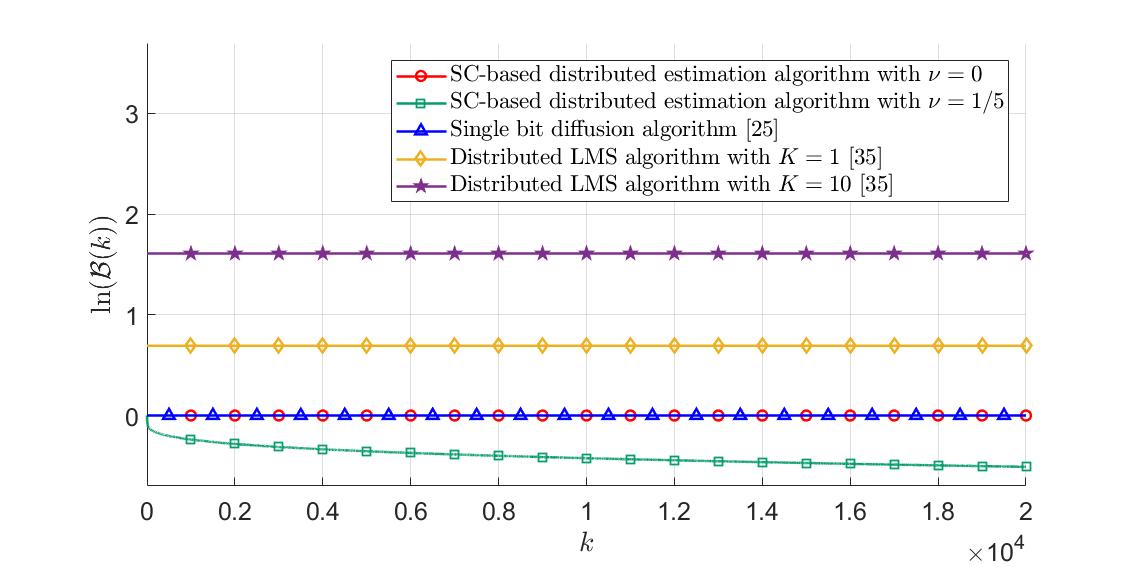}
		\caption{\MODIFY{Average data rates for different algorithms}}
		\label{fig:bit_rate_c}
	\end{minipage}
\end{figure}

\section{Conclusion}\label{sec:concl}

This paper considers the distributed estimation under low communication cost, which is described by the average data rates. We propose a novel distributed estimation algorithm, where the SC consensus protocol \cite{Ke2023Signal} is used to fuse neighborhood information, and a new stochastic event-triggered mechanism is designed to reduce the communication frequency. The algorithm has advantages both in the effectiveness and communication cost. 
For the effectiveness, the estimates of the algorithm are proved to converge to the true value in the almost sure and mean square sense, and polynomial almost sure convergence rate is also obtained. For the communication cost, the local and global average data rates are proved to decay to zero at polynomial rates. Besides, the trade-off between convergence rate and communication cost is established through event-triggered coefficients.  A better convergence rate can be achieved by decreasing event-triggered coefficients, while lower communication cost can be achieved by increasing event-triggered coefficients.  

There are interesting issues for future works. For example, how to extend the results to the cases with more complex communication graphs, such as directed graphs and switching graphs? Besides, Gan and Liu \cite{gan2022distributed} consider the distributed order estimation, and Xie and Guo \cite{xie2018analysis} investigate distributed adaptive filtering. These issues also suffer the communication cost problems. Then, how to apply our technique to these works to save the communication cost?


\appendix

\section{Lemmas}\label[appen]{appen}

%

\begin{lemma}\label{lemma:inf_kf}
	Let $ f(\cdot) $ be the density function of $ Lap(0,1) $.
	Given $ C_k = \nu b \ln k $ with $ \nu \geq 0 $ and $ b > 0 $, and a compact set $ \mathcal{X} $, we have $ \inf_{x\in\mathcal{X}, k\in\mathbb{N}} \frac{k^\nu}{b} f((x-C_k)/b) > 0 $. 
\end{lemma}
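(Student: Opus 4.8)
The plan is to exploit the explicit form of the Laplace density and observe that the polynomial prefactor $k^\nu$ is exactly the reciprocal of the exponential decay induced by the threshold $C_k$. Recall that the density of $Lap(0,1)$ is $f(u) = \frac{1}{2}e^{-|u|}$, so that
\begin{align*}
	\frac{k^\nu}{b} f\!\left(\frac{x-C_k}{b}\right) = \frac{k^\nu}{2b}\, \exp\!\left( -\frac{|x-C_k|}{b} \right).
\end{align*}
First I would rewrite the threshold in exponential form: since $C_k = \nu b \ln k$, we have $C_k/b = \nu \ln k$, hence $e^{-C_k/b} = k^{-\nu}$. This is the crucial bookkeeping that makes the $k^\nu$ factor harmless.

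Next I would apply the triangle inequality in the exponent. Because $|x - C_k| \leq |x| + C_k$ (note $C_k \geq 0$ for $\nu \geq 0$), we obtain the lower bound
\begin{align*}
	\exp\!\left( -\frac{|x-C_k|}{b} \right) \geq \exp\!\left( -\frac{|x|}{b} \right) \exp\!\left( -\frac{C_k}{b} \right) = k^{-\nu}\exp\!\left( -\frac{|x|}{b} \right).
\end{align*}
Substituting this into the previous display cancels the $k^\nu$ against $k^{-\nu}$, leaving
\begin{align*}
	\frac{k^\nu}{b} f\!\left(\frac{x-C_k}{b}\right) \geq \frac{1}{2b}\exp\!\left( -\frac{|x|}{b} \right),
\end{align*}
which is independent of $k$.

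Finally I would take the infimum over the compact set $\mathcal{X}$. Letting $X_0 = \sup_{x\in\mathcal{X}} |x| < \infty$ (finite by compactness), the right-hand side is bounded below by $\frac{1}{2b}e^{-X_0/b} > 0$ uniformly in $k$, which yields the claim. There is essentially no hard obstacle here; the only conceptual point — and the reason the statement is true despite the argument $(x-C_k)/b$ tending to $-\infty$ as $k\to\infty$ — is the exact matching between the algebraic growth $k^\nu$ and the exponential decay $e^{-C_k/b} = k^{-\nu}$ built into the definition of $C_k$. Everything else is the triangle inequality and compactness.
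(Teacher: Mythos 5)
Your proof is correct and rests on the same key observation as the paper's own argument, namely the exact cancellation $k^{\nu} e^{-C_k/b} = 1$ built into the definition of $C_k$. The only difference is cosmetic: the paper splits into the cases $\nu = 0$ and $\nu > 0$ and, in the latter, works beyond an explicit index $k_0$ after which $x - C_k < 0$ on all of $\mathcal{X}$, whereas your triangle-inequality bound $|x - C_k| \leq |x| + C_k$ handles all $k$ and both signs of $x - C_k$ uniformly, which is slightly cleaner.
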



\begin{proof}
	If $ \nu = 0 $, then $ C_k = 0 $ for all $ k $. Therefore, $ \inf_{x\in\mathcal{X}, k\in\mathbb{N}} \frac{1}{b} f(x/b) > 0 $ by the compactness of $ \mathcal{X} $. 
	
	If $ \nu > 0 $, then $ \lim_{k\to\infty} C_k = \infty $, which together with the compactness of  $ \mathcal{X} $ implies that  there exists $ k_0 $ such that $ x - C_k < 0 $ for all $ x \in \mathcal{X} $ and $ k \geq k_0 $. Hence, 
	\begin{align*}
		\inf_{x\in\mathcal{X},k\geq k_0} \frac{k^\nu}{b} f\left(\frac{x-C_k}{b}\right)
		= \inf_{x\in\mathcal{X},k\geq k_0}  \frac{k^\nu}{2b}e^{(x-\nu b \ln k)/b}
		= \frac{1}{2b} e^{\min \mathcal{X}/b} > 0. 
	\end{align*} 
	Besides by the compactness of $ \mathcal{X} $, one can get $ \inf_{x\in\mathcal{X}} \frac{k^\nu}{b} f((x-C_k)/b) > 0 $ for all $ k < k_0 $. The lemma is proved. 
\end{proof}

%

\vspace{-2pt}

\begin{lemma}\label{lemma:sum_min_z}
	If positive sequence $ \{z_k\} $ satisfies $ \sum_{k=1}^{\infty} z_k = \infty $ and $ z_{k+1} = O(z_{k}) $, then for any $ l \in \{1,\ldots,n\} $, $ \sum_{q=1}^{\infty} \min_{n(q-1)+l<t\leq nq+l} z_{t} = \infty $. 
\end{lemma}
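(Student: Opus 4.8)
The plan is to exploit the one-sided growth control $z_{k+1} = O(z_k)$ to compare each block-minimum with an entire block of the original series, and then to sum telescopically. First I would fix a constant $M \geq 1$ with $z_{k+1} \leq M z_k$ for all $k$ (enlarging the $O$-constant so that it is at least $1$), which iterates to $z_{k+j} \leq M^j z_k$ for every $j \geq 0$. For fixed $l$, I partition $\{l+1, l+2, \ldots\}$ into the consecutive blocks $B_q = \{t : n(q-1)+l < t \leq nq+l\}$, each of cardinality exactly $n$, and I write $a_q = n(q-1)+l+1$ for the first index of $B_q$ and $m_q = \min_{t \in B_q} z_t$ for its minimum.

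The key estimate I would establish has two parts. Since every $t \in B_q$ has the form $t = a_q + j$ with $0 \leq j \leq n-1$, the growth bound gives $z_t \leq M^{n-1} z_{a_q}$, hence $\sum_{t \in B_q} z_t \leq n M^{n-1} z_{a_q}$. The second part controls $z_{a_q}$ by the minimum of the \emph{previous} block: if $s$ is a minimizer of $B_{q-1}$, then $1 \leq a_q - s \leq n$, so $z_{a_q} \leq M^{a_q - s} z_s \leq M^n m_{q-1}$. Combining the two parts yields $\sum_{t \in B_q} z_t \leq n M^{2n-1} m_{q-1}$ for all $q \geq 2$. Summing over $q \geq 2$ and recognizing that the left-hand side is the tail $\sum_{t > n+l} z_t$, which diverges because it differs from the divergent series $\sum_k z_k$ by only finitely many terms, forces $\sum_{q \geq 1} m_q = \infty$, which is exactly the claim.

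The step I expect to be the main obstacle is the choice of comparison in the second part. The hypothesis $z_{k+1} = O(z_k)$ bounds only how fast the sequence can \emph{increase}, not how fast it can decrease, so one cannot bound a block's sum directly by that same block's minimum: a large leading term followed by a sharp dip would defeat such an attempt. The resolution is to compare the leading term $z_{a_q}$ with the minimum of the \emph{preceding} block $B_{q-1}$, because moving forward by at most $n$ steps costs at most a factor $M^n$, and forward motion is precisely the direction the hypothesis controls. Once this cross-block comparison is set up, the remaining estimates are elementary; the only care needed is to verify that the blocks tile $\{l+1, l+2, \ldots\}$ without overlap, so that the telescoped sum genuinely recovers the divergent tail of $\sum_k z_k$.
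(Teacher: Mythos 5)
Your proposal is correct and takes essentially the same route as the paper: both arguments iterate the forward bound $z_{k+1}\leq M z_k$ at most $2n$ times to compare a block's minimum with the sum over an adjacent block, and then recover the divergent tail of $\sum_k z_k$ (the paper bounds $\min_{B_q}$ below by $\bar z^{-2n}\max_{B_{q+1}}\geq \frac{1}{n\bar z^{2n}}\sum_{B_{q+1}}$, whereas you bound $\sum_{B_q}$ above by $nM^{2n-1}m_{q-1}$ — the same comparison read in the other direction). No gaps.
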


\begin{proof}
	Set $ \bar{z} = \sup\left\{ 1, \frac{z_k+1}{z_{k}}, k\in\mathbb{N} \right\} < \infty $. Then, $ z_{k} \geq \frac{z_{k+1}}{\bar{z}} $. Therefore, 
	\begin{align*}
		\sum_{q=1}^{\infty} \min_{n(q-1)+l<t\leq nq+l} z_{t}
		\geq \sum_{q=1}^{\infty} \max_{nq+l<t\leq n(q+1)+l} \frac{z_{t}}{\bar{z}^{2n}} 
		\geq \frac{1}{n\bar{z}^{2n}} \sum_{k=l+n+1}^{\infty} z_k = \infty. 
	\end{align*}
\end{proof}

\vspace{-2pt}
\MODIFY{
	\begin{lemma}\label{lemma:ln}
		a) $ \ln (1+x+y) \leq \ln (1+x) + \ln (1+y) $ for all $ x,y \geq 0 $; 
		
		b) $ \ln (1+x)-\ln (1+y) \leq \frac{x-y}{1+y} $ for all $ x,y \geq 0 $;
		 
		c) $ \frac{\ln (1+x)-\ln (1+y)}{x-y} \leq 1 $ for all $ x,y \geq 0 $;
		
		d) $ \sup_{x > 0} \frac{\ln (1+x)}{x^p}  < \infty $ for all $ p \in (0,1] $.
	\end{lemma}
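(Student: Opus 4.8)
The plan is to treat the four claims separately, since each follows from an elementary property of the logarithm, and to reserve the most care for part d).

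For a), I would exploit the monotonicity of $\ln$ together with the elementary inequality $(1+x)(1+y) = 1 + x + y + xy \geq 1 + x + y$, which holds whenever $x,y \geq 0$. Applying $\ln$ to both sides and using $\ln((1+x)(1+y)) = \ln(1+x) + \ln(1+y)$ gives a) at once.

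For b) and c), the natural tool is the concavity of $t \mapsto \ln(1+t)$ on $[0,\infty)$, whose derivative is $1/(1+t)$. For b) I would compare $\ln(1+x)$ with its tangent line at $t=y$, namely $\ln(1+y) + (x-y)/(1+y)$; concavity places the graph below every tangent, which is exactly the stated inequality. Equivalently, the mean value theorem yields $\ln(1+x) - \ln(1+y) = (x-y)/(1+\xi)$ for some $\xi$ between $x$ and $y$, and one checks the two cases $x \geq y$ and $x < y$, in each of which the sign of $x-y$ combines with the monotonicity of $1/(1+\xi)$ to give the bound $(x-y)/(1+y)$. For c) I would apply the mean value theorem directly: the difference quotient equals $1/(1+\xi)$ for some $\xi$ between $x$ and $y$, and since $\xi \geq 0$ this is at most $1$.

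The main obstacle is part d), which demands a uniform (not merely pointwise) bound over all $x>0$, holding simultaneously for every $p \in (0,1]$. I would sidestep a separate analysis of the endpoints $x\to 0^+$ and $x\to\infty$ by invoking the subadditivity of the power function: for $p \in (0,1]$ one has $(1+x)^p \leq 1 + x^p$. Taking logarithms gives $p\ln(1+x) = \ln((1+x)^p) \leq \ln(1+x^p)$, and then the standard estimate $\ln(1+t) \leq t$ with $t = x^p$ yields $\ln(1+x) \leq x^p/p$. Dividing by $x^p$ produces the uniform bound $\ln(1+x)/x^p \leq 1/p$ for all $x>0$, so the supremum is finite (indeed at most $1/p$). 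If one prefers to avoid the subadditivity inequality, an alternative is to observe that $\ln(1+x)/x^p$ is continuous on $(0,\infty)$, tends to $0$ (or to $1$ when $p=1$) as $x\to 0^+$ via $\ln(1+x)\leq x$, and tends to $0$ as $x\to\infty$ since a logarithm grows slower than any positive power; continuity then forces boundedness. I would favor the subadditivity route, as it delivers an explicit constant rather than a mere finiteness claim.
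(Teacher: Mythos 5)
Your proof is correct. Parts a), b), and c) coincide with the paper's argument: the paper also proves a) via $\ln(1+x+y)\leq\ln\left((1+x)(1+y)\right)$, and handles b) and c) by citing a textbook inequality and the Lagrange mean value theorem, which is exactly your concavity/tangent-line and difference-quotient reasoning. The only genuine divergence is in d): the paper argues by endpoint behaviour, choosing $x_1>x_2>0$ with $\ln(1+x)<x^p$ on $(0,x_2)\cup(x_1,\infty)$ and then invoking continuity on the compact interval $[x_2,x_1]$, which yields finiteness of the supremum but no explicit constant. Your route via the subadditivity $(1+x)^p\leq 1+x^p$ and $\ln(1+t)\leq t$ gives the clean quantitative bound $\sup_{x>0}\ln(1+x)/x^p\leq 1/p$ in two lines, avoids the compactness step entirely, and handles $p=1$ and $p\in(0,1)$ uniformly rather than as separate cases. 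Since the lemma is only used in the paper to assert boundedness (in the uniform-integrability argument of Theorem 4.4), either version suffices, but your explicit constant is the stronger and arguably tidier statement.
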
 

	\begin{proof}
		a), b) and c) can be proved by $ \ln (1+x+y) \leq \ln\left( (1+x) (1+y) \right) = \ln (1+x) + \ln (1+y) $, Proposition 5.4.6 of \cite{zorich} and the Lagrange mean value theorem \cite{zorich}, respectively. 
		For d), if $ p = 1 $, then we have $ \sup_{x \geq 0} \frac{\ln (1+x)}{x} \leq 1 $. If $ p \in (0,1) $, then $ x_1 > x_2 > 0 $ such that $ \ln (1+x) < x^p $ for all $ x \in (0,x_2) \cup (x_1,\infty) $. Therefore, $ \sup_{x \geq 0} \frac{\ln (1+x)}{x^p} \leq \max\left\{\sup_{x \in [x_2,x_1 ]} \frac{\ln (1+x)}{x^p}, 1\right\} < \infty $. 
	\end{proof} 
}

\begin{lemma}\label{lemma:iter_asrate}
	Assume that
	\begin{enumerate}[label={\roman*)}]
		\item $ \{\mathcal{F}_k\} $ is a $ \sigma $-algebra sequence satisfying $ \mathcal{F}_{k-1} \subseteq \mathcal{F}_k $ for all $ k $;	
		\MODIFY{\item $ \{\mathtt{U}_k\} $ is a matrix sequence satisfying that $ \mathtt{U}_k $ is $ \mathcal{F}_{k-1} $-measurable, $ \mathtt{U}_k = O\left( k^{\mu} \right) $ for  some $ 0 \leq \mu < \frac{1}{2} $ almost surely, $ \mathtt{U}_k + \mathtt{U}_k\tr $ is positive semi-definite for all $ k $, and
			\vspace{-2pt}
		\begin{equation}\label{eq:condi_lemma/asrate}
			\frac{1}{2p}\sum_{t=k-p+1}^{k} \mathtt{U}_t + \mathtt{U}_t\tr \geq a I_n
		\end{equation}
		for some $ p \in \mathbb{N} $, $ a > 0 $ and all $ k \in \mathbb{N} $ almost surely;}
		\item $ \{\mathtt{W}_k,\mathcal{F}_k\} $ is a martingale difference sequence such that $ \mE \left[\!\Absl{\mathtt{W}_k}^{\rho} \middle| \mathcal{F}_{k-1}\! \right] \! =  O \! \left(\! \frac{1}{k^{\rho h}}\! \right) $ almost surely for some $ \rho > 2 $ and $ \frac{1}{2} < h \leq \min\{1,\frac{3+2h-2\mu}{4}\} $;		
		\item $ \{\mathtt{X}_k,\mathcal{F}_k\} $ is a sequence of adaptive random variables;
		\item There exists $ c > 1 $ almost surely such that
		\begin{equation}\label{eq:lemma_condi/as_rate}
			\mathtt{X}_k = \left(I_n - \frac{\mathtt{U}_k}{k} + O\left(\frac{1}{k^{c}}\right)\right) \mathtt{X}_{k-1} + \mathtt{W}_k. 
		\end{equation}
	\end{enumerate}
	Then,
	\vspace{-2pt}
	\begin{align*}
		\mathtt{X}_k = 
		\begin{cases}
			O\left( \frac{1}{k^a} \right), & \text{if}\ 2h-2a>1;\\
			O\left( \frac{\ln k}{k^{h-1/2}} \right), & \text{if}\ 2h-2a=1;\\
			O\left( \frac{\sqrt{\ln k}}{k^{h-1/2}} \right), & \text{if}\ 2h-2a<1,
		\end{cases}\ \as
	\end{align*}
\end{lemma}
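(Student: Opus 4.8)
The plan is to show that the rate is governed by $\min\{a,\,h-\tfrac12\}$, with the three cases of the statement recording whether the deterministic (homogeneous) decay or the martingale fluctuation dominates, together with the resonant boundary. First I would telescope the recursion \eqref{eq:lemma_condi/as_rate}: writing $\mathtt{A}_t = I_n - \mathtt{U}_t/t + O(t^{-c})$ and $\Psi(k,j) = \mathtt{A}_k\mathtt{A}_{k-1}\cdots\mathtt{A}_{j+1}$ with $\Psi(k,k)=I_n$, one obtains $\mathtt{X}_k = \Psi(k,0)\mathtt{X}_0 + \sum_{j=1}^{k}\Psi(k,j)\mathtt{W}_j$. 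The two summands are then analyzed separately: the first is a purely homogeneous decay, the second a weighted martingale sum.

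Second, I would estimate the transition matrix $\Psi(k,j)$. Since $\mathtt{U}_t+\mathtt{U}_t^\top$ is positive semidefinite, each factor is pathwise non-expansive up to lower-order terms; the genuine contraction comes from grouping the product into blocks of length $p$ and invoking the averaging condition \eqref{eq:condi_lemma/asrate}. Expanding a block product and using $1/t = 1/k + O(p/k^2)$ across a block gives $(\prod_{t=k-p+1}^{k}\mathtt{A}_t)^\top(\prod_{t=k-p+1}^{k}\mathtt{A}_t) \le (1 - 2pa/k + o(1/k))I_n$, where the $o(1/k)$ collects the quadratic terms $\mathtt{U}_t^\top\mathtt{U}_t/t^2 = O(t^{2\mu-2})$ and the $O(t^{-c})$ perturbations, both negligible against $1/k$ precisely because $\mu<\tfrac12$ and $c>1$. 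Multiplying the block bounds yields $\|\Psi(k,j)\| = O((j/k)^{a})$ almost surely, so that $\Psi(k,0)\mathtt{X}_0 = O(k^{-a})$.

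Third, for the noise term I would apply the martingale convergence-rate theorems of \cite{GuoL2020} (as in the proof of \cref{thm:as_conv}). From the $\rho$-moment bound in iii), the conditional second moment of $\sum_{j}\Psi(k,j)\mathtt{W}_j$ is $O(k^{-2a}\sum_{j=1}^{k} j^{2a-2h})$, and the sum $\sum_{j} j^{2a-2h}$ is bounded, logarithmic, or of order $k^{2a-2h+1}$ according to whether $2h-2a>1$, $=1$, or $<1$. Feeding these three regimes into the martingale rate estimate, which contributes the $\sqrt{\ln k}$ factor and a second one at the resonant boundary, produces the three claimed bounds $O(k^{-a})$, $O(\ln k/k^{h-1/2})$, and $O(\sqrt{\ln k}/k^{h-1/2})$; in the first regime the deterministic rate $k^{-a}$ dominates the fluctuation $k^{-(h-1/2)}\sqrt{\ln k}$, which explains why no logarithm survives there.

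The main obstacle is that $\Psi(k,j)$ depends on the terminal index $k$, so $\sum_{j}\Psi(k,j)\mathtt{W}_j$ is not itself a martingale in $k$ and the cancellation cannot be read off directly; the remedy is a summation-by-parts (Abel) step that re-expresses it through the genuine martingale $\sum_{i\le j}\Gamma_i \mathtt{W}_i$ with a fixed weight $\Gamma_i \asymp i^{a}$, to which the theorems of \cite{GuoL2020} apply, followed by re-summation against the contraction bound. The remaining delicate points are verifying that the block expansion does not degrade the rate---this is where the hypothesis $h \le (3+2h-2\mu)/4$ is used, to keep the quadratic and $O(k^{-c})$ error terms of strictly smaller order than the martingale fluctuation---and carefully tracking the logarithmic factors at the boundary $2h-2a=1$, where the homogeneous and particular rates resonate. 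An equivalent route, which I would keep as a fallback, is to study the scaled process $Z_k = k^{b}\mathtt{X}_k$ with $b$ just below $\min\{a,h-\tfrac12\}$, establish its mean-square boundedness from the same block recursion, and upgrade to the almost-sure bound via a Robbins--Siegmund/martingale-rate argument, bootstrapping $b$ up to the target exponent.
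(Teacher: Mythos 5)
Your main route is genuinely different from the paper's: you propose a variation-of-constants decomposition $\mathtt{X}_k = \Psi(k,0)\mathtt{X}_0 + \sum_{j=1}^k \Psi(k,j)\mathtt{W}_j$ with transition-matrix estimates, whereas the paper works directly with the stochastic Lyapunov functions $(k+1)^{2a}\Absl{\mathtt{X}_k}^2$ and $(k+1)^{2h-1}(\ln(k+1))^{-2}\Absl{\mathtt{X}_k}^2$, shows the block-averaged drift term $\sum_t t^{2a-1}(2a\Absl{\mathtt{X}_{t-1}}^2 - 2\mathtt{X}_{t-1}\tr\bar{\mathtt{U}}_t\mathtt{X}_{t-1})$ is a.s.\ bounded above by intra-block comparison plus martingale estimates, and closes with the almost-supermartingale convergence theorem (Theorem 1.3.2 of \cite{GuoL2020}) applied to a stopped process. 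Your one-sentence ``fallback'' ($Z_k=k^b\mathtt{X}_k$ plus Robbins--Siegmund) is essentially the paper's actual argument. Your bound $\Absl{\Psi(k,j)}=O((j/k)^a)$ is correct: positive semidefiniteness of $\mathtt{U}_t+\mathtt{U}_t\tr$ plus \eqref{eq:condi_lemma/asrate} over blocks of length $p$ does give it, with the quadratic cross terms $O(t^{2\mu-2})$ and the $O(t^{-c})$ perturbations summable since $\mu<\tfrac12$ and $c>1$.

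The gap is in the treatment of the noise convolution, and it is caused by $\mathtt{U}_k$ being \emph{random, adapted, and possibly unbounded} ($0\le\mu<\tfrac12$, with $\mu>0$ actually occurring in the application in \cref{thm:conv_rate}). First, your claim that the conditional second moment of $\sum_j\Psi(k,j)\mathtt{W}_j$ is $O(k^{-2a}\sum_j j^{2a-2h})$ presumes the orthogonality $\mE[\mathtt{W}_i\tr\Psi(k,i)\tr\Psi(k,j)\mathtt{W}_j]=0$ for $i<j$; but $\Psi(k,j)$ is only $\mathcal{F}_{k-1}$-measurable, not $\mathcal{F}_{j-1}$-measurable, so the cross terms need not vanish. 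Second, the Abel-summation remedy with scalar weights $\Gamma_i\asymp i^a$ does not recover the claimed rate when $\mu>0$: the summation-by-parts differences are $\Psi(k,j+1)\bigl(A_{j+1}\Gamma_j^{-1}-\Gamma_{j+1}^{-1}\bigr)$, whose dominant piece is $\Psi(k,j+1)\Gamma_j^{-1}\mathtt{U}_{j+1}/(j+1) = O\bigl((j/k)^a\, j^{-a}\, j^{\mu-1}\bigr)$ rather than $O(k^{-a}j^{-1})$, and summing this against $N_j=\sum_{i\le j}\Gamma_i\mathtt{W}_i=O(j^{a-h+1/2}\sqrt{\ln j})$ yields $O(k^{\mu+1/2-h}\sqrt{\ln k})$ --- a loss of $k^{\mu}$ over the target. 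The alternative of matrix weights $\Gamma_j=\Psi(j,0)^{-1}$ (which would make $\sum_j\Psi(j,0)^{-1}\mathtt{W}_j$ a genuine martingale) fails for the same reason: one only has $\Absl{A_t^{-1}}\le (1-O(t^{\mu-1}))^{-1/2}$, so $\Absl{\Psi(j,0)^{-1}}$ can grow like $\exp(O(j^{\mu}))$, which is superpolynomial. Your argument is sound when $\mu=0$ (bounded gains), but as written it does not cover the stated hypothesis $\mu\in[0,\tfrac12)$; to handle $\mu>0$ you would need either additional cancellation in the $\mathtt{U}_{j+1}N_j$ terms or a switch to the quadratic-form route the paper takes, where $\mathtt{U}_k$ only ever enters through the nonnegative quantity $\mathtt{X}_{k-1}\tr\bar{\mathtt{U}}_k\mathtt{X}_{k-1}$ and no inverse or cross-term orthogonality is needed.
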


\begin{proof}
	\MODIFY{Denote $ \bar{\mathtt{U}}_t = \frac{\mathtt{U}_t + \mathtt{U}_t\tr}{2} $.} Then, by \eqref{eq:lemma_condi/as_rate}, 
	\begin{align*}
		& \mE\left[ \Abs{\mathtt{X}_k}^2 \middle| \mathcal{F}_{k-1} \right] \\
		= & \left( 1 + O\left(\frac{1}{k^{\min\{c,2-2\mu\}}}\right) \right)\Abs{\mathtt{X}_{k-1}}^2  - \frac{2}{k} \mathtt{X}_{k-1}\tr \bar{\mathtt{U}}_k \mathtt{X}_{k-1} + O\left( \frac{1}{k^{2b}} \right),\ \as
	\end{align*}
	Hence, by Theorem 1.3.2 of \cite{GuoL2020}, we have $ \Abs{\mathtt{X}_k}^2 $ converges to a finite value almost surely, which implies the almost sure boundedness of $ \mathtt{X}_k $.  
	
	We estimate the almost sure convergence rate of $ \mathtt{X}_k $ in the following two cases. 
	
	\textbf{Case 1: $ 2h - 2a > 1 $.}
	In this case, we have
	\begin{align}\label{eq:condE_rate_case1}
		& \mE\left[ (k+1)^{2a}\Abs{\mathtt{X}_k}^2 \middle| \mathcal{F}_{k-1} \right] \\
		\leq \ \ \ \ \ \ & \!\!\!\!\!\!\!\!\!\!\!\!\! \left(1 + \frac{2a}{k} + O\left(\frac{1}{k^{\min\{c,2-2\mu\}}}\right)\right) k^{2a} \Abs{\mathtt{X}_{k-1}}^2 - \frac{2}{k^{1-2a}} \mathtt{X}_{k-1}\tr \bar{\mathtt{U}}_k \mathtt{X}_{k-1} + O\left( \frac{1}{k^{2h-2a}} \right) \nonumber\\
		= \ \ \ \ \ \ & \!\!\!\!\!\!\!\!\!\!\!\!\! \left(1 + O\left(\frac{1}{k^{\min\{c,2-2\mu\}}}\right)\right) k^{2a} \Abs{\mathtt{X}_{k-1}}^2 
		+ \frac{2a}{k^{1-2a}} \Abs{\mathtt{X}_{k-1}}^2 - \frac{2}{k^{1-2a}} \mathtt{X}_{k-1}\tr \bar{\mathtt{U}}_k \mathtt{X}_{k-1} \nonumber\\
		\ \ \ \ \ \ & \!\!\!\!\!\!\!\!\!\!\!\!\! + O\left( \frac{1}{k^{2h-2a}} \right),\ \as \nonumber
	\end{align}
	Next, we will prove that $ \sup_{k\in\mathbb{N}}\sum_{t=1}^{k}\left(\frac{2a}{t^{1-2a}} \Abs{\mathtt{X}_{t-1}}^2 - \frac{2}{t^{1-2a}} \mathtt{X}_{t-1}\tr \bar{\mathtt{U}}_t \mathtt{X}_{t-1}\right)\! <\! \infty $ almost surely. 
	Note that $ 1 - 2a > 2 - 2h \geq 0 $. 
	Then, by \eqref{eq:condi_lemma/asrate}, one can get
	\begin{align}\label{ineq:sum_minus}
		& \!\!\!\!\! \!\!\!  \sum_{t=1}^{k} \left(\frac{2a}{t^{1-2a}} \Abs{\mathtt{X}_{t-1}}^2 - \frac{2}{t^{1-2a}} \mathtt{X}_{t-1}\tr \bar{\mathtt{U}}_t \mathtt{X}_{t-1}\right) \\
		\leq \sum_{r=0}^{\lfloor \frac{k}{p} \rfloor - 1} & \sum_{t=pr+1}^{pr+p} \left( \frac{2a}{t^{1-2a}}\Abs{\mathtt{X}_{t-1}}^2 - \frac{2}{t^{1-2a}} \mathtt{X}_{t-1}\tr \bar{\mathtt{U}}_t \mathtt{X}_{t-1} \right) + O\left( \frac{1}{k^{1-2a}} \right) \nonumber \\
		\leq \sum_{r=0}^{\lfloor \frac{k}{p} \rfloor - 1} & \!\! \frac{2}{(pr+p)^{1-2a}} \! \sum_{t=pr+1}^{pr+p} \!\!  \left(a \Abs{\mathtt{X}_{t-1}}^2 - \mathtt{X}_{t-1}\tr \bar{\mathtt{U}}_t \mathtt{X}_{t-1} \right) + \!\! \sum_{r=0}^{\lfloor \frac{k}{p} \rfloor - 1} \!\! O\left( \frac{1}{r^{2-2a}} \right) + O\left( 1 \right) \nonumber\\
		\leq \sum_{r=0}^{\lfloor \frac{k}{p} \rfloor - 1} & \frac{2}{(pr+p)^{1-2a}} \sum_{t=pr+1}^{pr+p}  \left(\mathtt{X}_{pr+p-1}\tr \bar{\mathtt{U}}_t \mathtt{X}_{pr+p-1} - \mathtt{X}_{t-1}\tr \bar{\mathtt{U}}_t \mathtt{X}_{t-1} \right) \nonumber\\
		+ \sum_{r=0}^{\lfloor \frac{k}{p} \rfloor - 1}\!\!\!\!\!\! & \ \ \ \ \frac{2a}{(pr+p)^{1-2a}} \sum_{t=pr+1}^{pr+p}  \left(\Abs{\mathtt{X}_{t-1}}^2 - \Abs{\mathtt{X}_{pr+p-1}}^2 \right) + O\left( 1 \right). \nonumber
	\end{align} 
	Besides,
	\begin{align}\label{eq:sum_minus1}
		& \!\!\!\!\! \sum_{t=pr+1}^{pr+p} \left( \mathtt{X}_{pr+p-1}\tr \bar{\mathtt{U}}_t \mathtt{X}_{pr+p-1} - \mathtt{X}_{t-1}\tr \bar{\mathtt{U}}_t \mathtt{X}_{t-1} \right) \\
		= \ \ \ \ \ \ \ \ & \!\!\!\!\!\!\!\!\!\!\!\!\!\!\!\!\! \sum_{t=pr+1}^{pr+p} \sum_{l=t}^{pr+p-1} \left( \mathtt{X}_{l}\tr \bar{\mathtt{U}}_t \mathtt{X}_{l} - \mathtt{X}_{l-1}\tr \bar{\mathtt{U}}_t \mathtt{X}_{l-1} \right) \nonumber\\
		= \ \ \ \ \ \ \ \ & \!\!\!\!\!\!\!\!\!\!\!\!\!\!\!\!\! \sum_{t=pr+1}^{pr+p} \sum_{l=t}^{pr+p-1} \left( 2 \mathtt{W}_l\tr \bar{\mathtt{U}}_t \left( I_n - \frac{\bar{\mathtt{U}}_l}{l} + O\left( \frac{1}{l^c} \right) \right) \mathtt{X}_{l-1} + \mathtt{W}_l\tr \bar{\mathtt{U}}_t \mathtt{W}_l \right) +O\left( r^{2\mu-1} \right),\ \as \nonumber  
	\end{align}
	When $ t \in \left\{pq + 1, \ldots, pq + l \right\} $ and $ l = \{ t, \ldots, pr+p-1 \} $, it holds that
	\begin{align*}
		\frac{4}{(pr+p)^{1-2a}l^b} \bar{\mathtt{U}}_t \left( I_n - \frac{\bar{\mathtt{U}}_l}{l} + O\left( \frac{1}{l^c} \right) \right) \mathtt{X}_{l-1} = O\left( \frac{1}{r^{1+b-2a-\mu}} \right). 
	\end{align*}
	Note that $ 1 + h - 2a- \mu \geq 2h - 2a - \mu > \frac{1}{2} $. Then, by Theorem 1.3.10 of \cite{GuoL2020}, we have
	\begin{align}\label{eq:sum3_asbounded}
		\sum_{r=0}^{\lfloor \frac{k}{p} \rfloor - 1} \!\! \sum_{t=pr+1}^{pr+p}\!\! \sum_{l=t}^{pr+p-1} \!\! (l^b \mathtt{W}_l)\tr \! \left( \frac{4}{(pr+p)^{1-2a}l^b} \bar{\mathtt{U}}_t \! \left(\! I_n - \frac{\bar{\mathtt{U}}_l}{l} + O\left( \frac{1}{l^c} \right)\! \right)\! \mathtt{X}_{l-1}\! \right) \!\! = O(1),\ \as
	\end{align}
	Additionally, by $ 1 + 2b - 2a - \mu > 2 - \mu > 1 $ and Theorem 1.3.9 of \cite{GuoL2020} with $ \alpha = 1 $,
	\begin{align*}
		& \sum_{r=0}^{\lfloor \frac{k}{p} \rfloor - 1}  \sum_{t=pr+1}^{pr+p} \frac{2}{(pr+p)^{1-2a}} \sum_{l=t}^{pr+p-1} \mathtt{W}_l\tr \bar{\mathtt{U}}_t \mathtt{W}_l \\
		= & \sum_{r=0}^{\lfloor \frac{k}{p} \rfloor - 1}  \sum_{t=pr+1}^{pr+p} \sum_{l=t}^{pr+p-1} \frac{2}{(pr+p)^{1-2a} t^{2b-\mu}} \cdot t^{2b-\mu} \left(\mathtt{W}_l\tr \bar{\mathtt{U}}_t \mathtt{W}_l -\mE\left[ \mathtt{W}_l\tr \bar{\mathtt{U}}_t \mathtt{W}_l \middle| \mathcal{F}_{l-1} \right]  \right) \\
		& + \sum_{r=0}^{\lfloor \frac{k}{p} \rfloor - 1}  \sum_{t=pr+1}^{pr+p} \frac{2}{(pr+p)^{1-2a}} \sum_{l=t}^{pr+p-1} \mE\left[ \mathtt{W}_l\tr \bar{\mathtt{U}}_t \mathtt{W}_l \middle| \mathcal{F}_{l-1} \right] 
		= O(1),\ \as,
	\end{align*}
	which together with \eqref{eq:sum_minus1} and \eqref{eq:sum3_asbounded} implies that 
	\begin{align*}
		\sum_{r=0}^{\lfloor \frac{k}{p} \rfloor - 1} \frac{2}{(pr+p)^{1-2a}} \sum_{t=pr+1}^{pr+p}  \left(\mathtt{X}_{pr+p-1}\tr \bar{\mathtt{U}}_t \mathtt{X}_{pr+p-1} - \mathtt{X}_{t-1}\tr \bar{\mathtt{U}}_t \mathtt{X}_{t-1} \right) = O(1),\ \as
	\end{align*}
	Similarly, one can get
	\begin{align*}
		\sum_{r=0}^{\lfloor \frac{k}{p} \rfloor - 1} \frac{2a}{(pr+p)^{1-2a}} \sum_{t=pr+1}^{pr+p}  \left(\Abs{\mathtt{X}_{t-1}}^2 - \Abs{\mathtt{X}_{pr+p-1}}^2 \right) = O(1),\ \as
	\end{align*}
	Then, by \eqref{ineq:sum_minus}, we have 
	\begin{align}\label{ineq:sum_minus_asbounded}
		\sum_{t=1}^{k} \left(\frac{2a}{t^{1-2a}} \Abs{\mathtt{X}_{t-1}}^2 - \frac{2}{t^{1-2a}} \mathtt{X}_{t-1}\tr \bar{\mathtt{U}}_t \mathtt{X}_{t-1}\right) < \infty, \ \as,
	\end{align}
	
	Given $ S_0 > 0$, define $ \mathtt{S}_k = S_0 -  \sum_{t=1}^{k} \left(\frac{2a}{t^{1-2a}} \Abs{\mathtt{X}_{t-1}}^2 - \frac{2}{t^{1-2a}} \mathtt{X}_{t-1}\tr \bar{\mathtt{U}}_t \mathtt{X}_{t-1}\right) $ and $ \mathtt{V}_k = (k+1)^{2a}\Abs{\mathtt{X}_k}^2 + \mathtt{S}_k $. Hence by \eqref{eq:condE_rate_case1}, we have
	\begin{align*}
		\mE\left[ \mathtt{V}_k \middle| \mathcal{F}_{k-1} \right]
		\leq \left(1 + O\left(\frac{1}{k^{\min\{c,2-2\mu\}}}\right)\right) \mathtt{V}_{k-1} + O\left( \frac{1}{k^{2h-2a}} \right),\ \as \nonumber
	\end{align*}
	Then, define $ \mathtt{k}_0 = \inf \{ k : \mathtt{S}_k < 0 \} $. We have
	\begin{align*}
		& \mE\left[ \mathtt{V}_{\min\{k,\mathtt{k}_0\}} \middle| \mathcal{F}_{k-1} \right] \\
		\leq & \mathtt{V}_{\mathtt{k}_0} I_{\{\mathtt{k}_0 \leq k\}} + \left(1 + O\left(\frac{1}{k^{\min\{c,2-2\mu\}}}\right)\right) \mathtt{V}_{k-1}I_{\{\mathtt{k}_0 > k\}} + O\left( \frac{1}{k^{2h-2a}} \right) \\
		\leq & \left(1 + O\left(\frac{1}{k^{\min\{c,2-2\mu\}}}\right)\right) \mathtt{V}_{\min\{k-1,\mathtt{k}_0\}} + O\left( \frac{1}{k^{2h-2a}} \right). 
	\end{align*}
	By Theorem 1.3.2 of \cite{GuoL2020}, $ \mathtt{V}_{\min\{k,\mathtt{k}_0\}} $ converges to a finite value almost surely. Note that $ \mathtt{V}_k = \mathtt{V}_{\min\{k,\mathtt{k}_0\}} $ in the set
	\begin{align*}
		\{ \mathtt{k}_0 = \infty \} 
		= \{ \inf_k \mathtt{S}_k \geq 0 \}
		= \left\{ \sum_{t=1}^{k} \left(\frac{2a}{t^{1-2a}} \Abs{\mathtt{X}_{t-1}}^2 - \frac{2}{t^{1-2a}} \mathtt{X}_{t-1}\tr \bar{\mathtt{U}}_t \mathtt{X}_{t-1}\right) < S_0 \right\}.
	\end{align*}
	Then, by the arbitrariness of $ S_0 $ and \eqref{ineq:sum_minus_asbounded},  $ \mathtt{V}_k $ converges to a finite value almost surely, which implies the almost sure boundedness of $ (k+1)^{2a}\Abs{\mathtt{X}_k}^2 $. Hence, one can get $ \mathtt{X}_k = O\left( \frac{1}{k^a} \right) $ almost surely. 
	
	\textbf{Case 2: $ 2h - 2a \leq 1 $. } In this case, we have
	\begin{align}\label{eq:condE_rate_case2}
		& \mE\left[ \frac{(k+1)^{2h-1}}{(\ln (k+1))^2}\Abs{\mathtt{X}_k}^2 \middle| \mathcal{F}_{k-1} \right] \\
		\leq &  \left( 1 + \frac{2h-1}{k} + O\left(\frac{1}{k^{\min\{c,2-2\mu\}}}\right) \right) \frac{k^{2h-1}}{(\ln k)^2} \Abs{\mathtt{X}_{k-1}}^2   \nonumber\\
		& - \frac{2}{k^{2-2h}(\ln k)^2} \mathtt{X}_{k-1}\tr \bar{\mathtt{U}}_k \mathtt{X}_{k-1} + O\left( \frac{1}{k (\ln k)^2} \right) \nonumber \\
		\leq & \left( 1 + O\left(\frac{1}{k^{\min\{c,2-2\mu\}}}\right) \right) \frac{k^{2h-1}}{(\ln k)^2} \Abs{\mathtt{X}_{k-1}}^2 + \frac{2a}{k^{2-2h}(\ln k)^2} \Abs{\mathtt{X}_{k-1}}^2    \nonumber\\
		& - \frac{2}{k^{2-2h}(\ln k)^2} \mathtt{X}_{k-1}\tr \bar{\mathtt{U}}_k \mathtt{X}_{k-1} + O\left( \frac{1}{k (\ln k)^2} \right),\ \as \nonumber
	\end{align}
	Then, similar to the case of $ 2h - 2a > 1 $, we have $ \mathtt{X}_k = O\left( \frac{\ln k}{k^{h-1/2}} \right) $ almost surely. 
	
	We further promote the almost sure convergence rate for the case of $ 2h - 2a < 1 $. Since  $ \mathtt{X}_k = O\left( \frac{\ln k}{k^{h-1/2}} \right) $ almost surely, one can get
	\begin{align}\label{ineq:iter_case3}
		& \!\! (k+1)^{2h-1} \Abs{\mathtt{X}_k}^2 \\
		\leq \ \ \ \ \ \ \ \ & \!\!\!\!\!\!\!\!\!\!\!\!\!\!\! k^{2h-1} \Abs{\mathtt{X}_{k-1}}^2 + 2 (k+1)^{2h-1} \mathtt{W}_k\tr \left(\! I_n - \frac{\bar{\mathtt{U}}_k}{k} + O\left( \frac{1}{k^c} \right)\! \right) \mathtt{X}_{k-1}  + \frac{2h -1 }{k^{2-2h}} \Abs{\mathtt{X}_{k-1}}^2 \nonumber\\
		\ \ \ \ \ \ \ \ & \!\!\!\!\!\!\!\!\!\!\!\!\!\!\! - \frac{2}{k^{2-2h}} \mathtt{X}_{k-1}\tr \bar{\mathtt{U}}_k \mathtt{X}_{k-1} + (k+1)^{2h-1} \left( \Absl{\mathtt{W}_k}^2 - \mE \left[ \Absl{\mathtt{W}_k}^2 \middle | \mathcal{F}_{k-1} \right] \right) + O\left( \frac{1}{k} \right). \nonumber
	\end{align}
	By Theorem 1.3.10 of \cite{GuoL2020}, it holds that
	\begin{align*}
		& \sum_{t=1}^{k} 2 (t+1)^{2h-1} \mathtt{W}_t\tr \left( I_n - \frac{\bar{\mathtt{U}}_t}{t} + O\left( \frac{1}{t^c} \right) \right) \mathtt{X}_{t-1} \\
		= & \sum_{t=1}^{k}  ( (t+1)^h \mathtt{W}_t)\tr \left(2 (t+1)^{h-1} \left( I_n - \frac{\bar{\mathtt{U}}_t}{t} + O\left( \frac{1}{t^c} \right) \right) \mathtt{X}_{t-1}\right) \\
		= & O(1) + o\left( \sum_{t=1}^{k} \frac{1}{t^{2-2h} } \Absl{\mathtt{X}_t}^2\right),\ \as
	\end{align*}
	\MODIFY{By Theorem 1.3.9 of \cite{GuoL2020} with $ \alpha = 1 $}, one can get 
	\begin{align*}
		& \sum_{t=1}^{k} (t+1)^{2h-1} \left( \Absl{\mathtt{W}_t}^2 - \mE \left[ \Absl{\mathtt{W}_t}^2 \middle | \mathcal{F}_{k-1} \right] \right) \\
		= & \sum_{t=1}^{k} (t+1)^{2h} \left( \Absl{\mathtt{W}_t}^2 - \mE \left[ \Absl{\mathtt{W}_t}^2 \middle | \mathcal{F}_{k-1} \right] \right) \cdot \frac{1}{t+1} = O(\ln k),\ \as
	\end{align*}
	Similar to \eqref{ineq:sum_minus_asbounded}, we have
	\begin{align*}
		\sum_{t=1}^{k} \left(\frac{2a}{t^{2-2h}} \Abs{\mathtt{X}_{t-1}}^2 - \frac{2}{t^{2-2h}} \mathtt{X}_{t-1}\tr \bar{\mathtt{U}}_t \mathtt{X}_{t-1}\right) \leq  o(\ln k),\ \as
	\end{align*}
	Hence, by \eqref{ineq:iter_case3},
	\begin{align*}
		& (k+1)^{2h-1} \Abs{\mathtt{X}_k}^2 \\
		\leq & \Abs{\mathtt{X}_0}^2 + \sum_{t=1}^{k} 2 (t+1)^{2h-1} \mathtt{W}_t\tr \left( I_n - \frac{\bar{\mathtt{U}}_t}{t} + O\left( \frac{1}{t^c} \right) \right) \mathtt{X}_{t-1} \\
		& - \sum_{t=1}^{k} \frac{1+2a-2h}{t^{2-2h}} \Abs{\mathtt{X}_{t-1}}^2 + \sum_{t=1}^{k} \left(\frac{2a}{t^{2-2h}} \Abs{\mathtt{X}_{t-1}}^2 - \frac{2}{t^{2-2h}} \mathtt{X}_{t-1}\tr \bar{\mathtt{U}}_t \mathtt{X}_{t-1}\right) \\
		&  + \sum_{t=1}^{k} (t+1)^{2h-1} \left( \Absl{\mathtt{W}_t}^2 - \mE \left[ \Absl{\mathtt{W}_t}^2 \middle | \mathcal{F}_{k-1} \right] \right) + O(\ln k) \\
		\leq & o\left( \sum_{t=1}^{k} \frac{1}{t^{2-2h}} \Abs{\mathtt{X}_t}^2 \right) - \left( 1+2a-2h \right) \sum_{t=1}^{k} \frac{1}{t^{2-2h}} \Abs{\mathtt{X}_t}^2 + O(\ln k)
		= O(\ln k),\ \as,
	\end{align*} 
	which implies $ \mathtt{X}_k = O\left(\frac{\sqrt{\ln k}}{k^{h-1/2}}\right) $. 
	The lemma is thereby proved. 
\end{proof}


\begin{thebibliography}{10}

\bibitem{aysal2008distributed}
{\sc T.~C. Aysal, M.~J. Coates, and M.~G. Rabbat}, {\em Distributed average
  consensus with dithered quantization}, IEEE transactions on Signal
  Processing, 56 (2008), pp.~4905--4918.

\bibitem{carli2009quantized}
{\sc R.~Carli and F.~Bullo}, {\em Quantized coordination algorithms for
  rendezvous and deployment}, SIAM Journal on Control and Optimization, 48
  (2009), pp.~1251--1274.

\bibitem{carli2008communication}
{\sc R.~Carli, F.~Fagnani, A.~Speranzon, and S.~Zampieri}, {\em Communication
  constraints in the average consensus problem}, Automatica, 44 (2008),
  pp.~671--684.

\bibitem{chang2023fully}
{\sc Z.~Chang, W.~Song, J.~Wang, and Z.~Li}, {\em Fully distributed
  event-triggered affine formation maneuver control over directed graphs},
  Science China Information Sciences, 66 (2023), pp.~1--3.

\bibitem{chen2022distributed}
{\sc G.~Chen, D.~Yao, Q.~Zhou, H.~Li, and R.~Lu}, {\em Distributed
  event-triggered formation control of {USVs} with prescribed performance},
  Journal of Systems Science and Complexity,  (2022), p.~820–838.

\bibitem{gan2022distributed}
{\sc D.~Gan and Z.~Liu}, {\em Distributed order estimation of {ARX} model under
  cooperative excitation condition}, SIAM Journal on Control and Optimization,
  60 (2022), pp.~1519--1545.

\bibitem{he2022event}
{\sc X.~He, Y.~Xing, J.~Wu, and K.~H. Johansson}, {\em Event-triggered
  distributed estimation with decaying communication rate}, SIAM Journal on
  Control and Optimization, 60 (2022), pp.~992--1017.

\bibitem{kar2014distributed}
{\sc S.~Kar, G.~Hug, J.~Mohammadi, and J.~M. Moura}, {\em Distributed state
  estimation and energy management in smart grids: A consensus $+$ innovations
  approach}, IEEE Journal of selected topics in signal processing, 8 (2014),
  pp.~1022--1038.

\bibitem{kar2013distributed}
{\sc S.~Kar, J.~M. Moura, and H.~V. Poor}, {\em Distributed linear parameter
  estimation: Asymptotically efficient adaptive strategies}, SIAM Journal on
  Control and Optimization, 51 (2013), pp.~2200--2229.

\bibitem{kar2012distributed}
{\sc S.~Kar, J.~M. Moura, and K.~Ramanan}, {\em Distributed parameter
  estimation in sensor networks: Nonlinear observation models and imperfect
  communication}, IEEE Transactions on Information Theory, 58 (2012),
  pp.~3575--3605.

\bibitem{Ke2023Signal}
{\sc J.~M. Ke, Y.~L. Zhao, and J.~F. Zhang}, {\em Signal comparison average
  consensus algorithm under binary-valued communications}, in Proceedings of
  the IEEE Conference on Decision and Control, Singapore, December 2023.

\bibitem{li2009distributed}
{\sc J.~Li and G.~AlRegib}, {\em Distributed estimation in energy-constrained
  wireless sensor networks}, IEEE Transactions on Signal Processing, 57 (2009),
  pp.~3746--3758.

\bibitem{LiT2011distributed}
{\sc T.~Li, M.~Fu, L.~Xie, and J.-F. Zhang}, {\em Distributed consensus with
  limited communication data rate}, IEEE Transactions on Automatic Control, 56
  (2011), pp.~279--292.

\bibitem{liu2011distributed}
{\sc S.~Liu, T.~Li, and L.~Xie}, {\em Distributed consensus for multiagent
  systems with communication delays and limited data rate}, SIAM Journal on
  Control and Optimization, 49 (2011), pp.~2239--2262.

\bibitem{meng2017finite}
{\sc Y.~Meng, T.~Li, and J.-F. Zhang}, {\em Finite-level quantized
  synchronization of discrete-time linear multiagent systems with switching
  topologies}, SIAM Journal on Control and Optimization, 55 (2017),
  pp.~275--299.

\bibitem{robbins1971convergence}
{\sc H.~Robbins and D.~Siegmund}, {\em A convergence theorem for non negative
  almost supermartingales and some applications}, in Optimizing methods in
  statistics, Elsevier, 1971, pp.~233--257.

\bibitem{sahu2018CIRFE}
{\sc A.~K. Sahu, D.~Jakoveti{\'c}, and S.~Kar}, {\em $\mathcal{CIRFE}$: A
  distributed random fields estimator}, IEEE Transactions on Signal Processing,
  66 (2018), pp.~4980--4995.

\bibitem{sayed2014diffusion}
{\sc A.~H. Sayed}, {\em Diffusion adaptation over networks}, in Academic Press
  Library in Signal Processing, vol.~3, Elsevier, 2014, pp.~323--453.

\bibitem{sayin2013single}
{\sc M.~O. Sayin and S.~S. Kozat}, {\em Single bit and reduced dimension
  diffusion strategies over distributed networks}, IEEE Signal Processing
  Letters, 20 (2013), pp.~976--979.

\bibitem{sayin2014compressive}
{\sc M.~O. Sayin and S.~S. Kozat}, {\em Compressive diffusion strategies over
  distributed networks for reduced communication load}, IEEE Transactions on
  Signal Processing, 62 (2014), pp.~5308--5323.

\bibitem{Shiryaev}
{\sc A.~N. Shiryaev}, {\em Probability}, Graduate texts in mathematics Springer
  series in Soviet mathematics 95, Springer, 2nd ed~ed., 1996.

\bibitem{stout1970martingale}
{\sc W.~F. Stout}, {\em A martingale analogue of {K}olmogorov's law of the
  iterated logarithm}, Zeitschrift f{\"u}r Wahrscheinlichkeitstheorie und
  verwandte Gebiete, 15 (1970), pp.~279--290.

\bibitem{LiT2021decentralized}
{\sc J.~Wang, T.~Li, and X.~Zhang}, {\em Decentralized cooperative online
  estimation with random observation matrices, communication graphs and time
  delays}, IEEE Transactions on Information Theory, 67 (2021), pp.~4035--4059.

\bibitem{wang2023differentially}
{\sc J.~Wang, J.~Tan, and J.-F. Zhang}, {\em Differentially private distributed
  parameter estimation}, Journal of Systems Science and Complexity, 36 (2023),
  pp.~187--204.

\bibitem{wang2019consensus}
{\sc T.~Wang, H.~Zhang, and Y.~Zhao}, {\em Consensus of multi-agent systems
  under binary-valued measurements and recursive projection algorithm}, IEEE
  Transactions on Automatic Control, 65 (2019), pp.~2678--2685.

\bibitem{wang2023event}
{\sc Z.~Wang, C.~Jin, W.~He, M.~Xiao, G.-P. Jiang, and J.~D. Cao}, {\em
  Event-triggered impulsive synchronization of heterogeneous neural networks},
  Science China Information Sciences,  (2023).
\newblock
  \href{https://doi.org/10.1007/s11432-022-3839-y}{DOI:10.1007/s11432-022-3839-y}.

\bibitem{wei1985asymptotic}
{\sc C.-Z. Wei}, {\em Asymptotic properties of least-squares estimates in
  stochastic regression models}, The Annals of Statistics,  (1985),
  pp.~1498--1508.

\bibitem{wu2022dynamic}
{\sc X.~Wu, B.~Mao, X.~Wu, and J.~L{\"u}}, {\em Dynamic event-triggered
  leader-follower consensus control for multiagent systems}, SIAM Journal on
  Control and Optimization, 60 (2022), pp.~189--209.

\bibitem{xie2018analysis}
{\sc S.~Xie and L.~Guo}, {\em Analysis of normalized least mean squares-based
  consensus adaptive filters under a general information condition}, SIAM
  Journal on Control and Optimization, 56 (2018), pp.~3404--3431.

\bibitem{zhao2018consensus}
{\sc Y.~Zhao, T.~Wang, and W.~Bi}, {\em Consensus protocol for multiagent
  systems with undirected topologies and binary-valued communications}, IEEE
  Transactions on Automatic Control, 64 (2018), pp.~206--221.

\bibitem{zhu2018mitigating}
{\sc S.~Zhu, C.~Chen, J.~Xu, X.~Guan, L.~Xie, and K.~H. Johansson}, {\em
  Mitigating quantization effects on distributed sensor fusion: A least squares
  approach}, IEEE Transactions on Signal Processing, 66 (2018), pp.~3459--3474.

\bibitem{zhu2015distributed}
{\sc S.~Zhu, Y.~C. Soh, and L.~Xie}, {\em Distributed parameter estimation with
  quantized communication via running average}, IEEE Transactions on Signal
  Processing, 63 (2015), pp.~4634--4646.

\bibitem{zorich}
{\sc V.~A. Zorich}, {\em Mathematical Analysis I}, Springer.

\end{thebibliography}


\begin{thebibliography}{10}
	\bibitem{alistarh2017QSGD}
	{\sc D. Alistarh, D. Grubic, J. Li, R. Tomioka, and M. Vojnovic}, {\em QSGD: Communication-efficient SGD via gradient quantization and encoding,}
	in Proceedings of Advances in Neural Information Processing Systems, Long Beach, CA, 2017.
	
	\bibitem{aysal2008distributed}
	{\sc T.~C. Aysal, M.~J. Coates, and M.~G. Rabbat}, {\em Distributed average consensus with dithered quantization}, IEEE Trans. Signal Process., 56 (2008), pp.~4905--4918.
	
	\bibitem{carli2009quantized}
	{\sc R.~Carli and F.~Bullo}, {\em Quantized coordination algorithms for	rendezvous and deployment}, SIAM J. Control Optim., 48 (2009), pp.~1251--1274.
	
	\bibitem{carli2008communication}
	{\sc R.~Carli, F.~Fagnani, A.~Speranzon, and S.~Zampieri}, {\em Communication constraints in the average consensus problem}, Automatica, 44 (2008),
	pp.~671--684.
	
	\bibitem{carpentiero2021distributed}
	{\sc M. Carpentiero, V. Matta, and A. H. Sayed}, {\em Distributed Adaptive Learning Under Communication Constraints},  IEEE Open Journal of Signal Processing, 5 (2023), pp.~321--358.
	
	

	\bibitem{ChenHF2003SA}
	{\sc H. F. Chen}, {\em Stochastic Approximation and Its
	Applications}, Kluwer Academic Publishers, Dordrecht, 2003.	
	
	\bibitem{gan2022distributed}
	{\sc D.~Gan and Z.~Liu}, {\em Distributed order estimation of {ARX} model under cooperative excitation condition}, SIAM J. Control Optim.,	60 (2022), pp.~1519--1545.
	
	\bibitem{GuoL2020}
		{\sc L. Guo}, {\em Time-Varying Stochastic Systems: Stability and Adaptive Theory}, 2nd ed., Science Press, Beijing, 2020.
	
	\bibitem{Gustafsson2013generating}
	{\sc F. Gustafsson and R. Karlsson}, {\em Generating dithering noise for maximum likelihood estimation from quantized data}, Automatica, 49 (2013), pp.~554–-560. 
	
	\bibitem{he2022event}
	{\sc X.~He, Y.~Xing, J.~Wu, and K.~H. Johansson}, {\em Event-triggered distributed estimation with decaying communication rate}, SIAM J. Control Optim., 60 (2022), pp.~992--1017.
	
	\bibitem{kar2014distributed}
	{\sc S.~Kar, G.~Hug, J.~Mohammadi, and J.~M. Moura}, {\em Distributed state estimation and energy management in smart grids: A consensus$ + $innovations approach}, IEEE J. Sel. Top. Signal Process., 8 (2014),
	pp.~1022--1038.
	
	\bibitem{kar2013distributed}
	{\sc S.~Kar, J.~M. Moura, and H.~V. Poor}, {\em Distributed linear parameter estimation: Asymptotically efficient adaptive strategies}, SIAM J. Control Optim., 51 (2013), pp.~2200--2229.
	
	\bibitem{kar2012distributed}
	{\sc S.~Kar, J.~M. Moura, and K.~Ramanan}, {\em Distributed parameter estimation in sensor networks: Nonlinear observation models and imperfect communication}, IEEE Trans. Inform. Theory, 58 (2012),
	pp.~3575--3605.
	
	\bibitem{Ke2023Signal}
	{\sc J.~M. Ke, Y.~L. Zhao, and J.~F. Zhang}, {\em Signal comparison average consensus algorithm under binary-valued communications}, in Proceedings of the IEEE Conference on Decision and Control, Singapore, 2023.
	
	\bibitem{lao2022quantized}
	{\sc X. Lao, W. Du, and C. Li}, {\em Quantized distributed estimation with adaptive combiner}, IEEE Trans. Signal Inf. Process. Netw., 8 (2022), pp. 187--200.
	
	\bibitem{li2009distributed}
	{\sc J.~Li and G.~AlRegib}, {\em Distributed estimation in energy-constrained wireless sensor networks}, IEEE Trans. Signal Process., 57 (2009),
	pp.~3746--3758.
	
	\bibitem{LiT2011distributed}
	{\sc T.~Li, M.~Fu, L.~Xie, and J. F. Zhang}, {\em Distributed consensus with limited communication data rate}, IEEE Trans. Automat. Control, 56 (2011), pp.~279--292.
	
	\bibitem{liu2011distributed}
	{\sc S.~Liu, T.~Li, and L.~Xie}, {\em Distributed consensus for multiagent systems with communication delays and limited data rate}, SIAM J. Control Optim., 49 (2011), pp.~2239--2262.
	
	\bibitem{Jakovetic2023heavy-tail}
	{\sc D. Jakovetic, M. Vukovic, D. Bajovic, A. K. Sahu, and S. Kar}, {\em Distributed recursive estimation under heavy-tail communication noise}, SIAM J. Control Optim., 61 (2023), pp.~1582--1609.
	
	\bibitem{meng2017finite}
	{\sc Y.~Meng, T.~Li, and J. F. Zhang}, {\em Finite-level quantized
		synchronization of discrete-time linear multiagent systems with switching topologies}, SIAM J. Control Optim., 55 (2017),
	pp.~275--299.
	
	\MODIFY{
	\bibitem{Michelusi2022}	
	{\sc N. Michelusi, G. Scutari, and C. S. Lee}, {\em Finite-bit quantization for distributed algorithms with linear convergence}, IEEE Trans. Inform. Theory, 68 (2022), pp.~ 7254--7280.

	\bibitem{Nassif2023}
	{\sc R. Nassif, S. Vlaski, M. Carpentiero, V. Matta, M. Antonini, and A. H. Sayed}, {\em Quantization for decentralized learning under subspace constraints}, IEEE Trans. Signal Process., 71 (2023), pp.~2320--2335. 
}
	
	
	\bibitem{sahu2018CIRFE}
	{\sc A.~K. Sahu, D.~Jakoveti{\'c}, and S.~Kar}, {\em $\mathcal{CIRFE}$: A
		distributed random fields estimator}, IEEE Trans. Signal Process.,
	66 (2018), pp.~4980--4995.
	
	\bibitem{sayed2014diffusion}
	{\sc A.~H. Sayed}, {\em Diffusion adaptation over networks}, 
	in Academic Press Library in Signal Processing, R. Chellapa and S. Theodoridis, eds., 2014, pp. 323--454.	
	
	
	\bibitem{sayin2014compressive}
	{\sc M.~O. Sayin and S.~S. Kozat}, {\em Compressive diffusion strategies over
		distributed networks for reduced communication load}, IEEE Trans. Signal Process., 62 (2014), pp.~5308--5323.
	
	\bibitem{Shiryaev}
	{\sc A.~N. Shiryaev}, {\em Probability}, 2nd ed., Springer, New York, 1996. 
	
	\bibitem{YuanS2024JSSC}
	{\sc S. Yuan, C. P. Yu, and J. Sun}, {\em Hybrid adaptive event-triggered consensus control with intermittent communication and control updating}, J. Syst. Sci. Complex., (2024), https://doi.org/10.1007/s11424-024-3563-8. 
	
	\bibitem{Yang2024SCIS}
	{\sc S. Yang, W. Xu, W. He, and J. Cao}, {\em Distributed generalized Nash equilibrium seeking: event-triggered coding-decoding-based secure communication}, Sci. China Inf. Sci., 67 (2024), 172205. 
	
	
	\bibitem{LiT2021decentralized}
	{\sc J.~Wang, T.~Li, and X.~Zhang}, {\em Decentralized cooperative online estimation with random observation matrices, communication graphs and time	delays}, IEEE Trans. Inform. Theory, 67 (2021), pp.~4035--4059.
	
	\bibitem{wang2023differentially}
	{\sc J. M. Wang, J. W. Tan, and J. F. Zhang}, {\em Differentially private distributed	parameter estimation}, J. Syst. Sci. Complex., 36 (2023),
	pp.~187--204.
	
	\bibitem{WangLY2003System} {\sc L. Y. Wang, J. F. Zhang, and G. Yin}, {\em  System identification using binary sensors}, IEEE Trans. Automat. Control, 48 (2003), pp.~ 1892-1907. 
	
	\bibitem{wang2019consensus}
	{\sc T.~Wang, H.~Zhang, and Y.~L. Zhao}, {\em Consensus of multi-agent systems	under binary-valued measurements and recursive projection algorithm}, IEEE Trans. Automat. Control, 65 (2019), pp.~2678--2685.
	
	\bibitem{wang2023event}
	{\sc Z.~Wang, C.~Jin, W.~He, M.~Xiao, G. P. Jiang, and J.~D. Cao}, {\em Event-triggered impulsive synchronization of heterogeneous neural networks}, Sci. China Inf. Sci., 5 (2024), 129202.
	
	
	\bibitem{wu2022dynamic}
	{\sc X.~Wu, B.~Mao, X.~Wu, and J.~L{\"u}}, {\em Dynamic event-triggered leader-follower consensus control for multiagent systems}, SIAM J. Control Optim., 60 (2022), pp.~189--209.
	
	\bibitem{xie2013LMS}
	{\sc S. L. Xie and H. R. Li}, {\em Distributed LMS estimation over networks with quantised communications},  Internat. J. Control, 86 (2013), pp.~478--492. 
	
	\bibitem{xie2018analysis}
	{\sc S.~Xie and L.~Guo}, {\em Analysis of normalized least mean squares-based consensus adaptive filters under a general information condition}, SIAM J. Control Optim., 56 (2018), pp.~3404--3431.
	
	\bibitem{ZhangQ2012}
	{\sc Q. Zhang and J. F. Zhang}, {\em Distributed parameter estimation over unreliable networks with Markovian switching topologies}, IEEE Trans. Automat. Control, 57 (2012), pp.~2545--2560. 
	
	
	\bibitem{zhao2018consensus}
	{\sc Y. L.~Zhao, T.~Wang, and W.~Bi}, {\em Consensus protocol for multiagent systems with undirected topologies and binary-valued communications}, IEEE Trans. Automat. Control, 64 (2019), pp.~206--221.
	
	\bibitem{zhu2018mitigating}
	{\sc S.~Zhu, C.~Chen, J.~Xu, X.~Guan, L.~Xie, and K.~H. Johansson}, {\em
		Mitigating quantization effects on distributed sensor fusion: A least squares
		approach}, IEEE Trans. Signal Process., 66 (2018), pp.~3459--3474.
	
	\bibitem{zhu2015distributed}
	{\sc S.~Zhu, Y.~C. Soh, and L.~Xie}, {\em Distributed parameter estimation with quantized communication via running average}, IEEE Trans. Signal Process., 63 (2015), pp.~4634--4646.
	
	\bibitem{zorich}
	{\sc V.~A. Zorich}, {\em Mathematical Analysis I}, 2nd ed.,  Springer,  New York, 2015. 
	
\end{thebibliography}
\end{document}